\pgfplotsset{compat=newest}
	\definecolor{purpleDark}{RGB}{118, 4, 205}
	\definecolor{purpleLight}{RGB}{186, 102, 250}
	\definecolor{blueDark}{RGB}{52, 78, 243}
	\definecolor{blueLight}{RGB}{118, 135, 244}
	\definecolor{redDark}{RGB}{197, 34, 0}
	\definecolor{redLight}{RGB}{255, 91, 57}
	\definecolor{yellowDark}{RGB}{255, 183, 0}
	\definecolor{yellowLight}{RGB}{255, 204, 77}
	\definecolor{greenDark}{RGB}{0, 143, 53}
	\definecolor{greenLight}{RGB}{42, 189, 97}
	\colorlet{greenFaint}{green!10!white}
	\colorlet{redFaint}{red!10!white}
	\definecolor{redText}{RGB}{222, 2, 10}
	\definecolor{orangeText}{RGB}{245, 86, 0}
	\definecolor{greenText}{RGB}{20,125,50}
	\definecolor{blueText}{RGB}{0, 114, 190}
	\definecolor{purpleText}{RGB}{115, 38, 146}
	\definecolor{pinkText}{RGB}{255, 107, 250}
	\definecolor{changesColour}{RGB}{52, 78, 243}
	\definecolor{ballblue}{rgb}{0.13, 0.67, 0.8}
	\definecolor{buff}{rgb}{0.94, 0.86, 0.51}
	\definecolor{bronze}{rgb}{0.93,0.53,0.18}
	\definecolor{matlabCol1}{rgb}{0.0000,0.4470,0.7410}
	\definecolor{matlabCol2}{rgb}{0.8500,0.3250,0.0980}
	\definecolor{matlabCol3}{rgb}{0.9290,0.6940,0.1250}
	\definecolor{matlabCol4}{rgb}{0.4940,0.1840,0.5560}
	\definecolor{matlabCol5}{rgb}{0.4660,0.6740,0.1880}
	\definecolor{matlabCol6}{rgb}{0.3010,0.7450,0.9330}
	\definecolor{matlabCol7}{rgb}{0.6350,0.0780,0.1840}
	\definecolor{matlabCol8}{rgb}{0.0000,0.0000,0.0000}
	\newcommand{%
		\tikzsetnextfilename{}%
		\input{}%
	}[1]{%
		\tikzsetnextfilename{#1}%
		\input{#1}%
	}
	\newcommand{\convexpath}[2]{
		[   
		create hullcoords/.code={
			\global\edef\namelist{#1}
			\foreach [count=\counter] \nodename in \namelist {
				\global\edef\numberofnodes{\counter}
				\coordinate (hullcoord\counter) at (\nodename);
			}
			\coordinate (hullcoord0) at (hullcoord\numberofnodes);
			\pgfmathtruncatemacro\lastnumber{\numberofnodes+1}
			\coordinate (hullcoord\lastnumber) at (hullcoord1);
		},
		create hullcoords
		]
		($(hullcoord1)!#2!-90:(hullcoord0)$)
		\foreach [
		evaluate=\currentnode as \previousnode using \currentnode-1,
		evaluate=\currentnode as \nextnode using \currentnode+1
		] \currentnode in {1,...,\numberofnodes} {
			let \p1 = ($(hullcoord\currentnode) - (hullcoord\previousnode)$),
			\n1 = {atan2(\y1,\x1) + 90},
			\p2 = ($(hullcoord\nextnode) - (hullcoord\currentnode)$),
			\n2 = {atan2(\y2,\x2) + 90},
			\n{delta} = {Mod(\n2-\n1,360) - 360}
			in 
			{arc [start angle=\n1, delta angle=\n{delta}, radius=#2]}
			-- ($(hullcoord\nextnode)!#2!-90:(hullcoord\currentnode)$) 
		}
	}
	\newcommand{\autorefapp}[1]{\hyperref[#1]{Appendix~\ref*{#1}}}
	\newcommand\autorefMulti[1]{\@first@ref#1,@}
	\def\@throw@dot#1.#2@{#1}
	\def\@set@refname#1{
		\edef\@tmp{\getrefbykeydefault{#1}{anchor}{}}%
		\xdef\@tmp{\expandafter\@throw@dot\@tmp.@}%
		\ltx@IfUndefined{\@tmp autorefnameplural}%
			{\def\@refname{\@nameuse{\@tmp autorefname}s}}%
			{\def\@refname{\@nameuse{\@tmp autorefnameplural}}}%
	}
	\def\@first@ref#1,#2{%
		\ifx#2@\autoref{#1}\let\@nextref\@gobble
		\else%
			\@set@refname{#1}
			\@refname~\ref{#1}
			\let\@nextref\@next@ref
		\fi%
		\@nextref#2%
	}
	\def\@next@ref#1,#2{%
		\ifx#2@ and~\ref{#1}\let\@nextref\@gobble
		\else, \ref{#1}
		\fi%
		\@nextref#2%
	}
	\DeclareSIUnit{\pu}{pu}							
	\DeclareSIUnit{\VAR}{\volt\ampere{}R}			
	\newcommand{\addWithPreComma}[1]{%
		\if\relax #1\relax
		\else%
		,#1
		\fi%
	}
	\newcommand{\addWithPostComma}[1]{%
		\if\relax #1\relax
		\else%
		#1,
		\fi%
	}
	\newcommand{\addInParentheses}[1]{%
	\if\relax #1\relax
	\else%
		(#1)
	\fi%
	}
\newacro{PI}[PI]{proportional-integral}
\newacro{DDA}[DDA]{dynamic distributed averaging}
\newacro{DGU}[DGU]{distributed generation unit}
\newacro{EIP}[EIP]{equilibrium-independent passivity}
\newacro{OSEIP}[OS-EIP]{output strictly equilibrium-independent passivity}
\newacro{IFP}[IFP]{input-feedforward passive}
\newacro{OFP}[OFP]{output-feedback passive}
\newacro{IFOFP}[IF-OFP]{input-feedforward output-feedback passive}
\newacro{LMI}[LMI]{linear matrix inequality}
\newacro{ISS}[ISS]{input-to-state stability}
\newacro{ZSO}[ZSO]{zero-state observable}
\newacro{ZSD}[ZSD]{zero-state detectable}
\newtheorem{theorem}{Theorem}
\newtheorem{definition}[theorem]{Definition}
\newtheorem{proposition}[theorem]{Proposition}
\newtheorem{corollary}[theorem]{Corollary}
\newtheorem{remark}{Remark}
\newtheorem{objective}{Objective}
\newtheorem{assumption}{Assumption}
	\renewcommand{\vec}[1]{\bm{#1}}						
	\renewcommand{\matrix}[1]{\bm{#1}}					
	\newcommand{\oSP}[1]{{#1}^\star}						
	\newcommand{\oRef}[2]{{#1}_{\mathrm{Ref}\addWithPreComma{#2}}}	
	\newcommand{\oErr}[1]{\tilde{#1}}					
	\newcommand{\oEq}[1]{\hat{#1}}						
	\newcommand{\oModify}[1]{\bar{#1}}					
	\newcommand{\Reals}{\ensuremath{\mathbb{R}}}		
	\newcommand{\RealsPos}{\ensuremath{\mathbb{R}_{\!\!\:>\!\!\;0}}}	
	\newcommand{\RealsPosDef}{\ensuremath{\mathbb{R}_{\!\!\:\ge\!\!\;0}}}	
	\newcommand{\classC}[1]{\ensuremath{C^{#1}}}		
	\newcommand{\vOneCol}[1][]{\vec{\mathds{1}}_{#1}}	
	\newcommand{\Ident}[1][]{\matrix{I}_{#1}}			
	\newcommand{\Transpose}{T}							
	\newcommand{\posDef}{\ensuremath{\succ}}			
	\newcommand{\posSemiDef}{\ensuremath{\succcurlyeq}}	
	\newcommand{\negSemiDef}{\ensuremath{\preccurlyeq}}	
	\newcommand{\setUnderBrace}[2]{\underset{\textstyle{}#1}{\underbrace{#2}}}
	\newcommand{\setUnderBraceMatrix}[2]{\,\,\underset{\textstyle{}#1}{\underbrace{\!\!#2\!\!}}\,\,}
	\DeclareMathOperator{\Diag}{diag}					
	\newlength\mytemplena
	\newlength\mytemplenb
	\DeclareDocumentCommand\myalignalign{sm}%
	{%
		\settowidth{\mytemplena}{$\displaystyle #2$}%
		\setlength\mytemplenb{\widthof{$\displaystyle=$}/2}%
		\hskip-\mytemplena%
		\hskip\IfBooleanTF#1{-\mytemplenb}{+\mytemplenb}%
	}
\newcommand{\varBase}[2]{%
	\expandafter\newcommand\csname s#1dot\endcsname[1][]{\dot{#2}_{##1}}%
	\expandafter\newcommand\csname v#1\endcsname[1][]{\vec{#2}_{##1}}%
	\expandafter\newcommand\csname v#1dot\endcsname[1][]{\dot{\vec{#2}}_{##1}}%
	\expandafter\newcommand\csname v#1T\endcsname[1][]{\vec{#2}_{##1}^\Transpose}%
}
\newcommand{\varBaseError}[2]{%
	\expandafter\newcommand\csname s#1e\endcsname[1][]{\oErr{#2}_{##1}}%
	\expandafter\newcommand\csname s#1edot\endcsname[1][]{\dot{\oErr{#2}}_{##1}}%
	\expandafter\newcommand\csname v#1e\endcsname[1][]{\oErr{\vec{#2}}_{##1}}%
	\expandafter\newcommand\csname v#1edot\endcsname[1][]{\dot{\oErr{\vec{#2}}}_{##1}}%
	\expandafter\newcommand\csname v#1eT\endcsname[1][]{\oErr{\vec{#2}}_{##1}^\Transpose}%
	\expandafter\newcommand\csname s#1eq\endcsname[1][]{\oEq{#2}_{##1}}%
	\expandafter\newcommand\csname s#1eqdot\endcsname[1][]{\dot{\oEq{#2}}_{##1}}%
	\expandafter\newcommand\csname v#1eq\endcsname[1][]{\oEq{\vec{#2}}_{##1}}%
}
\newcommand{\varModified}[3]{%
	\expandafter\newcommand\csname s#1#2dot\endcsname[1][]{\csname s#1dot\endcsname[#3\addWithPreComma{##1}]}%
	\expandafter\newcommand\csname v#1#2\endcsname[1][]{\csname v#1\endcsname[#3\addWithPreComma{##1}]}%
	\expandafter\newcommand\csname v#1#2dot\endcsname[1][]{\csname v#1dot\endcsname[#3\addWithPreComma{##1}]}%
	\expandafter\newcommand\csname v#1#2T\endcsname[1][]{\csname v#1T\endcsname[#3\addWithPreComma{##1}]}%
}
\newcommand{\varModifiedError}[3]{%
	\expandafter\newcommand\csname s#1#2e\endcsname[1][]{\csname s#1e\endcsname[#3\addWithPreComma{##1}]}%
	\expandafter\newcommand\csname s#1#2edot\endcsname[1][]{\csname s#1edot\endcsname[#3\addWithPreComma{##1}]}%
	\expandafter\newcommand\csname v#1#2e\endcsname[1][]{\csname v#1e\endcsname[#3\addWithPreComma{##1}]}%
	\expandafter\newcommand\csname v#1#2edot\endcsname[1][]{\csname v#1edot\endcsname[#3\addWithPreComma{##1}]}%
	\expandafter\newcommand\csname v#1#2eT\endcsname[1][]{\csname v#1eT\endcsname[#3\addWithPreComma{##1}]}%
	\expandafter\newcommand\csname s#1#2eq\endcsname[1][]{\csname s#1eq\endcsname[#3\addWithPreComma{##1}]}%
	\expandafter\newcommand\csname s#1#2eqdot\endcsname[1][]{\csname s#1eqdot\endcsname[#3\addWithPreComma{##1}]}%
	\expandafter\newcommand\csname v#1#2eq\endcsname[1][]{\csname v#1eq\endcsname[#3\addWithPreComma{##1}]}%
}
\newcommand{\varModifiedExp}[3]{%
	\expandafter\newcommand\csname s#1#2dot\endcsname[1][]{\csname s#1dot\endcsname[##1]^{#3}}%
	\expandafter\newcommand\csname v#1#2\endcsname[1][]{\csname v#1\endcsname[##1]^{#3}}%
	\expandafter\newcommand\csname v#1#2dot\endcsname[1][]{\csname v#1dot\endcsname[##1]^{#3}}%
	\expandafter\newcommand\csname v#1#2T\endcsname[1][]{\csname v#1T\endcsname[##1]^{#3}}%
}
\newcommand{\varModifiedExpError}[3]{%
	\expandafter\newcommand\csname s#1#2e\endcsname[1][]{\csname s#1e\endcsname[##1]^{#3}}%
	\expandafter\newcommand\csname s#1#2edot\endcsname[1][]{\csname s#1edot\endcsname[##1]^{#3}}%
	\expandafter\newcommand\csname v#1#2e\endcsname[1][]{\csname v#1e\endcsname[##1]^{#3}}%
	\expandafter\newcommand\csname v#1#2edot\endcsname[1][]{\csname v#1edot\endcsname[##1]^{#3}}%
	\expandafter\newcommand\csname v#1#2eT\endcsname[1][]{\csname v#1eT\endcsname[##1]^{#3}]}%
	\expandafter\newcommand\csname s#1#2eq\endcsname[1][]{\csname s#1eq\endcsname[##1]^{#3}}%
	\expandafter\newcommand\csname s#1#2eqdot\endcsname[1][]{\csname s#1eqdot\endcsname[##1]^{#3}}%
	\expandafter\newcommand\csname v#1#2eq\endcsname[1][]{\csname v#1eq\endcsname[##1]^{#3}}%
}
	\newcommand{\denoteSetNode}{\mathrm{s}}
	\newcommand{\denoteFollowNode}{\mathrm{f}}
	\newcommand{\denoteTx}{\mathrm{t}}
	\newcommand{\denoteVSC}{\mathrm{V}}
	\newcommand{\denoteLoad}{\mathrm{L}}
	\newcommand{\denotePassive}{\mathrm{P}}
	\newcommand{\denoteExternal}{\mathrm{ex}}
	\newcommand{\denoteInternal}{\mathrm{int}}
	\newcommand{\denotePort}{\mathrm{p}}
	\newcommand{\denoteFirst}{\mathrm{\alpha}}
	\newcommand{\denoteCapacitor}{\mathrm{C}}
	\newcommand{\denoteCluster}{\mathrm{c}}
	\newcommand{\denoteReduced}{\mathrm{r}}
	\newcommand{\denoteBus}{\mathrm{b}}
	\newcommand{\denoteNew}{\mathrm{n}}	
	\newcommand{\denoteFilterOut}{\mathrm{o}}
	\newcommand{\denoteSource}{\mathrm{in}}
	\newcommand{\denoteSink}{\mathrm{out}}
	\newcommand{\denoteImage}{\mathrm{I}}
	\newcommand{\denoteDGU}{\mathrm{d}}
	\newcommand{\denoteCrit}{\mathrm{crit}}
	\newcommand{\denoteEquivalent}{\mathrm{eq}}
\newcommand{\salpha}[1][]{\alpha_{#1}}
\newcommand{\salphaMod}[1][]{\oModify{\alpha}_{#1}}
\newcommand{\malpha}[1][]{\matrix{\alpha}_{#1}}
\newcommand{\malphaMod}[1][]{\oModify{\matrix{\alpha}}_{#1}}
\newcommand{\mA}[1][]{\matrix{A}_{#1}}
\newcommand{\mAT}[1][]{\matrix{A}_{#1}^\Transpose}
\newcommand{\mAs}[1][]{\mA[\denoteSetNode\addWithPreComma{#1}]}
\newcommand{\mAsT}[1][]{\mAT[\denoteSetNode\addWithPreComma{#1}]}
\newcommand{\mAr}[1][]{\mA[\denoteReduced\addWithPreComma{#1}]}
\newcommand{\mArT}[1][]{\mAT[\denoteReduced\addWithPreComma{#1}]}
\newcommand{\mAb}[1][]{\mA[\denoteBus\addWithPreComma{#1}]}
\newcommand{\mAbT}[1][]{\mAT[\denoteBus\addWithPreComma{#1}]}
\newcommand{\sbeta}[1][]{\beta_{#1}}
\newcommand{\sbetaMod}[1][]{\oModify{\beta}_{#1}}
\newcommand{\mbetaMod}[1][]{\oModify{\matrix{\beta}}_{#1}}
\newcommand{\vb}[1][]{\vec{b}_{#1}}
\newcommand{\vbT}[1][]{\vec{b}_{#1}^\Transpose}
\newcommand{\vbs}[1][]{\vb[\denoteSetNode\addWithPreComma{#1}]}
\newcommand{\vbsT}[1][]{\vbT[\denoteSetNode\addWithPreComma{#1}]}
\newcommand{\mB}[1][]{\matrix{B}_{#1}}
\newcommand{\mBT}[1][]{\matrix{B}_{#1}^\Transpose}
\newcommand{\mBr}[1][]{\mB[\denoteReduced\addWithPreComma{#1}]}
\newcommand{\mBrT}[1][]{\mBT[\denoteReduced\addWithPreComma{#1}]}
\newcommand{\sC}[1][]{C_{#1}}
\newcommand{\sCTx}[1][]{\sC[\denoteTx\addWithPreComma{#1}]}
\newcommand{\sCeq}[1][]{\sC[\denoteEquivalent\addWithPreComma{#1}]}
\newcommand{\mC}[1][]{\matrix{C}_{#1}}
\newcommand{\mCF}[1][]{\mC[\denoteFollowNode\addWithPreComma{#1}]}
\newcommand{\mCS}[1][]{\mC[\denoteSetNode\addWithPreComma{#1}]}
\newcommand{\sdelta}[1][]{\delta_{#1}}
\newcommand{\mdelta}[1][]{\matrix{\delta}_{#1}}
\newcommand{\mdeltaS}[1][]{\mdelta[\denoteSetNode\addWithPreComma{#1}]}
\newcommand{\mdeltaF}[1][]{\mdelta[\denoteFollowNode\addWithPreComma{#1}]}
\newcommand{\se}[1][]{e_{#1}}
\newcommand{\mE}[1][]{\matrix{E}_{#1}}
\newcommand{\mET}[1][]{\matrix{E}_{#1}^\Transpose}
\newcommand{\mEF}[1][]{\mE[\denoteFollowNode\addWithPreComma{#1}]}
\newcommand{\mEFT}[1][]{\mET[\denoteFollowNode\addWithPreComma{#1}]}
\newcommand{\mES}[1][]{\mE[\denoteSetNode\addWithPreComma{#1}]}
\newcommand{\mEST}[1][]{\mET[\denoteSetNode\addWithPreComma{#1}]}
\newcommand{\mESetT}[1][]{\mE[\setT\addWithPreComma{#1}]}
\newcommand{\mESetTT}[1][]{\mET[\setT\addWithPreComma{#1}]}
\newcommand{\mEn}[1][]{\mE[\denoteNew\addWithPreComma{#1}]}
\newcommand{\mEnT}[1][]{\mET[\denoteNew\addWithPreComma{#1}]}
\newcommand{\setE}[1][]{\mathcal{E}_{#1}}
\newcommand{\setESource}[1][]{\setE[\denoteSource\addWithPreComma{#1}]}
\newcommand{\setESink}[1][]{\setE[\denoteSink\addWithPreComma{#1}]}
\newcommand{\vf}[1][]{\vec{f}_{#1}}
\newcommand{\setF}[1][]{\mathcal{F}_{#1}}
\newcommand{\graphG}[1][]{\mathcal{G}_{#1}}
\newcommand{\vh}[1][]{\vec{h}_{#1}}
\newcommand{\sH}[1][]{H_{#1}}
\newcommand{\sHdot}[1][]{\dot{H}_{#1}}
\newcommand{\sHL}[1][]{\sH[\denoteLoad\addWithPreComma{#1}]}
\newcommand{\sHr}[1][]{\sH[\denoteReduced\addWithPreComma{#1}]}
\newcommand{\sHrdot}[1][]{\sHdot[\denoteReduced\addWithPreComma{#1}]}
\newcommand{\sHs}[1][]{\sH[\denoteSetNode\addWithPreComma{#1}]}
\newcommand{\sHsdot}[1][]{\sHdot[\denoteSetNode\addWithPreComma{#1}]}
\newcommand{\sHTx}[1][]{\sH[\denoteTx\addWithPreComma{#1}]}
\newcommand{\sHTxdot}[1][]{\sHdot[\denoteTx\addWithPreComma{#1}]}
\newcommand{\sii}[1][]{i_{#1}}
\newcommand{\siC}[1][]{\sii[\denoteCapacitor\addWithPreComma{#1}]}
\newcommand{\sio}[1][]{\sii[\denoteFilterOut\addWithPreComma{#1}]}
\newcommand{\siInt}[1][]{\sii[\denoteInternal\addWithPreComma{#1}]}
\newcommand{\siEx}[1][]{\sii[\denoteExternal\addWithPreComma{#1}]}
\newcommand{\siTx}[1][]{\sii[\denoteTx\addWithPreComma{#1}]}
\newcommand{\sia}[1][]{\sii[\denoteFirst\addWithPreComma{#1}]}
\newcommand{\sI}[1][]{I_{#1}}
\newcommand{\siL}[1][]{\sii[\denoteLoad\addWithPreComma{#1}]}
\newcommand{\siLv}[1][]{\siL[#1](\sv[#1])}
\newcommand{\siLP}[1][]{\siL[#1]^{\denotePassive}}
\newcommand{\siLPv}[1][]{\siLP[#1](\sv[#1])}
\newcommand{\siLPeve}[1][]{\siLPe[#1](\sve[#1])}
\newcommand{\siLPveq}[1][]{\siLP[#1](\sveq[#1])}
\newcommand{\IdentS}[1][]{\Ident[|\setS|]}
\newcommand{\IdentF}[1][]{\Ident[|\setF|]}
\newcommand{\skappa}[1][]{\kappa_{#1}}
\newcommand{\seig}[1][]{\lambda_{#1}}
\newcommand{\seigLapTwo}[1][]{\seig[2](\mLap[#1])}
\newcommand{\mEig}[1][]{\Lambda_{#1}}
\newcommand{\sL}[1][]{L_{#1}}
\newcommand{\sLTx}[1][]{\sL[\denoteTx\addWithPreComma{#1}]}
\newcommand{\mL}[1][]{\matrix{L}_{#1}}
\newcommand{\mLTx}[1][]{\mL[\denoteTx\addWithPreComma{#1}]}
\newcommand{\mLSetT}[1][]{\mL[\setT\addWithPreComma{#1}]}
\newcommand{\mLF}[1][]{\mL[\denoteFollowNode\addWithPreComma{#1}]}
\newcommand{\mLS}[1][]{\mL[\denoteSetNode\addWithPreComma{#1}]}
\newcommand{\mLap}[1][]{\matrix{\mathcal{L}}_{#1}}
\newcommand{\mLapT}[1][]{\mLap[#1]^\Transpose}
\newcommand{\setM}[1][]{\mathcal{M}_{#1}}
\newcommand{\setMF}[1][]{\setM[\denoteFollowNode]}
\newcommand{\setMS}[1][]{\setM[\denoteSetNode]}
\newcommand{\setN}[1][]{\mathcal{N}_{#1}}
\newcommand{\spsi}[1][]{\psi_{#1}}
\newcommand{\spsiMod}[1][]{\oModify{\psi}_{#1}}
\newcommand{\spsiV}[1][]{\spsi[\sv\addWithPreComma{#1}]}
\newcommand{\spsiVMod}[1][]{\spsiMod[\sv\addWithPreComma{#1}]}
\newcommand{\spsiI}[1][]{\spsi[\sii\addWithPreComma{#1}]}
\newcommand{\spsiIMod}[1][]{\spsiMod[\sii\addWithPreComma{#1}]}
\newcommand{\spsiInt}[1][]{}
\newcommand{\spsiIntSqr}[1][]{}
\newcommand{\mpsi}[1][]{\matrix{\psi}_{#1}}
\newcommand{\mpsiMod}[1][]{\oModify{\matrix{\psi}}_{#1}}
\newcommand{\mpsiVMod}[1][]{\mpsiMod[\sv\addWithPreComma{#1}]}
\newcommand{\mpsiIMod}[1][]{\mpsiMod[\sii\addWithPreComma{#1}]}
\newcommand{\mpsiInt}[1][]{\mpsi[\sInt\addWithPreComma{#1}]}
\newcommand{\sP}[1][]{P_{#1}}
\newcommand{\spp}[1][]{p_{#1}}
\newcommand{\sps}[1][]{\spp[#1]}
\newcommand{\mP}[1][]{\matrix{P}_{#1}}
\newcommand{\mPT}[1][]{\mP[#1]^\Transpose}
\newcommand{\mPs}[1][]{\mP[\denoteSetNode\addWithPreComma{#1}]}
\newcommand{\mPr}[1][]{\mP[\denoteReduced\addWithPreComma{#1}]}
\newcommand{\mPrT}[1][]{\mPr[#1]^\Transpose}
\newcommand{\mQ}[1][]{\matrix{Q}_{#1}}
\newcommand{\mQs}[1][]{\mQ[\denoteSetNode\addWithPreComma{#1}]}
\newcommand{\mQr}[1][]{\mQ[\denoteReduced\addWithPreComma{#1}]}
\newcommand{\siOut}[1][]{\rho_{#1}}
\newcommand{\sR}[1][]{R_{#1}}
\newcommand{\sRTx}[1][]{\sR[\denoteTx\addWithPreComma{#1}]}
\newcommand{\mR}[1][]{\matrix{R}_{#1}}
\newcommand{\mRinv}[1][]{\mR[#1]^{-1}}
\newcommand{\mRTx}[1][]{\mR[\denoteTx\addWithPreComma{#1}]}
\newcommand{\mRTxinv}[1][]{\mRinv[\denoteTx\addWithPreComma{#1}]}
\newcommand{\mRTxn}[1][]{\mR[\denoteTx,\denoteNew\addWithPreComma{#1}]}
\newcommand{\mRTxninv}[1][]{\mRinv[\denoteTx,\denoteNew\addWithPreComma{#1}]}
\newcommand{\mRSetT}[1][]{\mR[\setT\addWithPreComma{#1}]}
\newcommand{\mRF}[1][]{\mR[\denoteFollowNode\addWithPreComma{#1}]}
\newcommand{\mRS}[1][]{\mR[\denoteSetNode\addWithPreComma{#1}]}
\newcommand{\somega}[1][]{\omega_{#1}}
\newcommand{\somegaMod}[1][]{\oModify{\omega}_{#1}}
\newcommand{\somegaV}[1][]{\somega[\sv\addWithPreComma{#1}]}
\newcommand{\somegaVMod}[1][]{\somegaMod[\sv\addWithPreComma{#1}]}
\newcommand{\momegaMod}[1][]{\oModify{\matrix{\omega}}_{#1}}
\newcommand{\momegaVMod}[1][]{\momegaMod[\sv\addWithPreComma{#1}]}
\newcommand{\setS}[1][]{\mathcal{S}_{#1}}
\newcommand{\setT}[1][]{\mathcal{T}_{#1}}
\newcommand{\setTSource}[1][]{\setT[\denoteSource\addWithPreComma{#1}]}
\newcommand{\setTSink}[1][]{\setT[\denoteSink\addWithPreComma{#1}]}
\newcommand{\stau}[1][]{\tau_{#1}}
\newcommand{\mtau}[1][]{\matrix{\tau}_{#1}}
\newcommand{\su}[1][]{u_{#1}}
\newcommand{\sus}[1][]{\su[\denoteSetNode\addWithPreComma{#1}]}
\newcommand{\sv}[1][]{v_{#1}}
\newcommand{\svRef}[1][]{\oRef{v}{#1}}
\newcommand{\svCrit}{\sv[\denoteCrit]}
\newcommand{\svSP}[1][]{\oSP{v}_{#1}}
\newcommand{\vvSP}[1][]{\oSP{\vec{v}}_{#1}}
\newcommand{\vvSSP}[1][]{\vvSP[\denoteSetNode\addWithPreComma{#1}]}
\newcommand{\svVSC}[1][]{\sv[\denoteVSC\addWithPreComma{#1}]}
\newcommand{\mV}[1][]{\matrix{V}_{#1}}
\newcommand{\mVinv}[1][]{\mV[#1]^{-1}}
\newcommand{\mVI}[1][]{\mV[\denoteImage\addWithPreComma{#1}]}
\newcommand{\sw}[1][]{w_{#1}}
\newcommand{\vw}[1][]{\vec{w}_{#1}}
\newcommand{\vwT}[1][]{\vw[#1]^\Transpose}
\newcommand{\sx}[1][]{x_{#1}}
\newcommand{\sxSP}[1][]{\oSP{x}_{#1}}
\newcommand{\setXeq}[1][]{\oEq{\mathcal{X}}_{#1}}
\newcommand{\sInt}[1][]{\xi_{#1}}
\newcommand{\sY}[1][]{Y_{#1}}
\newcommand{\mY}[1][]{\matrix{Y}_{#1}}
\newcommand{\mYF}[1][]{\mY[\denoteFollowNode\addWithPreComma{#1}]}
\newcommand{\mYS}[1][]{\mY[\denoteSetNode\addWithPreComma{#1}]}
\newcommand{\sZinv}[1][]{Z_{#1}^{-1}}
\newcommand{\sZCritinv}[1][]{\sZinv[\denoteCrit\addWithPreComma{#1}]}
\def\BibTeX{{\rm B\kern-.05em{\sc i\kern-.025em b}\kern-.08em
		T\kern-.1667em\lower.7ex\hbox{E}\kern-.125emX}}
\begin{document}
	
	\title{Passivation of Clustered DC Microgrids with Non-Monotone Loads}
	\author{Albertus J.\ Malan, Joel Ferguson, Michele Cucuzzella, Jacquelien M.\ A.\ Scherpen, and S{\"o}ren Hohmann
		\thanks{
			(\emph{Corresponding author: A.\ J.\ Malan.})}
		\thanks{A.\ J.\ Malan, and S.\ Hohmann are with the Institute of Control Systems (IRS), Karlsruhe Institute of Technology (KIT), 76131, Karlsruhe, Germany. (E-mails: albertus.malan@kit.edu, soeren.hohmann@kit.edu).}
		\thanks{J.\ Ferguson is with the School of Engineering, The University of Newcastle, Australia (E-mail: joel.ferguson@newcastle.edu.au).}
		\thanks{M.\ Cucuzzella, and J.\ M.\ A.\ Scherpen are with the Faculty of Science and Engineering, University of Groningen, the Netherlands (E-mails: m.cucuzzella@rug.nl, j.m.a.scherpen@rug.nl).}
	}
	
	
	\maketitle
	
	\begin{abstract}

%
In this paper, we consider the problem of voltage stability in DC networks containing uncertain loads with non-monotone incremental impedances and where the steady-state power availability is restricted to a subset of the buses in the network.
We propose controllers for powered buses that guarantee voltage regulation and output strictly equilibrium independent passivity (OS-EIP) of the controlled buses, while buses without power are equipped with controllers that dampen their transient behaviour. The OS-EIP of a cluster containing both bus types is verified through a linear matrix inequality (LMI) condition, and the asymptotic stability of the overall microgrid with uncertain, non-monotone loads is ensured by interconnecting the OS-EIP clusters.
By further employing singular perturbation theory, we show that the OS-EIP property of the clusters is robust against certain network parameter and topology changes.


%

%

%

%

\end{abstract}

	\begin{IEEEkeywords}
		DC distribution systems, decentralized control, voltage control, power system stability, microgrids
	\end{IEEEkeywords}
	
	\section{Introduction} \label{sec:Introduction}
\acresetall
%
%
%
\IEEEPARstart{C}{onstant} power loads present one of the most significant challenges in achieving stable voltage regulation in DC networks (also called microgrids \cite{Dragicevic2016,Meng2017}). Such load characteristics, where the power demand is independent of the voltage level at the load, are typically observed in power electronics or in complex electrical systems (see \cite{Singh2017,ALNussairi2017}). They also form an integral part of the widely used constant impedance (Z), constant current (I), and constant power (P) load models. If not properly managed, however, the negative incremental impedance exhibited by P loads can result in voltage oscillations or even voltage collapse in the network \cite{ALNussairi2017, Singh2017, Jusoh2004, Jeeninga2023}.
A further complication is presented by the often sparse availability of power in the DC network, which limits where in the network the bus voltages can be regulated.
Ensuring a stable network operation thus requires overcoming such negative incremental impedances which are only connected indirectly via dynamical transmission lines to the voltage regulating buses.
Motivated by this problem, we propose voltage controllers for buses with and without power available in the steady state which can ensure network stability, even for uncertain loads with negative incremental impedance located at arbitrary locations in the network.

\paragraph*{Literature Review}
Achieving a stable network operation by regulating the bus voltages has received significant attention in the literature (see \cite{Dragicevic2016,Meng2017,AlIsmail2021,Modu2023} and the sources therein). These approaches include droop controllers \cite{Zhao2015}; plug-and-play capable controllers \cite{Tucci2016,Sadabadi2018,Tucci2018stab}; energy-based controllers \cite{Kosaraju2021,Cucuzzella2023}; passivity based controllers \cite{Nahata2020,Strehle2020dc}; controllers employing linearisation \cite{Perez2020, Machado2021}; and general integral-augmented state-feedback controllers \cite{Ferguson2021,Silani2022,Ferguson2023}.

Nevertheless, many publications only consider loads comprising Z and/or I components \cite{Tucci2016,Tucci2018stab,Sadabadi2018,Perez2020,Kosaraju2021}; or restrict the load parameters to ensure that the incremental load impedances are monotone increasing functions \cite{Nahata2020,Strehle2020dc,Ferguson2021}. More recent results in \cite{Machado2021,Cucuzzella2023,Ferguson2023} allow loads with non-monotone incremental impedances. In \cite{Machado2021}, voltage regulation is achieved through an input-output linearisation, but an estimate of the load parameters are required. In \cite{Cucuzzella2023}, a Krasovskii-type storage function is used to design the controller which requires a possibly noisy voltage derivative. In \cite{Ferguson2023}, stability and an \ac{ISS} property proven analytically for non-monotone ZIP loads.

These approaches typically only ensure stability by restricting the network. This includes permitting only a single bus \cite{Perez2020,Machado2021}; assuming there is power available for the voltage regulation at each bus \cite{Tucci2016,Sadabadi2018,Nahata2020,Strehle2020dc,Ferguson2021,Ferguson2023}; or by assuming a Kron-reduced network \cite{Tucci2018stab,Silani2022}. Yet while the Kron reduction (see \cite{Doerfler2013}) allows Z and I loads to be shifted to buses where the voltage can be regulated, this procedure can fail for nonlinear loads or loads with negative parameters \cite{Chen2021}. The remaining cases which allow loads at buses without voltage regulation require such loads to be have monotone incremental impedances \cite{Kosaraju2021}.

\paragraph*{Main Contributions}
In this paper, we aim to achieve voltage stability in a DC microgrid containing buses with and without power available at steady state and containing uncertain loads with non-monotone incremental impedances.
Inspired by \cite{Cucuzzella2023}, where the bus voltage derivative is used to regulate a bus with non-monotone loads, we propose a similar controller that passivates a bus where steady-state power is available. We improve upon \cite{Cucuzzella2023}, however, by replacing the voltage derivative with the easily measured current of the filter capacitor. Furthermore, we show that buses with non-monotone loads and where no steady-state power is available cannot be passivated. Instead, we design a controller which dampens the transients caused by non-monotone loads at such buses. After combining the closed-loop buses with and without steady-state power into a cluster, we verify the passivity of the entire cluster.

The contributions in this paper thus comprise:
\begin{enumerate}
	\item A voltage setting controller for buses where steady-state power is available along with analytical conditions ensuring \ac{OSEIP} for these buses with uncertain, non-monotone loads.
	\item A voltage following controller for buses without steady-state power, where the controller dampens the bus transients without passivating the bus.
	\item A \ac{LMI} which can be used to verify the \ac{OSEIP} of a cluster containing buses with and without steady-state power.
	\item The asymptotic stability of a DC microgrid comprising an arbitrary number of interconnected \ac{OSEIP} clusters.
	\item A reduced-order \ac{LMI} to verify the cluster \ac{OSEIP} obtained by applying singular perturbation theory.
	\item An investigation into the robustness of the cluster passivity against parameter or topology changes.
\end{enumerate}

\paragraph*{Paper Organisation}
The introduction concludes with some notation and preliminaries. In \autoref{sec:Problem}, we introduce the microgrid model and formulate objectives for the control problem using the model notation. Thereafter, we provide the bus controllers in \autoref{sec:Control} and investigate the properties of the buses. \autoref{sec:Clustering} then follows with an investigation of the cluster \ac{OSEIP} and microgrid stability. The robustness of the results are investigated in \autoref{sec:Analysis} and demonstrated via simulation in \autoref{sec:Simulation}. Concluding remarks are provided in \autoref{sec:Conclusion}.

\paragraph*{Notation and Preliminaries}
Define as a vector $\vec{a} = (a_k)$ and a matrix $\matrix{A} = (a_{kl})$. The vector $\vOneCol[k]$ is a $k$-dimensional vector of ones and $\Ident[k]$ is the identity matrix of dimension $k$.
The determinant of a matrix $\matrix{A}$ is $\det(\matrix{A})$ and $\Diag[\cdot]$ creates a (block-)diagonal matrix from the supplied vector (or matrices).
Let $\mA \posDef 0$ ($\posSemiDef 0$) denote a symmetric positive (semi-)definite matrix.
The set of positive real numbers is defined by $\RealsPos$.
For a variable $\sx$, we denote its unknown steady state as $\sxeq$, its error state as $\sxe \coloneqq \sx - \sxeq$, and a desired setpoint as $\sxSP$. Furthermore, $\sx \equiv 0$ indicates that $\sx$ and all its derivates are zero for all $t \ge 0$.
Whenever clear from context, we omit the time dependence of variables.


Consider a nonlinear system
\begin{equation} \label{eq:Prelim:NL_System}
	\left\{\begin{aligned}
		\vxdot &= \vf(\vx,\vu), \\
		\vy &= \vh(\vx),
	\end{aligned}\right.
\end{equation}
where $\vx \in \Reals^{n}$, $\vu \in \Reals^{p}$, and $\vy \in \Reals^{p}$, and where $\vf\colon\Reals^{n}\times\Reals^{p}\rightarrow\Reals^{n}$ and $\vh\colon\Reals^{n}\rightarrow\Reals^{p}$ are class \classC{1} functions.
\begin{definition}[Dissipative system, See \cite{vdSchaft2017,Arcak2006}] \label{def:Prelim:disspativity}
	A system \eqref{eq:Prelim:NL_System} with a class \classC{1} storage function $\sH \colon \Reals^n \times \Reals^{p} \rightarrow \RealsPosDef$, $\sH(\vec{0}) = 0$, is dissipative w.r.t.\ a supply rate $\sw(\vu,\vy)$ if $\sHdot \le \sw(\vu,\vy)$. 
\end{definition}
%
%
%
\begin{definition}[{\Ac{ZSO}, \Ac{ZSD} \cite[p.~46f]{vdSchaft2017}}] \label{def:Passive:ZSO}
	A system \eqref{eq:Prelim:NL_System} is \ac{ZSO} if $\vu \equiv \vec{0}$ and $\vy \equiv \vec{0}$ implies $\vx \equiv \vec{0}$ and \ac{ZSD} if $\vu \equiv \vec{0}$ and $\vy \equiv \vec{0}$ implies $\lim_{t\to\infty} \vx = \vec{0}$.
\end{definition}

%
For cases where the desired equilibrium of a system is not at the origin but at some possibly unknown constant value, the shifted passivity \cite[p.~96]{vdSchaft2017} or \ac{EIP} \cite{Hines2011} of a system is investigated. This requires that an equilibrium exists, i.e.\ there is a unique input $\vueq \in \Reals^m$ for every equilibrium $\vxeq \in \setXeq \subset \Reals^n$ such that \eqref{eq:Prelim:NL_System} verifies $\vf(\vxeq,\vueq) = \vec{0}$ and $\vyeq = \vh(\vxeq,\vueq)$ \cite[p.~24]{Arcak2016}.
\begin{definition}[{\Ac{EIP}}] \label{def:Passive:EIP}
	A system \eqref{eq:Prelim:NL_System} with a class \classC{1} storage function $\sH(\vx, \vxeq)$, $\cramped{\sH \colon \Reals^n \times \setXeq \rightarrow \RealsPosDef}$, with $\sH(\vxeq, \vxeq) = 0$, that is dissipative w.r.t.\ $\vueT\vye - \siOut\vyeT\vye$ for all $\cramped{(\vx, \vxeq, \vu)} \in \Reals^n \times \setXeq \times \Reals^p$, is \ac{EIP} if $\siOut \ge 0$ and \ac{OSEIP} if $\siOut > 0$.
%
\end{definition}
%
%

	\section{Modelling and Problem Description} \label{sec:Problem}
\begin{figure}[!t]
	\centering
	\resizebox{\columnwidth}{!}{\input{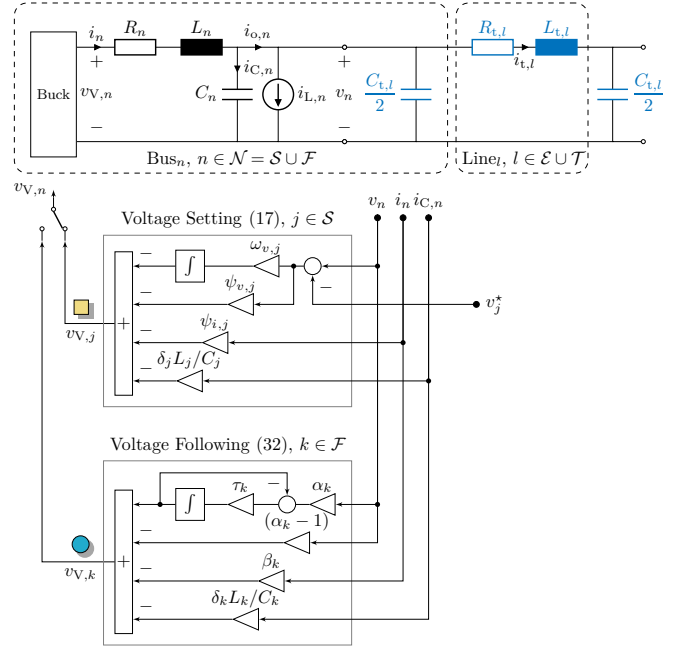}}
	\caption{A bus comprising a DC-DC buck converter, an LC filter and a nonlinear load, which connects to a $\pi$-model transmission line ({\color{matlabCol1}blue}). Each bus is equipped with a voltage setting controller ($j \in \setS$) or a voltage following controller ($k \in \setF$).}
	\label{fig:Problem:DC_Elements}
\end{figure}
\begin{figure}[!t]
	\centering
	\resizebox{0.75\columnwidth}{!}{\begin{tikzpicture}
	\def\nodeXdist{1.5cm}
	\def\nodeYdist{1.15cm}
	\def\nodeXshift{\nodeXdist*0.75}
	\def\boundR{0.65cm}
	
	\def\Nodes{1,2,3,4,5,6,7,8,9,10,11,12,13,14,15,16,17,18,19,20,21}
	\def\NodesVSet{3/left,8/above left,14/above,21/right}
	\def\NodesVFollow{1/below,2/left,4/below,5/below,6/left,7/above,9/left,10/left,11/above,12/above left,13/above right,15/above,16/above,17/above,18/right,19/above,20/right}
	
	\def\InternalEdges{1/2, 2/3, 3/4, 3/5, 6/7, 7/8, 8/9, 9/10, 8/11, 10/12, 11/12, 13/14, 14/15, 14/16, 15/16, 17/18, 17/20, 18/19, 18/21, 19/20, 20/21}
	\def\ExternalEdges{1/7/-45/225, 4/15/30/180, 8/13/60/300, 12/20/0/240, 16/18/0/120} 
	
	\tikzstyle{nodeVSet}     		= [draw, rectangle, minimum width = 8pt, minimum height = 8pt, drop shadow={shadow scale=1.1,fill=black!35}, fill = buff]
	\tikzstyle{nodeVsetSmall}		= [draw, rectangle, minimum width = 4pt, minimum height = 4pt, drop shadow={shadow scale=1.1,fill=black!35}, fill = buff]
	\tikzstyle{nodeVFollow}  		= [draw, circle, drop shadow={shadow scale=1.1,fill=black!35}, radius=0.2pt, fill = ballblue]
	\tikzstyle{nodeVFollowSmall}	= [draw, circle, drop shadow={shadow scale=1.1,fill=black!35}, radius=0.1pt, fill = ballblue]
	
	
	\tikzstyle{lineInternal}		= [-, decoration={markings, mark=at position 0.55 with {\arrow{Triangle[scale=0.85]}}}, postaction={decorate}]
	\tikzstyle{lineExternal}		= [-,thick, double, decoration={markings, mark=at position 0.55 with {\arrow{Triangle[scale=0.6]}}}, postaction={decorate}]
	\tikzstyle{backgroundCluster}	= [fill=black!10, draw, dashed]

	
	\coordinate(cNode1);
	\path (cNode1) ++(-\nodeXshift,\nodeYdist*0.5) coordinate (cNode2) ++(0,\nodeYdist) coordinate (cNode3) ++(\nodeXshift,\nodeYdist*0.5) coordinate (cNode4) ++(0,-\nodeYdist) coordinate (cNode5);
	
	\path (cNode1) ++(\nodeXdist*1.0,-\nodeYdist*0.8) coordinate (cNode7) ++(\nodeXshift*1.1,\nodeYdist*0.7) coordinate (cNode8) ++(\nodeXshift*1.1,-\nodeYdist/2) coordinate (cNode11) ++(0,-\nodeYdist) coordinate (cNode12);
	\path (cNode7) ++(0,-\nodeYdist) coordinate (cNode6);
	\path (cNode8) ++(0,-\nodeYdist) coordinate (cNode9) ++(0,-\nodeYdist) coordinate (cNode10);
	
	\path (cNode8) ++(0,\nodeYdist*1.6) coordinate (cNode13) ++(0,\nodeYdist*0.9) coordinate (cNode14) ++(-\nodeXshift,\nodeYdist*0.8) coordinate (cNode15);
	\path (cNode14) ++(\nodeXshift,\nodeYdist*0.8) coordinate (cNode16);
	
	\path (cNode13) ++(\nodeXdist*1.3,-\nodeYdist*0.7) coordinate (cNode17) ++(\nodeXdist*0.9,0) coordinate (cNode21) ++(\nodeXdist*0.9,0) coordinate (cNode19);
	\path (cNode21) ++(0,\nodeYdist) coordinate (cNode18);
	\path (cNode21) ++(0,-\nodeYdist) coordinate (cNode20);
	\path (cNode17) ++(\nodeXdist*0.7,0) coordinate (cNode21);
	
	\foreach \a/\b in \NodesVSet {
		\node[nodeVSet](node\a) at(cNode\a) {};
		\node[\b=4pt](nodeText\a) at(cNode\a) {\a};
	}
	\foreach \a/\b in \NodesVFollow {
		\node[nodeVFollow](node\a) at(cNode\a) {};
		\node[\b=4pt](nodeText\a) at(cNode\a) {\a};
	}
	
	\foreach \a/\b in \InternalEdges {
		\draw[lineInternal](node\a) -- (node\b);	}
	\foreach \a/\b/\c/\d in \ExternalEdges {
		\draw[lineExternal](node\a) -- (node\b);	}

	\begin{scope}[on background layer]
		
		\path[backgroundCluster] \convexpath{node1,node2,node3,node4}{\boundR};
		\path[backgroundCluster] \convexpath{node6,node7,node8,node11,node12,node10}{\boundR};
		\path[backgroundCluster] \convexpath{node13,node15,node16}{\boundR};
		\path[backgroundCluster] \convexpath{node17,node18,node19,node20}{\boundR};
		
		\path (node3) -- (node4) node[above=0.95cm,midway,font=\large] {Cluster 1};
		\path (node10) -- (node10) node[below=0.7cm,pos=0.0,font=\large] {Cluster 2};
		\path (node15) -- (node16) node[above=0.7cm,midway,font=\large] {Cluster 3};
		\path (node18) -- (node19) node[above=0.75cm,pos=0.0,font=\large] {Cluster 4};

	\end{scope}
\end{tikzpicture}}
	\caption[A clustered DC microgrid.]{A 21-bus DC microgrid partitioned into four clusters. Each cluster comprises buses with voltage setting controllers 
		\begin{tikzpicture}
			\protect\node[draw, rectangle, inner sep=-2, minimum size=1.5ex, fill = buff] {};
		\end{tikzpicture}
		in $\setS$, buses with voltage following controllers 
		\begin{tikzpicture}
			\protect\node[draw, circle, inner sep=-2, minimum size=1.5ex, fill = ballblue] {};
		\end{tikzpicture}
		in $\setF$, and lines with arbitrary directions
		\begin{tikzpicture}[baseline=-0.75ex]
			\tikzstyle{lineInternal}	= [-, decoration={markings, mark=at position 0.6 with {\arrow{Triangle[scale=0.9]}}}, postaction={decorate}]
			\protect\draw[lineInternal] (0,0) -- (0.6cm,0);
		\end{tikzpicture}
		in $\setE$. Lines
		\begin{tikzpicture}[baseline=-0.75ex]
			\tikzstyle{lineInternal}	= [-, double, decoration={markings, mark=at position 0.7 with {\arrow{Triangle[scale=0.9]}}}, postaction={decorate}]
			\protect\draw[lineInternal] (0,0) -- (0.6cm,0);
		\end{tikzpicture}
		in $\setT$ interconnect buses in different clusters.
	}
	\label{fig:Problem:Example_network}
\end{figure}
In this paper, we consider buses comprising a nonlinear load together with a buck inverter connected via a lossy LC filter, as depicted at the top of \autoref{fig:Problem:DC_Elements}. This configuration is typically used to model \acp{DGU}, but can similarly be used to model other grid-connected devices with local energy storage, such as electric vehicles and smart loads. The buses are interconnected with a broader network of similar devices via transmission lines. Furthermore, the buses in the same arbitrarily defined neighbourhood collectively form a cluster, as shown in \autoref{fig:Problem:Example_network}. The entire microgrid can then be interpreted as the interconnection of several clusters, each of which comprises one or more buses.

As each bus could be either a \ac{DGU} or another device with local storage, it is important to note that the available energy at each bus may be varying. \acp{DGU} are considered to have an unbounded energy supply for control purposes, whereas a device with local storage will have only a finite energy supply to be utilised for control. This motivates the need to distinguish two classes of buses: voltage setting buses that have unlimited local energy supply, and voltage following buses that have only a finite local energy supply. Two separate control objectives are posed for each type of bus. For voltage setting buses, the control objective is \emph{voltage setpoint regulation}, i.e., the buck converter is controlled such that the bus converges to a pre-determined voltage in steady state. Conversely, for the voltage following buses, the control objective is \emph{transient stabilisation}. Thus, the buck converter is controlled to ensure stability of the bus, but the steady-state voltage of the bus depends the rest of the network. Importantly, in the latter case there is no power supply in steady state. Recalling the cluster structure of the network in \autoref{fig:Problem:Example_network}, it is assumed that each cluster contains at least one supply bus.

More formally, we consider a microgrid of weakly connected buses described by the graph $\graphG[\mathrm{MG}] = (\bigcup_{m \in \setM} \setN[m], \, \setT \,\cup\, \bigcup_{m \in \setM} \setE[m])$, where the buses and lines are partitioned into a set of clusters $\cramped{\setM = \lbrace1,\dots,M\rbrace}$, $\cramped{M \ge 1}$. The sets $\setN[m]$ and $\setE[m]$ denotes the buses and lines associated with a cluster $m$, whereas $\setT$ are the lines not associated with any cluster, e.g., lines connecting buses in different clusters. Note that each bus appears in exactly one cluster, i.e., $\setN[m_1] \cap \setN[m_2] = \emptyset$ for any $m_1, m_2 \in \setM$ with $m_1 \ne m_2$. Transmission lines connecting buses in the same cluster are assigned to $\setE[m]$ such that the buses $\setN[m]$ each cluster $m$ are weakly connected.

The rest of this section continues as follows: in \autoref{sec:Problem:Elec_Network}, we model the cluster along with its constituent components, and in \autoref{sec:Problem:Control_Problem}, we formulate the control problem to be addressed in the sequel.
\subsection{DC Cluster Model} \label{sec:Problem:Elec_Network}
Let a cluster in a DC microgrid be represented as a graph $\graphG[m] = (\setN[m], \setE[m])$ comprising $|\setN[m]|$ buses connected by $|\setE[m]|$ electrical lines, with $\setE[m] \subseteq \setN[m] \times \setN[m]$. Each bus consists of a nonlinear load, a DC buck converter and a lossy LC filter, and each line is represented using the well-known $\pi$-model \cite[p.~66]{Machowski2008}, as depicted in \autoref{fig:Problem:DC_Elements}.
The buses in the cluster are partitioned into two sets. The first set, indicated by $\setS[m] = \lbrace1,\dots,S_m\rbrace, S_m \ge 1$, has unlimited power supply and aims to achieve local voltage regulation. The second set, indicated by $\setF[m] = \lbrace S_m+1,\dots,S_m+F_m\rbrace$, $F_m \ge 0$, has limited power for control purposes and aims to achieve transient stabilisation via a \emph{voltage following controller}. Importantly, steady-state power for regulation or stabilisation is only available at the buses in $\setS[m]$.

In the following, we introduce models for the transmission lines, the loads and for the voltage setting and voltage following buses. These components are subsequently combined into the state-space model for the entire DC cluster. For simplicity, we omit the cluster index $m$ for the rest of this section.
\subsubsection{Line Model}
Each transmission line is assigned an arbitrary direction, which denotes the positive current flow.
The $\pi$-model transmission line dynamics interconnecting a source bus $n_{l,\denoteSource}$ and a sink bus $n_{l,\denoteSink}$ are described by
\begin{equation} \label{eq:Problem:line_dynamics}
	\sLTx[l]\siTxdot[l] = -\sRTx[l]\siTx[l] + \sv[n_{l,\denoteSource}] - \sv[n_{l,\denoteSink}] , \quad l \in \setE \cup \setT,
\end{equation}
where $\siTx[l] \in \Reals$ is the line current, $\sLTx[l], \sRTx[l] > 0$ are the line inductance and resistance, and $\sv[n_{l,\denoteSource}], \sv[n_{l,\denoteSink}] \in \RealsPosDef$ are the corresponding bus voltages.
Note that the $\pi$-model capacitances are included in the dynamics of the respective source and sink buses.
\subsubsection{Load Model}
The load at each bus $n \in \setN$ is modelled as a static, nonlinear voltage-dependent current source described by a class \classC{0} function, which we describe using the standard ZIP-model in this work. As per \cite[pp.~110--112]{Machowski2008}, we define a critical voltage $\svCrit > 0$, typically set to $\SI{70}{\percent}$ of the reference voltage $\svRef$, below which the loads are purely resistive. Then, the load current is described by
\begin{align} \label{eq:Problem:ZIP_model}
	\siLv[n] =& \left\{\begin{array}{lll}
		\sZinv[n]\sv[n] + \sI[n] + \dfrac{\sP[n]}{\sv[n]}, & \qquad & \sv[n] \ge \svCrit, \\[1pt]
		\sZCritinv[n]\sv[n], & \qquad & \sv[n] < \svCrit ,
	\end{array} \right. \\[2pt]
	\label{eq:Problem:Z_crit}
	\sZCritinv[n] \coloneqq& \; \frac{\siL[n](\svCrit)}{\svCrit} = \sZinv[n] + \frac{\sI[n]}{\svCrit} + \frac{\sP[n]}{\svCrit^2} \, ,
\end{align}
where $\sZinv[n], \sI[n], \sP[n] \in \Reals$ describe the constant load impedance, current and power components, respectively, and $\sv[n] \in \RealsPosDef$ is the bus voltage.

\begin{figure}[!t]
	\centering
	\resizebox{0.85\columnwidth}{!}{\begin{tikzpicture}[>=latex']
    
    \def\pZ{0.05}
    \def\pI{-0.25}
    \def\pP{0.83102}
    
    \def\pY{0}
    
    \def\cVnom{1}
    \def\cVmax{2.2}
    \def\cVcrit{0.7}

    \def\Xlim{\cVmax}
    \def\Ylim{2}
    \def\domainX{-1:\Xlim}
    \def\rangeY{0:\Ylim}
    \def\scaleXY{1.8}
    \def\scaleY{0.5}
    
    \def\domainXa{0:\cVcrit-0.15}
    \def\domainXb{\cVcrit-0.15:\cVmax}
    \def\domainXFull{0:\cVmax}
    
    
    \begin{scope}[range=-\rangeY, scale=\scaleXY]
    \end{scope}

    \begin{scope}[range=\rangeY, scale=\scaleXY, thick]
		\draw[] plot[id=vcrit] coordinates{(\cVcrit,-0.05) (\cVcrit,0.05)};
		\node[anchor=south] at (\cVcrit,0) {$\svCrit$};
	
    	\draw[domain=\domainXa, color=matlabCol1, line cap = round] plot[id=load_low] function{\scaleY* (\pZ+\pI/\cVcrit+\pP/\cVcrit/\cVcrit)*x};
    	\draw[domain=\domainXb, color=matlabCol1, line cap = round] plot[id=load_full,samples=150] function{\scaleY*((\pZ+\pI/\cVcrit+\pP/\cVcrit/\cVcrit)*x * (x < \cVcrit) + (\pZ*x+\pI+\pP/x) * (x >= \cVcrit))}
    		node[right=0.15, anchor = west, rectangle, fill=white!50, inner sep=0.2] {\textcolor{black}{$\textcolor{matlabCol1}{\siLv[n]} \eqqcolon \textcolor{matlabCol2}{\siLPv[n]} - \textcolor{matlabCol4}{\sY[n] \sv[n]}$}};
    	
    	\draw[domain=\domainXa, color=matlabCol2, line cap = round] plot[id=load_pas_low] function{\scaleY* (\pZ+\pI/\cVcrit+\pP/\cVcrit/\cVcrit)*x - \scaleY*(\pZ-\pP/\cVcrit/\cVcrit)*x};
    	\draw[domain=\domainXb, color=matlabCol2, line cap = round] plot[id=load_pas_full,samples=150] function{\scaleY*((\pZ+\pI/\cVcrit+\pP/\cVcrit/\cVcrit)*x * (x < \cVcrit) + (\pZ*x+\pI+\pP/x) * (x >= \cVcrit)) - \scaleY*(\pZ-\pP/\cVcrit/\cVcrit)*x} 
    		node[right=0.15, anchor = west, rectangle, fill=white!50, inner sep=0.2] {$\siLPv[n]$};
    	
    	\draw[domain=\domainXFull, color=matlabCol4, line cap = round] plot[id=load_lin] function{-\scaleY*(\pZ-\pP/\cVcrit/\cVcrit)*x}
    		node[right=0.15, anchor = north west, rectangle, fill=white!50, inner sep=0.2] {$\sY[n] \sv[n]$};
    		
    \end{scope}

	\draw[-,thick, scale=\scaleXY, line cap = round] (0,\Ylim) -- (0,0) coordinate[pos=0.5] (cMidY) -- (\Xlim+0.2,0) node[below left=0.25, circle, inner sep=-0.4, align=center] {} coordinate[pos=0.5] (cMidX);

	\node[anchor=north] at(cMidX) {Voltage};
	\node[rotate=90,anchor=south] at(cMidY) {Current};
\end{tikzpicture}}
	\caption{A nonlinear ZIP load ({\color{matlabCol1}blue}) which is separated into a linear function ({\color{matlabCol4}purple}) and a monotone increasing nonlinear function ({\color{matlabCol2}orange}).}
	\label{fig:Problem:Split_Load}
\end{figure}
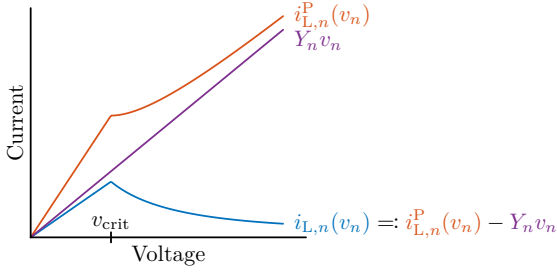
A typical current response of a ZIP load is shown in \autoref{fig:Problem:Split_Load} by the blue curve. The nonlinear load curve often exhibits a lack of monotonicity, which can be detrimental to the system stability. Assuming that the load is globally Lipschitz, we therefore separate the load into a linear part with gradient $\sY[n] \in \Reals$ and a monotone increasing nonlinear part $\siLPv[n]$ according to
%
\begin{equation} \label{eq:Problem:load_eip}
	\siLv[n] \eqqcolon \siLPv[n] - \sY[n] \sv[n],
\end{equation}
where the gradient $\sY[n]$ is chosen as small as possible while ensuring that $\siLPv[n]$ is monotone. The separation in \eqref{eq:Problem:load_eip} is illustrated in \autoref{fig:Problem:Split_Load}. Note that a suitable value for $\sY[n]$ is load-dependent. This parameter is used in the sequel to determine control gains for both the voltage setting and voltage following controllers.

\begin{remark} \label{rem:Problem:load_Y_value}
	In many applications, the exact load $\siL[n]$ is unknown, making the selection of a suitable $\sY[n]$ difficult. In such cases, choosing a larger $\sY[n]$ results in a higher robustness against load uncertainty, but will also lead to larger control gains. 
\end{remark}
\begin{remark} \label{rem:Problem:load_IFP}
	The smallest $\sY[n]$ yielding a monotone $\siLP[n]$ corresponds to the \ac{IFP} index of the load (cf.\ \cite[Proposition 9]{Malan2023}), with $\sY[n] > 0$ indicating a lack of \ac{IFP}. If $\siL[n]$ is already monotone, $\sY[n] = 0$ may be chosen, whereas $\sY[n] > 0$ is required if $\siL[n]$ is not monotone.
\end{remark}
\begin{remark} \label{rem:Problem:Nonlinear_Loads}
	Beyond the ZIP loads we consider here, the results in the sequel also apply to any continuous nonlinear static load which is globally Lipschitz and can be expressed using \eqref{eq:Problem:load_eip}. This includes exponential load models (see e.g.\ \cite[p.~112]{Machowski2008}, \cite{Strehle2020Load}).
\end{remark}
\subsubsection{Voltage Setting Buses}
Consider now a bus $j \in \setS \subseteq \setN$ where power is available to regulate the voltage. The bus comprises a buck converter, a lossy LC filter and the load in \eqref{eq:Problem:ZIP_model}, as shown in \autoref{fig:Problem:DC_Elements}. Let $\setESource[j] \subseteq \setE$ and $\setTSource[j] \subseteq \setT$ be the sets of edges of which bus $j$ is the sink, and similarly $\setESink[j] \subseteq \setE$ and $\setTSink[j] \subseteq \setT$ the sets of edges of which bus $j$ is the source.
The bus dynamics can be described using the time-averaged converter model and an ideal energy source (see e.g.\ \cite{Trip2019, Strehle2020dc,Tucci2016}) as
\begingroup
\allowdisplaybreaks
\begin{align}
	\label{eq:Problem:bus_S_dynamics_passive_load}
	\begin{bmatrix}
		\sL[j]\sidot[j] \\ \sCeq[j]\svdot[j]
	\end{bmatrix} \!\!&=\! \begin{bmatrix}
		-\sR[j] & -1 \\
		1 & \sY[j]
	\end{bmatrix} \!\! \begin{bmatrix}
		\sii[j] \\ \sv[j]
	\end{bmatrix} \! + \! \begin{bmatrix}
		\svVSC[j] \\ \siInt[j] + \siEx[j]
	\end{bmatrix} \!, \!\quad j \in \setS,
	\\
	\label{eq:Problem:bus_S_internal_current}
	\siInt[j] & \coloneqq \sum_{l \in \setESource[j]} \siTx[l] - \sum_{l \in \setESink[j]} \siTx[l] ,
	\\
	\label{eq:Problem:bus_S_external_current}
	\siEx[j] & \coloneqq \sum_{l \in \setTSource[j]} \siTx[l] - \sum_{l \in \setTSink[j]} \siTx[l] - \siLPv[j] ,
\end{align}
\endgroup
where $\sL[j], \sR[j] > 0$ are the filter inductance and resistance, respectively, and $\svVSC[j]$ is the control input. The line capacitances are described by $\sCTx[l] > 0$, $l \in \setESource[j] \cup \setESink[j] \cup \setTSource[j] \cup \setTSink[j]$ and are combined with the filter capacitance $\sC[j] > 0$ to obtain the total effective capacitance seen by the bus
\begin{equation} \label{eq:Problem:Equivalent_Cap}
	\sCeq[j] = \sC[j] + \sum_{l \in \setESource[j] \cup \setESink[j] \cup \setTSource[j] \cup \setTSink[j]} \tfrac{1}{2}\sCTx[l].
\end{equation}
%
The term $\siInt[j]$ is the sum of all line currents into and out of bus $j$ from transmission lines within the cluster $\setE$. Furthermore, the term $\siEx[j]$ forms an exogenous input to the bus dynamics which consists of the line currents interconnecting the bus $j$ to buses in other clusters $\setT$ and the monotone load part $\siLPv[j]$, defined in \eqref{eq:Problem:load_eip}.
\subsubsection{Voltage Following Buses} \label{sec:Problem:Elec_Network:S_Bus}
Next, consider a bus $k \in \setF \subset \setN$ where no steady-state power is available for regulation or stabilisation. As shown in \autoref{fig:Problem:DC_Elements}, such a bus consists of the same components as the voltage setting buses. However, since voltage setting and voltage following buses aim to achieve different objectives, different controllers will be designed for them. The uncontrolled dynamics of a voltage following bus are identical to \eqref{eq:Problem:bus_S_dynamics_passive_load}, i.e.,
\begin{align}
	\label{eq:Problem:bus_F_dynamics_passive_load}
	\begin{bmatrix}
		\sL[k]\sidot[k] \\ \sCeq[k]\svdot[k]
	\end{bmatrix} \!\!&=\! \begin{bmatrix}
		-\sR[k] & -1 \\
		1 & \sY[k]
	\end{bmatrix} \!\! \begin{bmatrix}
		\sii[k] \\ \sv[k]
	\end{bmatrix} \!\! + \!\! \begin{bmatrix}
		\svVSC[k] \\ \siInt[k] \!+\! \siEx[k]
	\end{bmatrix} \!\!, \!\!\!\!\quad k \in \setF,
	\\
	\label{eq:Problem:bus_F_internal_current}
	\siInt[k] & \coloneqq \sum_{l \in \setESource[k]} \siTx[l] - \sum_{l \in \setESink[k]} \siTx[l] ,
	\\
	\label{eq:Problem:bus_F_external_current}
	\siEx[k] & \coloneqq \sum_{l \in \setTSource[k]} \siTx[l] - \sum_{l \in \setTSink[k]} \siTx[l] - \siLPv[k] ,
\end{align}
where the variable and parameter definitions are the same as before.
\subsubsection{Cluster Model}
By combining the dynamics of the voltage setting buses \eqref{eq:Problem:bus_S_dynamics_passive_load}, the voltage following buses \eqref{eq:Problem:bus_F_dynamics_passive_load}, and the lines \eqref{eq:Problem:line_dynamics}, we get the state-space model for a cluster in vector form
\begin{equation} \label{eq:Problem:cluster_dynamics}
	\scalebox{0.95}{$
		\begin{bmatrix}
			\mLS \viSdot \\ \mCS \vvSdot \\ \mLF \viFdot \\ \mCF \vvFdot \\ \mLTx \viTxdot
		\end{bmatrix} \! {=} \! \begin{bmatrix}
			-\mRS & -\IdentS & \matrix{0} & \matrix{0} & \matrix{0} \\
			\IdentS & \mYS & \matrix{0} & \matrix{0} & \mES \\
			\matrix{0} & \matrix{0} & -\mRF & -\IdentF & \matrix{0} \\
			\matrix{0} & \matrix{0} & \IdentF & \mYF & \mEF \\
			\matrix{0} & -\mEST & \matrix{0} & -\mEFT & -\mRTx
		\end{bmatrix} \!\! \begin{bmatrix}
			\viS \\ \vvS \\ \viF \\ \vvF \\ \viTx
		\end{bmatrix}  
		\!\!+\!\! \begin{bmatrix}
			\vvVSCS \\ \viExS \\ \vvVSCF \\ \viExF \\ \vec{0}
		\end{bmatrix} \!,
	$}
\end{equation}
where $\viS, \vvS, \vvVSCS, \viExS$, and $\viF, \vvF, \vvVSCF, \viExF$ denote the stacked vectors of variables for the buses in $\setS$ and $\setF$, respectively, while $\viTx$ denotes the stacked current vectors for the edges in $\setE$. Furthermore, $\mLS = \Diag[(\sL[j])]$, $\mCS = \Diag[(\sCeq[j])]$, $\mLF = \Diag[(\sL[k])]$, $\mCF = \Diag[(\sCeq[k])]$, $\mLTx = \Diag[(\sLTx[l])]$, $\mRTx = \Diag[(\sRTx[l])]$ for $j \in \setS$, $k \in \setF$, and $l \in \setE$.
The interconnections between buses and edges are described by the incidence matrix of the graph $\graphG = (\setN, \setE)$, i.e.,
\begin{equation} \label{eq:Problem:interconnection}
	\mE = (\se[nl]) \in \lbrace-1,0,1\rbrace^{|\setN|\times|\setE|}, \quad \mE = \begin{bmatrix}
		\mES \\ \mEF
	\end{bmatrix} ,
\end{equation}
where $\se[nl] = 1$ if bus $n$ is the sink of edge $l$, $\se[nl] = -1$ if bus $n$ is the source of edge $l$, and $\se[nl] = 0$ otherwise. Note that $\mE$ has been partitioned into $\mES \in \lbrace-1,0,1\rbrace^{|\setS|\times|\setE|}$ which describes the interconnections of the voltage setting nodes and $\mEF \in \lbrace-1,0,1\rbrace^{|\setF|\times|\setE|}$ which describes the interconnections of the voltage following nodes. Using the incidence matrix \eqref{eq:Problem:interconnection}, the internal currents \eqref{eq:Problem:bus_S_internal_current} and \eqref{eq:Problem:bus_F_internal_current} can be included directly in the state matrix in \eqref{eq:Problem:cluster_dynamics}. Note that the set of lines in $\setT$ interconnecting buses in different clusters are not included in the description in \eqref{eq:Problem:cluster_dynamics}.

The cluster dynamics \eqref{eq:Problem:cluster_dynamics} will be controlled to a non-zero equilibrium in the sequel. Specifically, the voltage setting buses will control the local voltages $\vvS$ to some user-defined $\vvSSP$, whereas the voltage following buses will control the local current $\viF$ to zero. Due to the nonlinear nature of the loads, we assume the existence of a corresponding equilibrium point for the cluster.
\begin{assumption} \label{ass:Problem:Cluster_equilibrium}
	Given a constant input $(\vvVSCS, \viExS, \vvVSCF, \viExF) = (\vvVSCSeq, \viExSeq, \vvVSCFeq, \viExFeq)$, the cluster dynamics \eqref{eq:Problem:cluster_dynamics} has an equilibrium at some $(\viS, \vvS, \viF, \vvF, \viTx) = (\viSeq, \vvSSP, \vec{0}, \vvFeq, \viTxeq)$ for $\viSeq \in \Reals^{|\setE|}$, $\vvFeq \in \RealsPosDef^{|\setF|}$ and $\viTxeq \in \Reals^{|\setT|}$.
\end{assumption}
\begin{remark} \label{rem:Problem:Equilibrium_uniqueness}
	Although the linear dynamics in \eqref{eq:Problem:cluster_dynamics} might suggest that the uniqueness of the equilibria can easily be investigated, the uniqueness is also dependent on the nonlinear parts of the static loads in $\viExS$ and $\viExF$.
\end{remark}
Now, by virtue of Assumption \ref{ass:Problem:Cluster_equilibrium}, the cluster dynamics \eqref{eq:Problem:cluster_dynamics} are shifted w.r.t.\ the equilibrium. Since the dynamics in \eqref{eq:Problem:cluster_dynamics} are linear, we use the error variables, e.g., $\viSe \coloneqq \viS - \viSeq$, to represent the shifted system, i.e., 
\begin{equation} \label{eq:Problem:cluster_dynamics_shifted}
	\scalebox{0.95}{$
		\begin{bmatrix}
			\mLS \viSedot \\ \mCS \vvSedot \\ \mLF \viFedot \\ \mCF \vvFedot \\ \mLTx \viTxedot
		\end{bmatrix} \! {=} \! \begin{bmatrix}
			-\mRS & -\IdentS & \matrix{0} & \matrix{0} & \matrix{0} \\
			\IdentS & \mYS & \matrix{0} & \matrix{0} & \mES \\
			\matrix{0} & \matrix{0} & -\mRF & -\IdentF & \matrix{0} \\
			\matrix{0} & \matrix{0} & \IdentF & \mYF & \mEF \\
			\matrix{0} & -\mEST & \matrix{0} & -\mEFT & -\mRTx
		\end{bmatrix} \!\! \begin{bmatrix}
			\viSe \\ \vvSe \\ \viFe \\ \vvFe \\ \viTxe
		\end{bmatrix}  
		\!\!+\!\! \begin{bmatrix}
			\vvVSCSe \\ \viExSe \\ \vvVSCFe \\ \viExFe \\ \vec{0}
		\end{bmatrix} \!.
		$}
\end{equation}
Note that the nonlinear parts of the loads $\siLPv[n]$, $n \in \setN$ are fully contained in the exogenous input variables $\viExS$ and $\viExF$. Shifting these input variables yields
\begin{multline} \label{eq:Problem:shifted_passive_load}
	\siLPeve[n] \coloneqq \, \siLPv[n] - \siLPveq[n] \\ = \siLP[n](\sve[n] + \sveq[n]) - \siLPveq[n], \quad n \in \setN .
\end{multline}
\begin{remark} \label{rem:Problem:External_Load_EIP}
	From \eqref{eq:Problem:shifted_passive_load} we see that $\siLPe[n](0) = 0$. Moreover, since $\siLPv[n]$ is a monotone function (see \autoref{fig:Problem:Split_Load}), the shifted function $\siLPeve[n]$ is also monotone. Thus, $\siLPeve[n]$ represents a static \ac{EIP} function with a storage function $\sHL[n] = 0$ (see \autoref{def:Passive:EIP} and \cite[p.~228ff.]{Khalil2002}).
\end{remark}
\subsection{Control Problem} \label{sec:Problem:Control_Problem}
The control problem considered in this paper is to design control laws for the voltage setting and voltage following buses, such that a cluster of such buses is \ac{OSEIP} with respect to some power port. Recalling \autoref{fig:Problem:Example_network}, if each cluster is passivated with respect to its interconnection port, the entire DC microgrid is stabilised by the inherited passivity properties. 
\begin{objective} \label{obj:Problem:Cluster_Passivity}
	Design decentralised controllers for each $\svVSC[n]$, $n \in \setS \cup \setF = \setN$ such that the cluster \eqref{eq:Problem:cluster_dynamics_shifted} is \ac{OSEIP} w.r.t.\ the input-output pairs $(\viExSe, \vvSe)$ and $(\viExFe, \vvFe)$. Moreover, the voltage setting and voltage following buses should satisfy the following sub-objectives:
    \begin{enumerate}
        \item The decentralised controller for each $\svVSC[j]$, $j \in \setS$ must ensure that $\lim_{t\to\infty} \sv[j] = \svSP[j]$ and that the bus is \ac{OSEIP} w.r.t.\ the input-output pair $(\siExe[j], \sve[j])$.
        \item The decentralised controller for each $\svVSC[k]$, $k \in \setF$ must ensure that $\lim_{t\to\infty} \sii[k] = 0 \iff \lim_{t\to\infty} (\sv[k] - \svVSC[k]) = 0 $.
    \end{enumerate}
\end{objective}
Once each cluster is passivated via the design of suitable control laws, the interconnection of multiple clusters will be stable. It is conceivable, however, that the topology of a cluster may change dynamically as buses are plugged in or out. Therefore, in addition to the above control objective, we introduce the following objective in order to characterise the robustness properties of the proposed control scheme with respect to possible network reconfigurations.
\begin{objective} \label{obj:Problem:Cluster_Robustness}
	Formulate conditions under which the cluster passivity in \autoref{obj:Problem:Cluster_Passivity} is preserved in the event of parameter or topology changes.
\end{objective}

	\section{Bus Controllers} \label{sec:Control}
In this section, we design the controllers outlined in \autoref{obj:Problem:Cluster_Passivity} for buses in $\setS$ where steady-state power is available, and for buses in $\setF$ without steady-state power. To this end, we start in \autoref{sec:Control:Voltage_Setting} with controllers for the voltage setting buses in $\setS$. We prove that the closed-loop voltage setting buses exhibit the desired voltage regulation equilibrium and we derive conditions for the \ac{OSEIP} of these buses. Next, in \autoref{sec:Control:Voltage_Following}, we show that buses with non-monotone loads and without steady-state power cannot be \ac{EIP}. We therefore design a controller that dampens the transients of the voltage following buses in $\setF$. 

Note that the microgrid index $m$ and the bus indices $j$ and $k$ are omitted in this section for clarity. Nevertheless, the control parameters can be chosen differently for each bus.
\subsection{Voltage Setting Controller} \label{sec:Control:Voltage_Setting}
For the \acp{DGU} in the set $\setS$ that can supply power to the network in steady state, we propose the following controller to regulate the bus voltage to a user-defined setpoint $\svSP \in \RealsPosDef$,
\begin{subequations} \label{eq:Control:Voltage_setting}
	\begin{align}
		\sIntdot &= \somegaV(\sv - \svSP), \\
		\svVSC &= -\spsiV(\sv - \svSP) - \spsiI \sii - \sInt - \tfrac{\sdelta \sL}{\sC} \siC,
	\end{align}
\end{subequations}
where $\somegaV > 0$ and $\spsiV > 0$ are the integral and proportional control gains, $\spsiI > 0$ injects damping onto the dynamics of the filter current $\sii$ and $\sdelta > 0$ injects damping onto the dynamics of the capacitor current $\siC$.

We now construct a suitable error system for the bus dynamics \eqref{eq:Problem:bus_S_dynamics_passive_load} in closed-loop with \eqref{eq:Control:Voltage_setting}. For this development, we propose a new coordinate, i.e.,
\begin{equation} \label{eq:Control:split_currents}
	\sia \coloneqq \sii + \sdelta\sv, 
\end{equation}
which is useful for verifying a passivity property of the closed-loop bus dynamics. In particular, the control gain $\sdelta$ is used to inject damping onto the voltage dynamics.
\begin{proposition} \label{prop:Control:Closed_loop_voltage_setting}
	Consider the closed-loop comprising the bus dynamics \eqref{eq:Problem:bus_S_dynamics_passive_load} and the voltage setting controller \eqref{eq:Control:Voltage_setting}. Applying the coordinate transformation  in \eqref{eq:Control:split_currents} and shifting the system w.r.t.\ $\sInteq$, $\siaeq$, $\sveq = \svSP$, $\siInteq$, and $\siExeq$ yields 
	\begin{align}
		\label{eq:Control:Shifted_voltage_setting_system}
		\!\!\!\setUnderBraceMatrix{\mQs\vxsedot}{\begin{bmatrix}
			\sL\sIntedot \\ \sL\siaedot \\ \sCeq\svedot
		\end{bmatrix}} \! {=}& \! \setUnderBraceMatrix{\mAs}{\begin{bmatrix}
			0 & 0 & \somegaVMod \\
			-1\spsiInt & -\spsiIMod & -\spsiVMod \\
			0 & 1 & \sY - \sdelta
		\end{bmatrix}} \!\! \setUnderBraceMatrix{\vxse}{\begin{bmatrix}
			\sInte \\ \siae \\ \sve
		\end{bmatrix}} \!\:\! + \!\!\: \setUnderBraceMatrix{\vbs \suse}{\begin{bmatrix}
			0 \\ 0 \\ \siInte + \siExe
		\end{bmatrix}}, \\
		\label{eq:Control:Shifted_voltage_setting_output}
		\syse =& \setUnderBraceMatrix{\vbsT}{\,\begin{bmatrix}
				0 & 0 & 1
		\end{bmatrix}\,} \vxse = \sve, \\
		\label{eq:Control:modified_gain_psi_omega}
		\spsiIMod \coloneqq& \, \spsiI {+} \sR, \;\, \spsiVMod \coloneqq \spsiV + 1 - \sdelta\spsiIMod, \;\, \somegaVMod \coloneqq \sL\somegaV.
	\end{align}
\end{proposition}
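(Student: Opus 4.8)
The plan is to treat the proposition as what it is---an algebraic identity obtained by substituting the controller into the bus model, performing the coordinate change, and subtracting the equilibrium---and simply to execute this computation, reading off $\mAs$, $\vbs$ and the modified gains in \eqref{eq:Control:modified_gain_psi_omega} by collecting coefficients.

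First I would substitute \eqref{eq:Control:Voltage_setting} into the first row of \eqref{eq:Problem:bus_S_dynamics_passive_load}. The crucial observation is that the measured filter-capacitor current obeys the constitutive relation $\siC = \sC\svdot$, and that, by the second row of \eqref{eq:Problem:bus_S_dynamics_passive_load}, $\svdot$ is independent of the control input $\svVSC$, so the feedback term $-\tfrac{\sdelta\sL}{\sC}\siC = -\sdelta\sL\svdot$ is well posed (there is no algebraic loop). Expanding yields $\sL\sidot = -(\sR+\spsiI)\sii - (1+\spsiV)\sv + \spsiV\svSP - \sInt - \sdelta\sL\svdot$.

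Next, using the new coordinate $\sia = \sii + \sdelta\sv$ from \eqref{eq:Control:split_currents}, I would form $\sL\siadot = \sL\sidot + \sdelta\sL\svdot$; the two $\sdelta\sL\svdot$ terms cancel---which is precisely the design purpose of the capacitor-current feedback---leaving $\sL\siadot = -(\sR+\spsiI)\sii - (1+\spsiV)\sv + \spsiV\svSP - \sInt$. Substituting $\sii = \sia - \sdelta\sv$ and grouping the coefficient of $\sv$ gives $\sL\siadot = -\spsiIMod\sia - \spsiVMod\sv + \spsiV\svSP - \sInt$ with $\spsiIMod = \spsiI + \sR$ and $\spsiVMod = 1 + \spsiV - \sdelta\spsiIMod$, i.e.\ the definitions in \eqref{eq:Control:modified_gain_psi_omega}. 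The same substitution in the second row of \eqref{eq:Problem:bus_S_dynamics_passive_load} gives $\sCeq\svdot = \sia + (\sY - \sdelta)\sv + \siInt + \siEx$, and multiplying the integrator equation by $\sL$ gives $\sL\sIntdot = \somegaVMod(\sv - \svSP)$ with $\somegaVMod = \sL\somegaV$.

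Finally I would pass to error coordinates. Setting $\sIntdot = 0$ forces $\sveq = \svSP$, consistent with Assumption \ref{ass:Problem:Cluster_equilibrium}; the remaining equilibrium values $\siaeq$ and $\sInteq$ are then fixed by the steady-state versions of the other two equations, with $\siInteq$ and $\siExeq$ (into which the monotone load part $\siLPv$ of $\siEx$ enters) held constant. Subtracting these equilibrium relations from the three equations above, and using $\sve = \sv - \svSP$, $\siae = \sie + \sdelta\sve$ and $-\tfrac{\sdelta\sL}{\sC}\siC = -\sdelta\sL\svedot$ (since $\svdot = 0$ at the equilibrium), cancels the constants $\spsiV\svSP$ and $\svSP$ and produces exactly \eqref{eq:Control:Shifted_voltage_setting_system}, with $\mQs = \Diag[(\sL, \sL, \sCeq)]$, the stated matrix $\mAs$ (the $-1$ entry in the $\siae$-row being the coefficient of $\sInte$ obtained above), input vector $\vbs = \begin{bmatrix}0 & 0 & 1\end{bmatrix}^\Transpose$ multiplying $\suse = \siInte + \siExe$, and output $\syse = \vbsT\vxse = \sve$, i.e.\ \eqref{eq:Control:Shifted_voltage_setting_output}. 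As every manipulation is linear, no genuine obstacle arises; the single step needing care is the cancellation of $\sdelta\sL\svdot$, which hinges on identifying $\siC$ with $\sC\svdot$ and is the reason the capacitor current appears in the controller at all.
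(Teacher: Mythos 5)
Your proposal is correct and follows essentially the same route as the paper's proof: you exploit $\siC = \sC\svdot$ so that the capacitor-current feedback cancels the $\sdelta\sL\svdot$ term arising from differentiating \eqref{eq:Control:split_currents}, collect coefficients after substituting $\sii = \sia - \sdelta\sv$ to read off $\spsiIMod$, $\spsiVMod$, $\somegaVMod$, and then subtract the equilibrium relations of the resulting linear system. The only differences are cosmetic (substituting the controller before forming $\sL\siadot$, and the added remark on the absence of an algebraic loop), so no further comparison is needed.
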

\begin{proof}
	Starting from time derivative of \eqref{eq:Control:split_currents} weighted by $\sL$, use $\sL\sidot$ from \eqref{eq:Problem:bus_S_dynamics_passive_load} along with the capacitor equation $\sC \svdot = \siC$ to obtain
	\begin{equation} \label{eq:Control:Split_current_dynamics_voltage_setting_current}
		\sL\siadot = \sL \sidot + \sL \sdelta \svdot = -\sR\sii - \sv + \svVSC + \tfrac{\sL \sdelta}{\sC} \siC .
	\end{equation}
	Replacing $\svVSC$ with \eqref{eq:Control:Voltage_setting} and using $\sii = \sia - \sdelta\sv$, we obtain
	\begin{equation} \label{eq:Control:Split_current_dynamics_voltage_setting_current_2}
		\begin{aligned}
			\sL\siadot &= -(\sR + \spsiI)\sii - \sv -\spsiV(\sv - \svSP) - \spsiInt \sInt  \\
			&= -\spsiIMod \sia -\spsiVMod\sv - \spsiInt \sInt + \spsiV\svSP,
		\end{aligned}
	\end{equation}
	with $\spsiIMod$ and $\spsiVMod$ defined in \eqref{eq:Control:modified_gain_psi_omega}.
	Similarly, using \eqref{eq:Control:split_currents} in the voltage dynamics \eqref{eq:Problem:bus_S_dynamics_passive_load} yields
	\begin{equation} \label{eq:Control:Split_current_dynamics_voltage_setting_voltage}
		\sCeq \svdot = \sii + \sY + \siInt + \siEx = \sia + (\sY-\sdelta)\sv + \siInt + \siEx .
	\end{equation}
	Furthermore, multiplying $\sIntdot$ by $\sL$ yields
	\begin{equation} \label{eq:Control:Split_current_dynamics_voltage_setting_int}
		\sL\sIntdot = \somegaVMod(\sv - \svSP),
	\end{equation}
	with $\somegaVMod$ defined in \eqref{eq:Control:modified_gain_psi_omega}.
	To conclude, shifting the linear system \eqref{eq:Control:Split_current_dynamics_voltage_setting_current_2}--\eqref{eq:Control:Split_current_dynamics_voltage_setting_int} w.r.t.\ $\sInteq$, $\siaeq$, $\sveq = \svSP$, $\siInteq$, and $\siExeq$ yields the system \eqref{eq:Control:Shifted_voltage_setting_system}.
\end{proof}
Note that the transformation in \eqref{eq:Control:split_currents} changes neither the measurements used for control nor the shifted input-output port $(\siInte+\siExe, \sve)$.
We now investigate the \ac{OSEIP} for the controlled voltage setting bus \eqref{eq:Control:Shifted_voltage_setting_system}.
\begin{theorem} \label{thm:Control:V_setting_passive}
	Consider the controlled voltage setting bus in \eqref{eq:Control:Shifted_voltage_setting_system} and \eqref{eq:Control:Shifted_voltage_setting_output}. If there is a $\siOut > 0$ such that
	\begin{align}
		\label{eq:Control:V_set_Condition_1}
		\sdelta &> \sY + \siOut, \\
		\label{eq:Control:V_set_Condition_2}
		\spsiIMod^2 (\sdelta - \sY - \siOut) &> \somegaVMod \spsiInt - \spsiVMod\spsiIMod,
	\end{align}
	then there exists a $\sps[2] > 0$ 
	%
	%
	which ensures that the controlled voltage setting bus is \ac{OSEIP} w.r.t.\ the input-output pair $(\suse, \syse) = (\siInte+\siExe, \sve)$ with the storage function 
	\begin{equation}
		\label{eq:Control:V_set_Storage}
		\sHs(\vxse) = \tfrac{1}{2} \vxseT \mQs \mPs \vxse, \quad \mQs = \Diag[(\sL, \sL, \sCeq)], 
	\end{equation}
	where
	\begin{align}
		\label{eq:Control:V_set_p_matrix}
		\mPs &= \begin{bmatrix}
			\sps[1] & \sps[2] & 0 \\
			\sps[2] & \sps[3] & 0 \\
			0 & 0 & 1
		\end{bmatrix}, \quad \left\lbrace
		\begin{aligned}
			\sps[1] &\coloneqq \frac{1}{\spsiIMod}\sps[2] + \frac{1}{\spsiIMod\somegaVMod}, \\
			\sps[3] &\coloneqq \spsiIMod\sps[2].
		\end{aligned}\right.
	\end{align}
\end{theorem}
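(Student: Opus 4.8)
The approach I would take is to verify \autoref{def:Passive:EIP} directly for the shifted system \eqref{eq:Control:Shifted_voltage_setting_system}--\eqref{eq:Control:Shifted_voltage_setting_output} from \autoref{prop:Control:Closed_loop_voltage_setting}: show that $\sHs$ in \eqref{eq:Control:V_set_Storage} is a genuine storage function and that $\sHsdot \le \suse\syse - \siOut\syse^2$. First I would establish that $\sHs$ is positive definite. Because the first two diagonal entries of $\mQs$ coincide and the $(1,3)$ and $(2,3)$ entries of $\mPs$ vanish, $\mQs\mPs$ is symmetric and $\mQs\mPs = \mPsT\mQs$; substituting the definitions of $\sps[1]$ and $\sps[3]$ from \eqref{eq:Control:V_set_p_matrix}, its leading principal minors collapse to $\sL\sps[1] > 0$, $\sL^2\sps[2]/\somegaVMod > 0$ and $\sL^2\sCeq\sps[2]/\somegaVMod > 0$, so $\mQs\mPs \posDef 0$ for every $\sps[2] > 0$ (recall $\spsiIMod > 0$, $\somegaVMod > 0$). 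Hence $\sHs$ is positive definite and vanishes at the shifted equilibrium, as a storage function requires.

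Next I would differentiate $\sHs$ along \eqref{eq:Control:Shifted_voltage_setting_system}; using $\mQs\mPs = \mPsT\mQs$ gives $\sHsdot = \vxseT\mPsT(\mAs\vxse + \vbs\suse)$. The key structural point is that the last row of $\mPs$ equals $\begin{bmatrix} 0 & 0 & 1\end{bmatrix}$, so $\mPsT\vbs = \vbs$ and the input term is \emph{exactly} $\vxseT\vbs\suse = \suse\syse$, with no cross term left to dominate. Consequently, for the given $\siOut > 0$, the \ac{OSEIP} property with storage function $\sHs$ is equivalent to the matrix inequality
\begin{equation*}
	\mW[s] \coloneqq -\bigl(\mPsT\mAs + \mAsT\mPs\bigr) - 2\siOut\,\vbs\vbsT \posSemiDef 0 ,
\end{equation*}
and the theorem reduces to exhibiting a $\sps[2] > 0$ for which this holds.

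The last step is to assemble $\mW[s]$ explicitly using \eqref{eq:Control:V_set_p_matrix} and solve the inequality. I expect the $\{1,2\}$-block of $\mW[s]$ to come out as the rank-one matrix $2\sps[2]\,(1,\spsiIMod)^{\Transpose}(1,\spsiIMod)$, its $(3,3)$ entry as $2(\sdelta - \sY - \siOut)$, and its $(1,3),(2,3)$ entries as $(1,\spsiIMod)^{\Transpose}$ scaled by $\omega_{13} \coloneqq -\tfrac{1}{\spsiIMod}\bigl(1 + \sps[2](\somegaVMod - \spsiVMod\spsiIMod)\bigr)$. Writing $\mW[s] = \matrix{G}^{\Transpose}\hat{\matrix{W}}\matrix{G}$, where $\matrix{G}$ has full row rank with rows $(1,\spsiIMod,0)$ and $(0,0,1)$ and $\hat{\matrix{W}} = \left[\begin{smallmatrix} 2\sps[2] & \omega_{13} \\ \omega_{13} & 2(\sdelta-\sY-\siOut)\end{smallmatrix}\right]$, the condition $\mW[s] \posSemiDef 0$ becomes $\hat{\matrix{W}} \posSemiDef 0$. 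Condition \eqref{eq:Control:V_set_Condition_1} makes the $(2,2)$ entry of $\hat{\matrix{W}}$ positive, so only $4\sps[2](\sdelta - \sY - \siOut) \ge \omega_{13}^2$ remains; this is a quadratic inequality in $\sps[2]$ whose discriminant is nonnegative exactly when \eqref{eq:Control:V_set_Condition_2} holds, and whose solution set is a nonempty subinterval of $(0,\infty)$ (degenerating to a half-line in the borderline case $\somegaVMod = \spsiVMod\spsiIMod$, where the inequality becomes affine in $\sps[2]$). Picking any such $\sps[2]$, with $\sps[1],\sps[3]$ then fixed by \eqref{eq:Control:V_set_p_matrix}, yields $\mW[s] \posSemiDef 0$ and hence the claimed \ac{OSEIP}.

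I expect this last step to be the only delicate part: the explicit computation of $\mW[s]$, the recognition of its rank-one / $\matrix{G}$-factored structure, and — above all — the algebra showing that ``the discriminant of the $\sps[2]$-quadratic is nonnegative'' is \emph{precisely} condition \eqref{eq:Control:V_set_Condition_2}, together with the sign analysis certifying that a positive root exists and the separate treatment of the borderline case $\somegaVMod = \spsiVMod\spsiIMod$. The positive definiteness of $\sHs$ and the cancellation $\mPsT\vbs = \vbs$ are immediate from the block structure of $\mQs$, $\mPs$ and $\vbs$.
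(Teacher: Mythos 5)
Your proposal is correct and follows essentially the same route as the paper's proof: the same storage function with $\sps[1],\sps[3]$ fixed as in \eqref{eq:Control:V_set_p_matrix}, the reduction of \ac{OSEIP} (using $\mPs\vbs=\vbs$ and $\mQs\mPs=\mPs\mQs$) to the matrix inequality $-(\mAsT\mPs+\mPs\mAs)-2\siOut\vbs\vbsT\posSemiDef 0$, and the same quadratic inequality in $\sps[2]$ whose solvability under \eqref{eq:Control:V_set_Condition_1}--\eqref{eq:Control:V_set_Condition_2} yields the claim. The only difference is in bookkeeping: you certify the $3\times3$ inequality through the rank-two factorization $\matrix{G}^{\Transpose}\hat{\matrix{W}}\matrix{G}$, collapsing it to a $2\times2$ condition, whereas the paper forces $\det(\mAsT\mPs+\mPs\mAs)=0$ by the choice of $\sps[1],\sps[3]$ and checks trailing principal minors — the ensuing discriminant and positive-root analysis in $\sps[2]$, including the borderline case $\somegaVMod=\spsiVMod\spsiIMod$ you flag, is identical in substance.
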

The proof of \autoref{thm:Control:V_setting_passive} is given in \autorefapp{sec:App:Proof_V_setting_passive}.
Through the conditions in \autoref{thm:Control:V_setting_passive}, the \ac{OSEIP} of the controlled voltage setting bus w.r.t.\ the input currents $\siInte+\siExe$ and the bus voltage error $\sve$ is ensured. Specifically, the condition in \eqref{eq:Control:V_set_Condition_1} ensures the \ac{OSEIP} if the damping injected using the capacitor current $\siC$ is strictly greater than the $\sY$ associated with the load (see \autoref{fig:Problem:Split_Load}). Building on the \ac{OSEIP} of the controlled voltage setting bus we now investigate its zero state.
\begin{proposition} \label{prop:Control:Voltage_Setting_zero_dynamics}
	The controlled voltage setting bus \eqref{eq:Control:Shifted_voltage_setting_system} with $\suse \equiv 0$ and $\syse \equiv 0$ is \ac{ZSO}.
\end{proposition}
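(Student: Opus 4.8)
The plan is to exploit the triangular cascade structure that the constraint $\syse \equiv 0$ imposes on the linear system \eqref{eq:Control:Shifted_voltage_setting_system}: with the output pinned to zero, the three state components can be eliminated one at a time by purely algebraic substitution, so no Lyapunov function (in particular none of the storage functions from \autoref{thm:Control:V_setting_passive}) is required.

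First I would set $\suse \equiv 0$ and $\syse = \sve \equiv 0$. By the convention that $\sve \equiv 0$ means $\sve$ and all its derivatives vanish for $t \ge 0$, we have $\svedot \equiv 0$. Reading off the third row of \eqref{eq:Control:Shifted_voltage_setting_system}, namely $\sCeq\svedot = \siae + (\sY - \sdelta)\sve + \suse$, and substituting $\svedot \equiv 0$, $\sve \equiv 0$ and $\suse \equiv 0$, forces $\siae \equiv 0$; being identically zero, $\siae$ then also satisfies $\siaedot \equiv 0$.

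Next I would use the second row of \eqref{eq:Control:Shifted_voltage_setting_system}, i.e.\ $\sL\siaedot = -\sInte - \spsiIMod\siae - \spsiVMod\sve$ (the $(2,1)$-entry of $\mAs$ equals $-1$). Substituting $\siaedot \equiv 0$, $\siae \equiv 0$ and $\sve \equiv 0$ gives $\sInte \equiv 0$. The first row, $\sL\sIntedot = \somegaVMod\sve$, is then automatically consistent and is not needed. Hence $\vxse = (\sInte, \siae, \sve) \equiv \vec{0}$, which is exactly the defining property of \ac{ZSO} in \autoref{def:Passive:ZSO}.

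I do not expect any genuine obstacle here; the only care required is the bookkeeping around the $\equiv$ notation (once a component is shown to be identically zero, so is its derivative) together with the observation that $\sve$ enters the $\siae$-equation and $\siae$ enters the $\sve$-equation, each with a nonzero (unit) coefficient, which is precisely what makes the successive eliminations work.
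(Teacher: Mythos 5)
Your proof is correct and follows essentially the same route as the paper: pinning $\suse \equiv 0$ and $\sve \equiv 0$ forces $\siae \equiv 0$ via the third row, and then $\sInte \equiv 0$ via the second row, exactly the cascade elimination used in the paper's proof of \autoref{prop:Control:Voltage_Setting_zero_dynamics}. Your additional bookkeeping about derivatives of identically zero signals and the unit off-diagonal coefficients is just a more explicit spelling-out of the same argument.
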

\begin{proof}
	Setting $\suse \equiv \siInte + \siExe \equiv 0$ and $\syse \equiv \sve \equiv 0$ in \eqref{eq:Control:Shifted_voltage_setting_system} yields the reduced dynamics $\sCeq \svedot \equiv 0 \equiv \siae$ and thus that $\sL \siaedot \equiv 0 \equiv - \sInte$. 
	This implies $(\sInte, \siae, \sve)^\Transpose = \vxse = \vec{0}$.
\end{proof}
Using \autoref{prop:Control:Voltage_Setting_zero_dynamics} together with \autoref{thm:Control:V_setting_passive}, the asymptotic stability of the controlled voltage setting bus is assured if it is connected to another \ac{OSEIP} system on its $(\siInte+\siExe, \sve)$ port.
\begin{remark} \label{rem:Control_V_setting_robust}
	Even though the \ac{OSEIP} conditions in \autoref{thm:Control:V_setting_passive} are dependent on the bus parameters $\sR$, $\sL$ and the load parameter $\sY$, robustness against parameter changes can be ensured by verifying \eqref{eq:Control:V_set_Condition_1} and \eqref{eq:Control:V_set_Condition_2} for a range of $\sR$, $\sL$ and for the upper bound of $\sY$.
\end{remark}
%
%
\begin{remark} \label{rem:Control:Practical_derivative}
	Observe that the filter capacitor current $\siC$ and voltage derivative satisfy (see \autoref{fig:Problem:DC_Elements})
	\begin{equation} \label{eq:Control:damping_equivalence}
		\sC \svdot = \siC = \sii - \sio.
	\end{equation}
	Thus, including $\siC$ in the controller \eqref{eq:Control:Voltage_setting} has the same effect as injecting damping via the voltage derivative (see \cite{Cucuzzella2023}). Furthermore, note that instead of measuring the filter capacitor current $\siC$, one can measure the inductor current $\sii$ and the filter output $\sio$.
\end{remark}
\subsection{Voltage Following Controller} \label{sec:Control:Voltage_Following}
In the case where no steady-state power is available for control and where a bus has a non-monotone load, it is not possible to achieve an \ac{EIP} property on the $(\siInte+\siExe, \sve)$ port, as we demonstrate in the following proposition.
\begin{proposition} \label{prop:Control:no_power_no_eip}
	Consider a voltage bus \eqref{eq:Problem:bus_F_dynamics_passive_load} with no power available in steady state, i.e., $\lim_{t\to\infty} \sii = 0$, and where the load function $\siLv$ is not monotone. Such a bus cannot be \ac{EIP} w.r.t.\ to the input-output pair $(\siInte+\siExe, \sve)$.
\end{proposition}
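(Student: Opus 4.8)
The plan is a proof by contradiction resting on the elementary fact that an equilibrium-independent passive system must have a monotone non-decreasing steady-state input--output characteristic; I would then show that, for a voltage following bus with a non-monotone load, this characteristic is strictly \emph{decreasing}, which gives the contradiction.

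First I would record the monotonicity necessary condition. Assume, for contradiction, that the bus \eqref{eq:Problem:bus_F_dynamics_passive_load} --- with whatever (possibly dynamic) control law generates $\svVSC$ --- is \ac{EIP} w.r.t.\ the port $(\siInte+\siExe,\sve)$ in the sense of \autoref{def:Passive:EIP}, with a \classC{1} storage function $H$ and index $\siOut \ge 0$. Writing $u \coloneqq \siInt[]+\siEx[]$ and $y \coloneqq v$, the dissipation inequality $\dot H(x,\hat{x}) \le (u-\hat{u})(y-\hat{y}) - \siOut\,(y-\hat{y})^2$ then holds along all trajectories for every equilibrium $\hat{x} \in \setXeq$. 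Fix two equilibria $\hat{x}_1,\hat{x}_2 \in \setXeq$ with associated steady-state pairs $(\hat{u}_1,\hat{y}_1)$ and $(\hat{u}_2,\hat{y}_2)$, and evaluate the inequality along the constant solution $x(t)\equiv\hat{x}_2$ generated by $u(t)\equiv\hat{u}_2$; since $y\equiv\hat{y}_2$ and $H(\hat{x}_2,\hat{x}_1)$ is constant, $\dot H\equiv0$, so taking $\hat{x}_1$ as reference yields $0 \le (\hat{u}_2-\hat{u}_1)(\hat{y}_2-\hat{y}_1) - \siOut\,(\hat{y}_2-\hat{y}_1)^2$, i.e.\ $(\hat{u}_1-\hat{u}_2)(\hat{y}_1-\hat{y}_2)\ge 0$ for every pair of equilibria.

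Next I would compute the bus characteristic on this port. Setting the left-hand side of \eqref{eq:Problem:bus_F_dynamics_passive_load} to zero, the capacitor balance reads $0 = \hat{\imath} + \sY[]\,\hat{v} + \hat{u}$; since no steady-state power is available, $\hat{\imath} = \lim_{t\to\infty} i = 0$ at every equilibrium, whence $\hat{u} = -\sY[]\,\hat{v} = -\sY[]\,\hat{y}$. This relation follows purely from the physical bus equation, independently of the controller. By Remark~\ref{rem:Problem:load_IFP}, a non-monotone load $\siLv[]$ forces $\sY[] > 0$, and because the port current $\siInt[]+\siEx[]$ is exogenous (imposed by the rest of the network, cf.\ Assumption~\ref{ass:Problem:Cluster_equilibrium} and Objective~\ref{obj:Problem:Cluster_Passivity}), the set $\setXeq$ contains equilibria with distinct voltages $\hat{v}_1 \ne \hat{v}_2$. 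For such a pair, $(\hat{u}_1-\hat{u}_2)(\hat{y}_1-\hat{y}_2) = -\sY[]\,(\hat{v}_1-\hat{v}_2)^2 < 0$, contradicting the previous paragraph. Hence no such $H$ and $\siOut$ exist, and the bus is not \ac{EIP} w.r.t.\ $(\siInte+\siExe,\sve)$.

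The one delicate point --- the part I would argue most carefully --- is that the contradiction is non-vacuous, i.e.\ that $\setXeq$ really is a non-trivial family rather than a singleton; I would justify this by viewing $\setXeq$ as the one-parameter set parametrised by $\hat{v}>0$ through $\hat{u}=-\sY[]\hat{v}$, which is exactly why the equilibrium-independent notion is the pertinent one here. Everything else is routine once the ``\ac{EIP} $\Rightarrow$ monotone steady-state characteristic'' step is available, and I would spell that short argument out rather than merely cite it. As a by-product, the same computation pinpoints why a voltage setting bus is \emph{not} obstructed: there the integral action in \eqref{eq:Control:Voltage_setting} fixes $\hat{v}=\svSP$ at every equilibrium, so its steady-state characteristic on the port is constant (hence trivially monotone) while $\hat{\imath}$ is free to be nonzero.
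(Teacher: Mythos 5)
Your proposal is correct and follows essentially the same route as the paper: both derive the steady-state port characteristic $\hat{u} = -\sY[]\hat{v}$ from the capacitor balance with $\hat{\imath}=0$ and conclude that \ac{EIP} would force $\sY[] \le 0$, contradicting the $\sY[] > 0$ required by a non-monotone load (\autoref{rem:Problem:load_IFP}). The only difference is presentational: you spell out the ``\ac{EIP} implies a monotone steady-state input--output relation'' step (and the non-vacuousness of $\setXeq$) explicitly, whereas the paper cites this as the passivity requirement on the static steady-state relation via \cite[Def.~6.1]{Khalil2002}.
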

\begin{proof}
	For a system to be passive, both its transient and steady-state behaviour must be passive. Considering the steady state $\sveqdot = 0$ of \eqref{eq:Problem:bus_F_dynamics_passive_load} along with the restricted power availability $\sieq = 0$, it follows that 
	\begin{equation} \label{eq:Control:no_power_steady_state}
		\sCeq\sveqdot = 0 = \sY\sveq + \siInteq + \siExeq 
		\implies -\sY\sveq = \siInteq + \siExeq.
	\end{equation}
	For the steady-state $(\siInteq+\siExeq, \sveq)$ relation in \eqref{eq:Control:no_power_steady_state} to be \ac{EIP}, $\sY \le 0$ is required \cite[Def.~6.1]{Khalil2002}, which means that the load must be monotone (see \autoref{rem:Problem:load_IFP}).
\end{proof}
There is thus no controller subject to $\lim_{t\to\infty} \sii = 0$ which can ensure the desired \ac{EIP} property for the buses in $\setF$ in the presence of non-monotone loads. Instead of passivating the bus, we therefore propose a voltage following controller which dampens the transient response of a bus voltage when a disturbance occurs, i.e.,
\begin{subequations} \label{eq:Control:Voltage_following}
	\begin{align}
		\sIntdot &= \stau(\salpha\sv - \sInt) , \\
		\svVSC &= -(\salpha-1)\sv - \sbeta \sii + \sInt - \tfrac{\sdelta \sL}{\sC} \siC ,
	\end{align}
\end{subequations}
where $\salpha$, $\sbeta$, $\sdelta \in \RealsPos$ are control gains chosen to inject damping onto the bus voltage $\sv$, the filter current $\sii$, and the capacitor current $\siC$, respectively. Furthermore, dynamics of $\sIntdot$ represent a low-pass filter where the time constant $\stau \in \RealsPos$ determines how quickly $\sInt$ follows the scaled bus voltage $\salpha\sv$ (see \autoref{fig:Problem:DC_Elements}). Note that \autoref{rem:Control:Practical_derivative} applies to the voltage following controller \eqref{eq:Control:Voltage_following} as well.

Similar to the voltage setting controller in \eqref{eq:Control:Voltage_setting}, the proportional damping factors $\salpha$, $\sbeta$, and $\sdelta$ allow the voltage following controller to react quickly to disturbances. Unlike the voltage setting controller, however, no voltage setpoint is provided in \eqref{eq:Control:Voltage_following}. Instead, the steady-state control output $\svVSCeq$ should be set such that \autoref{obj:Problem:Cluster_Passivity}.2 is achieved, i.e., $\sieq = 0$. This is attained via the integrator state $\sInt$, which follows the scaled bus voltage, resulting in $\svVSC \to \sv$ as $\sii \to 0$.

\begin{figure}[!t]
	\centering
	\begin{subfigure}[b]{0.38\columnwidth}
		\centering
		\includegraphics[width=\textwidth]{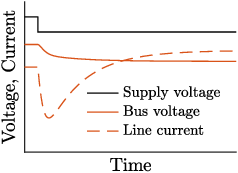}
		\caption{States vs.\ time}
	\end{subfigure}
	\begin{subfigure}[b]{0.5\columnwidth}
		\centering
		\includegraphics[width=\textwidth]{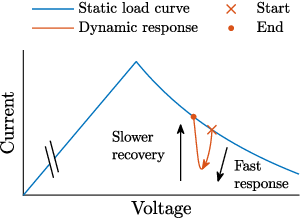}
		\caption{Voltage vs.\ current}
	\end{subfigure}
	\caption{Simulated line current and bus voltage for a transmission line connecting a bus with a static P load and a voltage following controller \eqref{eq:Control:Voltage_following} to a ideal voltage supply, (a) shows the trajectories over time and (b) plots the bus voltage vs. the line current.}
	\label{fig:Problem:Load_Curve}
\end{figure}
The control action is demonstrated with the setup in \autoref{fig:Problem:DC_Elements} by equipping the bus \eqref{eq:Problem:bus_F_dynamics_passive_load} with a voltage following controller \eqref{eq:Control:Voltage_following} and connecting the bus via a transmission line to an ideal voltage supply. The state trajectories resulting from a step change to the supply voltage are shown in \autoref{fig:Problem:Load_Curve}, where the bus voltage and the transmission line current are plotted against time and against each other. In steady state, the current delivered by the transmission line corresponds to the static load curve at the bus, since $\sieq = 0$ for the voltage following controller. After the supply voltage set, the line current quickly deviates from the load curve, indicating the stabilising control action from the proportional damping in \eqref{eq:Control:Voltage_following}. Thereafter, the line current increases and approaches the static load curve as $\sInt$ changes to ensure $\svVSC \to \sv$ and $\sii \to 0$.

Recalling \autoref{ass:Problem:Cluster_equilibrium}, we now apply the voltage following controller \eqref{eq:Control:Voltage_following} to the bus dynamics \eqref{eq:Problem:bus_F_dynamics_passive_load} and show that \autoref{obj:Problem:Cluster_Passivity}.2 is achieved.
\begin{proposition} \label{prop:Control:No_power_following}
	A voltage following bus \eqref{eq:Problem:bus_F_dynamics_passive_load} controlled with \eqref{eq:Control:Voltage_following} injects no power in steady state, i.e., $\sieq = 0$. Furthermore, applying the coordinate transform in \eqref{eq:Control:split_currents} and shifting the system w.r.t.\ $\sInteq = \salpha \sveq$, $\siaeq$, $\sveq$, $\siInteq$ and $\siExeq$, with $\sieq = \siaeq - \sdelta\sveq = 0$, yields
	\begin{align}
		\label{eq:Control:Shifted_no_power_system}
		\begin{bmatrix}
			\sIntedot \\ \sL\siaedot \\ \sCeq\svedot
		\end{bmatrix} \! {=}& \! \begin{bmatrix}
			-\stau & 0 & \stau \salpha \\
			1 & -\sbetaMod & -\salphaMod \\
			0 & 1 & \sY - \sdelta
		\end{bmatrix} \!\! \begin{bmatrix}
			\sInte \\ \siae \\ \sve
		\end{bmatrix} \!\:\! + \!\!\: \begin{bmatrix}
			0 \\ 0 \\ \siInte + \siExe
		\end{bmatrix} , \\
		\label{eq:Control:modified_gain_alpha_beta}
		\sbetaMod \coloneqq& \, \sR + \sbeta, \quad
		\salphaMod \coloneqq \,\salpha - \sbetaMod\sdelta .
	\end{align}
\end{proposition}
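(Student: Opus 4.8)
The plan is to follow closely the argument in the proof of \autoref{prop:Control:Closed_loop_voltage_setting}, since the controller \eqref{eq:Control:Voltage_following} shares the structure of \eqref{eq:Control:Voltage_setting}. There are two things to establish: that the equilibrium satisfies $\sieq = 0$, and that the transformed, shifted dynamics take the claimed form \eqref{eq:Control:Shifted_no_power_system}.

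First I would show $\sieq = 0$. Evaluating \eqref{eq:Control:Voltage_following} at an equilibrium (which exists by \autoref{ass:Problem:Cluster_equilibrium}), the relation $\sIntdot = 0$ forces $\sInteq = \salpha \sveq$ because $\stau > 0$, and the capacitor current vanishes in steady state since $\sC\svdot = \siC$ and $\svdot = 0$. Hence the steady-state control output reduces to $\svVSCeq = -(\salpha-1)\sveq - \sbeta\sieq + \salpha\sveq = \sveq - \sbeta\sieq$. Substituting this into the steady-state filter-current equation $0 = -\sR\sieq - \sveq + \svVSCeq$ obtained from \eqref{eq:Problem:bus_F_dynamics_passive_load} yields $0 = -(\sR+\sbeta)\sieq$, so $\sieq = 0$ because $\sR + \sbeta > 0$. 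This is consistent with the stated shift point, as $\siaeq = \sieq + \sdelta\sveq = \sdelta\sveq$.

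Next I would carry out the coordinate change \eqref{eq:Control:split_currents}. Differentiating $\sia = \sii + \sdelta\sv$, scaling by $\sL$, and using $\sC\svdot = \siC$ together with the $\sL\sidot$ equation of \eqref{eq:Problem:bus_F_dynamics_passive_load} gives $\sL\siadot = -\sR\sii - \sv + \svVSC + \tfrac{\sdelta\sL}{\sC}\siC$. Substituting $\svVSC$ from \eqref{eq:Control:Voltage_following} cancels the $\tfrac{\sdelta\sL}{\sC}\siC$ terms and, after collecting like terms and eliminating $\sii = \sia - \sdelta\sv$, produces $\sL\siadot = \sInt - \sbetaMod\sia - \salphaMod\sv$ with $\sbetaMod$, $\salphaMod$ as in \eqref{eq:Control:modified_gain_alpha_beta}. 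Rewriting the voltage equation of \eqref{eq:Problem:bus_F_dynamics_passive_load} in the new coordinate gives $\sCeq\svdot = \sia + (\sY - \sdelta)\sv + \siInt + \siEx$, while the filter dynamics $\sIntdot = \stau(\salpha\sv - \sInt)$ already have the required form.

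Finally, since the resulting system is linear in $(\sInt, \sia, \sv)$ with $\siInt + \siEx$ as the input, subtracting the equilibrium relations (using $\sieq = 0$ and $\sInteq = \salpha\sveq$) turns it into the error system \eqref{eq:Control:Shifted_no_power_system}. I do not anticipate a substantive obstacle: the manipulations are routine algebra mirroring \autoref{prop:Control:Closed_loop_voltage_setting}, and the only step requiring care is the steady-state argument for $\sieq = 0$, where the vanishing of $\siC$ at equilibrium must be used to evaluate $\svVSCeq$ correctly.
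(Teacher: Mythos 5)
Your proposal is correct and follows essentially the same route as the paper: establish $\sInteq = \salpha\sveq$ from the integrator dynamics, use $\sveqdot = 0$ (equivalently $\siC = 0$ at steady state) in the filter-current equation to conclude $0 = -(\sR+\sbeta)\sieq$ and hence $\sieq = 0$, then apply the coordinate change \eqref{eq:Control:split_currents} exactly as in \autoref{prop:Control:Closed_loop_voltage_setting} and shift the resulting linear system to obtain \eqref{eq:Control:Shifted_no_power_system}. The only cosmetic difference is that you evaluate $\svVSCeq$ first and substitute it into the steady-state current equation, whereas the paper writes the closed-loop matrix form and reads off the same relation; the algebra is identical.
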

\begin{proof}
	Applying \eqref{eq:Control:Voltage_following} to the bus \eqref{eq:Problem:bus_F_dynamics_passive_load} yields
	\begin{equation} \label{eq:Control:controlled_following_bus}
		\begin{bmatrix}
			\sIntdot \\ \sL\sidot \\ \sCeq\svdot
		\end{bmatrix} \! = \! \begin{bmatrix}
			-\stau & 0 & \stau \salpha \\
			1 & -\sbetaMod & -\salpha \\
			0 & 1 & \sY
		\end{bmatrix} \!\! \begin{bmatrix}
			\sInt \\ \sii \\ \sv
		\end{bmatrix} \!\:\! + \!\!\: \begin{bmatrix}
			0 \\ -\tfrac{\sdelta \sL}{\sC} \siC \\ \siInt + \siEx
		\end{bmatrix} .
	\end{equation}
	Observe from \eqref{eq:Control:controlled_following_bus} in steady state that $\cramped{\sInteq = \salpha\sveq}$. Applying this to the second state in \eqref{eq:Control:controlled_following_bus} yields
	\begin{equation}
		\sL\sieqdot = 0 = \sInteq - \sbetaMod\sieq - \salpha\sveq - \sdelta\sL\sveqdot \iff 0 = - \sbetaMod\sieq ,
	\end{equation}
	since $\sveqdot = 0$. Thus, the steady state of \eqref{eq:Control:controlled_following_bus} satisfies $\sieq = 0$. Finally, similarly to the proof of \autoref{prop:Control:Closed_loop_voltage_setting}, by applying the coordinate transformation in \eqref{eq:Control:split_currents} to \eqref{eq:Control:controlled_following_bus} and shifting the system w.r.t.\ the equilibrium, we obtain the shifted dynamics in \eqref{eq:Control:Shifted_no_power_system}.
\end{proof}

	\section{Interconnected Passive Clusters} \label{sec:Clustering}
In order to achieve \autoref{obj:Problem:Cluster_Passivity}, we now use the results in \autoref{prop:Control:Closed_loop_voltage_setting} and \autoref{prop:Control:No_power_following} and analyse the passivity of an entire cluster. Whereas the voltage following buses are \ac{OSEIP} (see \autoref{thm:Control:V_setting_passive}) and the transmission lines \eqref{eq:Problem:line_dynamics} are similarly \ac{OSEIP} \cite{Malan2023}, the voltage following buses are not \ac{EIP} (see \autoref{prop:Control:no_power_no_eip}). 
We therefore investigate an entire cluster in this section instead of just the isolated buses. Specifically, we propose an \ac{LMI} which is used to verify an \ac{OSEIP} property for a cluster (see \autoref{sec:Clustering:Passivity}) before analysing the stability for a DC microgrid comprising multiple interconnected \ac{OSEIP} clusters (see \autoref{sec:Clustering:Stability}).
\subsection{Cluster OS-EIP} \label{sec:Clustering:Passivity}
%

Consider a cluster $m \in \setM$ with the uncontrolled dynamics in \eqref{eq:Problem:cluster_dynamics_shifted} comprising the set of voltage setting buses $\setS[m]$, and the set of voltage following buses $\setF[m]$. Applying the controllers \eqref{eq:Control:Voltage_setting} and \eqref{eq:Control:Voltage_following} results in the shifted cluster dynamics \eqref{eq:Clustering:cluster_dynamics} at the top of the next page,
\begin{figure*}[!t]
	\begin{equation} \label{eq:Clustering:cluster_dynamics}
		\scalebox{0.8}{$
			\setUnderBraceMatrix{\mQ[m]\vxedot[m]}{\begin{bmatrix} 
					\mLS[m] \vIntSedot[m]\\ \mLS[m] \viSaedot[m] \\ \mCS[m] \vvSedot[m] \\ \vIntFedot[m] \\ \mLF[m] \viFaedot[m] \\ \mCF[m] \vvFedot[m] \\ \mLTx[m] \viTxedot[m]
			\end{bmatrix}} {=} \setUnderBraceMatrix{\mA[m]}{\begin{bmatrix}
					\matrix{0} & \matrix{0} & \momegaVMod[m] & \matrix{0} & \matrix{0} & \matrix{0} & \matrix{0} \\
					-\Ident[S_m] & -\mpsiIMod[m] & -\mpsiVMod[m] & \matrix{0} & \matrix{0} & \matrix{0} & \matrix{0} \\
					\matrix{0} & \Ident[S_m] & \mYS[m] {-} \mdeltaS[m] & \matrix{0} & \matrix{0} & \matrix{0} & \mES[m] \\
					\matrix{0} & \matrix{0} & \matrix{0} & -\mtau[m] & \matrix{0} & \mtau[m]\malpha[m] & \matrix{0} \\
					\matrix{0} & \matrix{0} & \matrix{0} & \Ident[F_m] & -\mbetaMod[m] & -\malphaMod[m] & \matrix{0} \\
					\matrix{0} & \matrix{0} & \matrix{0} & \matrix{0} & \Ident[F_m] & \mYF {-} \mdeltaF[m] & \mEF[m] \\
					\matrix{0} & \matrix{0} & -\mEST[m] & \matrix{0} & \matrix{0} & -\mEFT[m] & {-}\mRTx[m]
			\end{bmatrix}} \!\! \setUnderBraceMatrix{\vxe[m]}{\begin{bmatrix}
					\vIntSe[m] \\ \viSae[m] \\ \vvSe[m] \\ \vIntFe[m] \\ \viFae[m] \\ \vvFe[m] \\ \viTxe[m]
			\end{bmatrix}} {+} \setUnderBraceMatrix{\mB[m]}{\begin{bmatrix}
					\matrix{0} & \matrix{0} \\
					\matrix{0} & \matrix{0} \\
					\Ident[S_m] & \matrix{0} \\
					\matrix{0} & \matrix{0} \\
					\matrix{0} & \matrix{0} \\
					\matrix{0} & \Ident[F_m] \\
					\matrix{0} & \matrix{0} \\
			\end{bmatrix}} \!\! \setUnderBraceMatrix{\vuce[m]}{\begin{bmatrix}
					\viExSe[m] \\ \viExFe[m]
			\end{bmatrix}}, 
			\quad \vyce[m] = \mBT[m] \vxe[m] = \! \begin{bmatrix}
				\vvSe[m] \\ \vvFe[m]
			\end{bmatrix}\!$}
	\end{equation}
	\hrulefill
\end{figure*}
where the control parameters are stacked diagonally to obtain the matrices $\momegaVMod[m], \mpsiInt[m], \mpsiIMod[m], \mpsiVMod[m], \mdeltaS[m]$ for the buses in $\setS[m]$ and $\mtau[m], \malpha[m], \mbetaMod[m], \malphaMod[m], \mdeltaF[m]$ for the buses in $\setF[m]$. Using the shifted dynamics in \eqref{eq:Clustering:cluster_dynamics}, we now investigate an \ac{OSEIP} property for the cluster.
\begin{theorem} \label{thm:Clustering:Cluster_passivity}
	The shifted cluster dynamics \eqref{eq:Clustering:cluster_dynamics} are \ac{OSEIP} w.r.t.\ the input-output pair $(\vuce[m], \vyce[m])$, if there is a matrix $\mP[m] = \mPT[m] \posDef 0$ and a $\siOut > 0$ such that 
	\begin{subequations} \label{eq:Clustering:Cluster_passive_cond}
		\begin{align}
			&\mQ[m] \mP[m] \mA[m] + \mAT[m] \mP[m] \mQ[m] + \siOut\mB\mBT \negSemiDef 0, \\
			&\mQ[m] \mP[m] \mB[m] = \mB[m] .
		\end{align}
	\end{subequations}
\end{theorem}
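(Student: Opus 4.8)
My plan is to treat this as a Kalman--Yakubovich--Popov-type statement for a \emph{linear} system: although the loads are nonlinear, they enter \eqref{eq:Clustering:cluster_dynamics} only through the exogenous input $\vuce[m]$, so the cluster dynamics themselves are linear time-invariant, and the two conditions in \eqref{eq:Clustering:Cluster_passive_cond} are exactly the algebraic conditions for a quadratic storage function to certify output-strict passivity on the port $(\vuce[m], \vyce[m])$. Concretely, I would propose
\begin{equation*}
	\sH[m](\vxe[m]) = \tfrac12\, \vxeT[m]\, \mQ[m]\mP[m]\mQ[m]\, \vxe[m] ,
\end{equation*}
where $\mQ[m]$ is the block-diagonal (hence symmetric, positive definite) matrix with blocks $\mLS[m], \mLS[m], \mCS[m], \Ident[F_m], \mLF[m], \mCF[m], \mLTx[m]$ read off from \eqref{eq:Clustering:cluster_dynamics}. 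Since $\mP[m] = \mPT[m] \posDef 0$, the matrix $\mQ[m]\mP[m]\mQ[m]$ is symmetric and $\vxeT[m]\mQ[m]\mP[m]\mQ[m]\vxe[m] = (\mQ[m]\vxe[m])^\Transpose \mP[m] (\mQ[m]\vxe[m]) > 0$ for $\vxe[m] \ne \vec 0$ (as $\mQ[m]$ is invertible), so $\sH[m]$ is a \classC{1} storage function vanishing at the origin; and because $\mQ[m], \mA[m], \mB[m]$ do not depend on the equilibrium, the same $\mP[m]$ serves uniformly over $\setXeq[m]$, as \autoref{def:Passive:EIP} demands.

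Next I would differentiate $\sH[m]$ along \eqref{eq:Clustering:cluster_dynamics}. Writing $\mQ[m]\vxedot[m] = \mA[m]\vxe[m] + \mB[m]\vuce[m]$ gives
\begin{equation*}
	\sHdot[m] = \vxeT[m]\mQ[m]\mP[m]\bigl(\mQ[m]\vxedot[m]\bigr) = \vxeT[m]\mQ[m]\mP[m]\mA[m]\vxe[m] + \vxeT[m]\mQ[m]\mP[m]\mB[m]\vuce[m] .
\end{equation*}
The equality constraint $\mQ[m]\mP[m]\mB[m] = \mB[m]$ from \eqref{eq:Clustering:Cluster_passive_cond} collapses the input term to $\vxeT[m]\mB[m]\vuce[m] = (\mBT[m]\vxe[m])^\Transpose\vuce[m] = \vuceT[m]\vyce[m]$, and symmetrising the remaining term and applying the matrix inequality in \eqref{eq:Clustering:Cluster_passive_cond} yields
\begin{equation*}
	\vxeT[m]\mQ[m]\mP[m]\mA[m]\vxe[m] = \tfrac12 \vxeT[m]\bigl(\mQ[m]\mP[m]\mA[m] + \mAT[m]\mP[m]\mQ[m]\bigr)\vxe[m] \le -\tfrac{\siOut}{2}\, \vxeT[m]\mB[m]\mBT[m]\vxe[m] = -\tfrac{\siOut}{2}\,\vyceT[m]\vyce[m] .
\end{equation*}
Adding the two estimates gives $\sHdot[m] \le \vuceT[m]\vyce[m] - \tfrac{\siOut}{2}\vyceT[m]\vyce[m]$, i.e.\ \eqref{eq:Clustering:cluster_dynamics} is dissipative w.r.t.\ $\vuceT[m]\vyce[m] - \tilde\rho\,\vyceT[m]\vyce[m]$ with $\tilde\rho = \siOut/2 > 0$, which is exactly \ac{OSEIP} on $(\vuce[m], \vyce[m])$ by \autoref{def:Passive:EIP}.

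The one delicate choice is the storage function. Taking it as $\tfrac12\vxeT[m]\mQ[m]\mP[m]\vxe[m]$ in analogy with \autoref{thm:Control:V_setting_passive} would tacitly require $\mQ[m]\mP[m]$ to be symmetric, which is \emph{not} among the hypotheses of \eqref{eq:Clustering:Cluster_passive_cond}; inserting the extra factor $\mQ[m]$ removes this hidden restriction and makes both conditions line up with the dissipation inequality term by term (the factor $\tfrac12$ on $\siOut$ being immaterial, since \autoref{def:Passive:EIP} only asks for \emph{some} positive output-feedback index). The rest is routine — positive definiteness of the storage, the symmetrisation identity, and substitution of the output relation — and nothing about the loads is needed, because they sit entirely in $\vuce[m]$. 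The only bookkeeping left is to confirm that $\mQ[m], \mA[m], \mB[m]$ in \eqref{eq:Clustering:cluster_dynamics} are the block-diagonal assembly of \autoref{prop:Control:Closed_loop_voltage_setting}, \autoref{prop:Control:No_power_following}, and the line dynamics \eqref{eq:Problem:line_dynamics}, which is immediate from the constructions there.
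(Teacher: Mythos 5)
Your proposal is correct and follows essentially the same route as the paper: the paper also takes the storage function $\sH[m] = \tfrac12\,\vxeT[m]\mQ[m]\mP[m]\mQ[m]\vxe[m]$, differentiates along \eqref{eq:Clustering:cluster_dynamics}, and reads off \eqref{eq:Clustering:Cluster_passive_cond} from the dissipation inequality. Your explicit handling of the factor $\tfrac12$ (yielding output-strictness index $\siOut/2$, which suffices since \autoref{def:Passive:EIP} only requires some positive index) is in fact slightly more careful than the paper's wording, but the argument is the same.
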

\begin{proof}
	Consider the positive definite storage function
	\begin{equation} \label{eq:Clustering:Cluster_Storage}
		\sH[m](\vxe[m]) = \frac{1}{2} \, \vxeT[m] \mQ[m] \mP[m] \mQ[m] \vxe[m] ,
	\end{equation}
	for the positive definite diagonal matrix $\mQ[m]$ defined in \eqref{eq:Clustering:cluster_dynamics}.
	Evaluating the \ac{OSEIP} requirement $\cramped{\sHdot[m] \le \vyceT[m] \vuce[m] - \siOut\vyceT[m] \vyce[m]}$ for \eqref{eq:Clustering:cluster_dynamics} leads to
	\begin{multline} \label{eq:Clustering:Cluster_passivity_ineq}
		\!\!\!\!\frac{1}{2} \, \vxeT[m] \left( \mQ[m] \mP[m] \mA[m] + \mAT[m] \mP[m] \mQ[m] \right) \vxe[m] + \vxeT[m] \mQ[m] \mP[m] \mB[m] \vuce[m] \\ \le \vxeT \mB[m] \vuce[m] - \siOut\vxeT[m] \mB[m] \mBT[m] \vxe[m] ,
	\end{multline}
	from which \eqref{eq:Clustering:Cluster_passive_cond} follows directly.
\end{proof}
The \ac{LMI} in \autoref{thm:Clustering:Cluster_passivity} provides an efficient means for verifying the \ac{OSEIP} of a cluster numerically.
We note that the port of \eqref{eq:Clustering:cluster_dynamics} comprises the exogenous current input and bus voltage output of each bus in the cluster. This allows for interconnections at \emph{any} bus in the cluster.
Furthermore, verifying \autoref{thm:Clustering:Cluster_passivity} for a cluster demonstrates that the effects of non-monotone loads at buses without power available in steady state can be compensated by other voltage setting buses in the cluster.

Building on the \ac{OSEIP} of the DC cluster investigated in \autoref{thm:Clustering:Cluster_passivity}, we now consider the dynamics of the cluster when its input and output are zero.
\begin{proposition} \label{prop:Clustering:Cluster_Zero_State}
	The cluster dynamics \eqref{eq:Clustering:cluster_dynamics} are \ac{ZSD}.
\end{proposition}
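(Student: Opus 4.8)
The plan is to exploit the cascade structure that the zero-output constraint imposes on \eqref{eq:Clustering:cluster_dynamics}. Establishing \ac{ZSD} amounts to showing that every trajectory with $\vuce[m] \equiv \vec{0}$ and $\vyce[m] \equiv \vec{0}$ satisfies $\vxe[m] \to \vec{0}$. Since $\vyce[m] = (\vvSe[m], \vvFe[m])$, the hypothesis forces $\vvSe[m] \equiv \vec{0}$ and $\vvFe[m] \equiv \vec{0}$ (hence $\vvSedot[m] \equiv \vec{0}$, $\vvFedot[m] \equiv \vec{0}$), together with $\viExSe[m] \equiv \vec{0}$ and $\viExFe[m] \equiv \vec{0}$. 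Substituting these into \eqref{eq:Clustering:cluster_dynamics} makes several of its block rows collapse.

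First I would inspect the line-current row and the $\vIntFe[m]$-row. With the voltage errors set to zero they become $\mLTx[m]\viTxedot[m] = -\mRTx[m]\viTxe[m]$ and $\vIntFedot[m] = -\mtau[m]\vIntFe[m]$. Since $\mLTx[m]$, $\mRTx[m]$ and $\mtau[m]$ are positive-definite diagonal matrices (all line resistances and inductances, and all filter time constants, being strictly positive), both are globally exponentially stable, so $\viTxe[m] \to \vec{0}$ and $\vIntFe[m] \to \vec{0}$ exponentially, and consequently $\viTxedot[m] \to \vec{0}$ as well. Next, because $\vvSedot[m] \equiv \vec{0}$ and $\vvFedot[m] \equiv \vec{0}$ while the associated forcing terms vanish, the $\vvSe[m]$- and $\vvFe[m]$-rows of \eqref{eq:Clustering:cluster_dynamics} reduce to the algebraic identities $\viSae[m] = -\mES[m]\viTxe[m]$ and $\viFae[m] = -\mEF[m]\viTxe[m]$; with the previous step this gives $\viSae[m] \to \vec{0}$, $\viFae[m] \to \vec{0}$, and, differentiating, $\viSaedot[m] = -\mES[m]\viTxedot[m] \to \vec{0}$.

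It remains to dispatch the voltage-setting integrator error $\vIntSe[m]$, which---apart from $\vvSe[m]$ and $\vvFe[m]$, already zero by hypothesis---is the only block not yet shown to converge; this is the one slightly delicate point, since $\vIntSe[m]$ carries no dissipation of its own. I would recover it from the $\viSae[m]$-row, which under the zero-output constraint reads $\mLS[m]\viSaedot[m] = -\vIntSe[m] - \mpsiIMod[m]\viSae[m]$: as both $\viSaedot[m]$ and $\viSae[m]$ tend to zero, so does $\vIntSe[m]$. (Consistently, the $\vIntSe[m]$-row gives $\mLS[m]\vIntSedot[m] = \momegaVMod[m]\vvSe[m] = \vec{0}$, so $\vIntSe[m]$ is in fact constant, hence identically zero.) Collecting all blocks, $\vxe[m] \to \vec{0}$, which proves \ac{ZSD}. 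The main obstacle is simply to notice that $\vIntSe[m]$ must be obtained indirectly, after $\viSae[m]$ and its derivative have been shown to decay; the rest follows from the triangular structure of the constrained dynamics together with the positivity of the relevant parameter matrices.
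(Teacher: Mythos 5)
Your proof is correct, and it reaches the conclusion by a slightly different route than the paper at the one delicate point, the voltage-setting integrator error. The paper first disposes of $\vIntSe[m]$ directly (it is constant because $\vIntSedot[m] = \momegaVMod[m]\vvSe[m] \equiv \vec{0}$, and is argued to equal zero since the error coordinates are centred at the equilibrium), and then treats the remaining rows as a cascade of exponentially stable filters driven by decaying inputs: $\viSae[m] \to \vec{0}$ from its own damped dynamics with $\mpsiIMod[m] \posDef 0$, $\vIntFe[m], \viTxe[m] \to \vec{0}$ from their diagonal dynamics, and finally $\viFae[m] \to \vec{0}$ from the $\vIntFe[m]$-driven row. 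You instead exploit the algebraic constraints that $\vvSe[m] \equiv \vec{0}$, $\vvFe[m] \equiv \vec{0}$ impose on the voltage rows, namely $\viSae[m] = -\mES[m]\viTxe[m]$ and $\viFae[m] = -\mEF[m]\viTxe[m]$, so that the filter-current errors inherit the exponential decay of the line currents, and you then back out $\vIntSe[m] \to \vec{0}$ from the $\viSae[m]$-row after differentiating the constraint; constancy of $\vIntSe[m]$ then upgrades this to identically zero. What your route buys is an explicit, dynamics-based argument for the vanishing of $\vIntSe[m]$ (which carries no damping of its own and is dispatched rather tersely in the paper), at the modest price of differentiating an algebraic identity and needing $\viTxedot[m] \to \vec{0}$; the paper's route avoids derivatives of constraints and reads as a pure triangular cascade. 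Both arguments establish exactly the convergence required for \ac{ZSD}.
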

\begin{proof}
	Setting $\vuce[m] \equiv \vec{0}$ and $\vyce[m] \equiv \vec{0}$ in \eqref{eq:Clustering:cluster_dynamics} yields the autonomous zero state dynamics
	%
	{\allowdisplaybreaks
	\begin{subequations} \label{eq:Clustering:zero_state_dynamics}
		\begin{align} 
			\label{eq:Clustering:zero_state_dynamics_IntS}
			\mLS[m] \vIntSedot[m] &= \vec{0}, \\
			\label{eq:Clustering:zero_state_dynamics_iSa}
			\mLS[m] \viSaedot[m] &= - \vIntSe[m] - \mpsiIMod[m] \viSae[m], \\
			\label{eq:Clustering:zero_state_dynamics_IntF}
			\vIntFedot[m] &= -\mtau[m] \vIntFe[m], \\
			\label{eq:Clustering:zero_state_dynamics_iFa}
			\mLF[m] \viFaedot[m] &=  \vIntFe[m] -\mbetaMod[m] \viFae[m], \\ 
			\label{eq:Clustering:zero_state_dynamics_iTx}
			\mLTx[m] \viTxedot[m] &= - \mRTx[m] \viTxe[m],
		\end{align}
	\end{subequations}}
	where $\vvSe[m] \equiv \vec{0}$ and $\vvFe[m] \equiv \vec{0}$.
	Since $\vIntSedot[m] = \vec{0}$ from \eqref{eq:Clustering:zero_state_dynamics_IntS} and since the error variables are shifted to the equilibrium, $\vIntSe[m] = \vec{0}$.
	It then follows from \eqref{eq:Clustering:zero_state_dynamics_iSa} that $\lim_{t\to\infty} \viSae[m] = \vec{0}$, since $\mpsiIMod[m] \posDef 0$ (see \eqref{eq:Control:modified_gain_psi_omega}).
	Similarly, from \eqref{eq:Clustering:zero_state_dynamics_IntF} and \eqref{eq:Clustering:zero_state_dynamics_iTx} it can be seen that $\vIntFe[m]$ and $\viTxe[m]$ converge to zero, respectively.
	Finally, \eqref{eq:Clustering:zero_state_dynamics_iFa} ensures that $\lim_{t\to\infty} \viFae[m] = \vec{0}$ since $\vIntFe[m]$ also converges to zero.
	%
\end{proof}
\begin{remark} \label{rem:Clustering:LMI_feasiblity}
	It can be observed that several factors influence the feasibility of the \ac{LMI} in \autoref{thm:Clustering:Cluster_passivity}:
	\begin{itemize}
		\item Larger damping factors in $\mdeltaS$ and $\mdeltaF$ improves the feasibility but slows the closed-loop dynamics.
		\item Larger load parameters in $\mYS$ and $\mYF$ (see \eqref{eq:Problem:load_eip} and \autoref{fig:Problem:Split_Load}) restrict the \ac{LMI} feasibility.
		\item Increasing the number of lines in $\setE[m]$ improves the \ac{LMI} feasibility at the cost of increasing the \ac{LMI} order.
		\item Larger line resistances in $\mRTx$ restrict the feasibility of the \ac{LMI}.
		\item Increasing the electrical distance between a voltage following bus in $\setF[m]$ and its closest voltage setting bus neighbour in $\setS[m]$ restricts the \ac{LMI} feasibility.
	\end{itemize}
	Whereas the first two points relate to passivity properties of the individual buses (see \eqref{eq:Control:V_set_Condition_1} in \autoref{thm:Control:V_setting_passive}), the last three points considers the topology of the cluster which is investigated more closely in the sequel.
\end{remark}
\begin{remark} \label{rem:Clustering:Robust_passivity}
	The \ac{OSEIP} result in \autoref{thm:Clustering:Cluster_passivity} can be made robust against parameter uncertainty by considering a range possible parameter values and extending the \ac{LMI} to a robust optimisation problem (see \cite[p.~32]{Amato2006}). Specifically, using a uniform upper bound for $\mYS$ and $\mYF$ ensures robust stability against any load with a smaller $\sY[n]$ value (see \eqref{eq:Problem:load_eip}). Furthermore, this robust stability extends to any time-varying load that does not exceed the lack of \ac{IFP} (see \autoref{rem:Problem:load_IFP}) associated with the verified upper bound for $\mYS$ and $\mYF$, as long as the load is also \ac{ZSD}.
\end{remark}
\begin{remark} \label{rem:Clustering:Cluster_size}
	Verifying \autoref{thm:Clustering:Cluster_passivity} requires solving an \ac{LMI} of order $3|\setN|+|\setE|$. A computationally efficient clustering of a microgrid will therefore limit the buses with voltage setting controllers to one per cluster, i.e., $|\setS[m]| = 1$. The remaining buses with voltage following controllers can then be assigned to the cluster with the voltage setting bus to which they are electrically closest.
\end{remark}
\subsection{Clustered Microgrid Stability} \label{sec:Clustering:Stability}
Considering the established \ac{OSEIP} and \ac{ZSD} properties for a cluster \eqref{eq:Clustering:cluster_dynamics} (see \autoref{thm:Clustering:Cluster_passivity} and \autoref{prop:Clustering:Cluster_Zero_State}), we now focus on the overall microgrid consisting of the clusters in $\setM$, the static monotone load functions \eqref{eq:Problem:shifted_passive_load} at the buses in $\setN$, and the lines in $\setT$ interconnecting them. For this interconnection, we investigate stability w.r.t.\ the equilibrium
\begin{equation} \label{eq:Clustering:Microgrid_Equilibrium}
	\left\{\begin{aligned}
		&\vxe[m] = \vec{0}, \;  \vuce[m] = \vec{0}, \quad m \in \setM, \\
		&\viSetTe = \vec{0}
	\end{aligned}\right.
\end{equation}
where $\vxe[m]$ and $\vuce[m]$ are defined in \eqref{eq:Clustering:cluster_dynamics} and $\viSetTe$ denotes the vector of the current $\sie[j]$ through the transmission lines $j \in \setT$ with the dynamics \eqref{eq:Problem:line_dynamics}.
\begin{figure}[!t]
	\centering
	\resizebox{0.6\columnwidth}{!}{\begin{tikzpicture}[thick,>=latex']	
	\def\sDistx{1.75}
	\def\sDisty{1.0}
	\def\sminW{40mm}
	\def\sminWCouple{10mm}
	\def\sminHsiso{10mm}
	\def\sminHmimo{14mm}
	
	\tikzstyle{systemElement}	= [fill=white, drop shadow={opacity=0.35}]
	\tikzstyle{box}				= [draw, rectangle, systemElement, minimum height=\sminHsiso]
	\tikzstyle{boxSISO}			= [box, minimum width=\sminW]
	\tikzstyle{boxCouple}		= [box, minimum width=\sminWCouple]
	\tikzstyle{sum}				= [draw, circle, systemElement, inner sep = 0pt, minimum size = 7pt]
	
	\def\dxS{.5} 
	\def\dxM{.6} 
	\def\dyM{.35} 
	
	\def\dxOffset{.7}
	\def\dyOffset{.35}
	\def\dxOffsetC{.75}


	\coordinate(cSysCluster);
	
	\path (cSysCluster) to ++(0,2*\sDisty) coordinate (cSysLoad);
	\path (cSysCluster) to ++(0,-2*\sDisty) coordinate (cSysLine);
	\path (cSysCluster) to ++(-2*\sDistx,0) coordinate (cSum);
	
	\path (cSysCluster) to ++(-2*\sDistx,-\sDisty) coordinate (cIncLeft);
	\path (cSysCluster) to ++(2*\sDistx,-\sDisty) coordinate (cIncRight);
	
	\node[boxSISO] at (cSysCluster) (SysCluster) {\large Clusters \eqref{eq:Clustering:cluster_dynamics} in $\setM$};
	\node[boxSISO] at (cSysLoad) (SysLoad) {\large Loads \eqref{eq:Problem:shifted_passive_load} in $\setN$};
	\node[boxSISO] at (cSysLine) (SysLine) {\large Lines \eqref{eq:Problem:line_dynamics} in $\setT$};
	
	\node[boxCouple] at (cIncLeft) (IncLeft) {\large$\mESetT$};
	\node[boxCouple] at (cIncRight) (IncRight) {\large$\mESetTT$};

	\node [sum] at (cSum) (Sum) {};

	
	\draw[->] (SysCluster.east) node[above right] {$\vyce$} -| (IncRight.north);
	\draw[->] (IncRight.south) |- (SysLine.east) node[above right] {$\vuSetTe$};
	\draw[->] (SysLine.west) node[above left] {$\vySetTe$} -| (IncLeft.south);
	\draw[->] (IncLeft.north) -- (Sum.south) node [below left] {$-$};
	\draw[->] (Sum.east) -- (SysCluster.west) node[above left] {$\vuce$};
	
	\draw[->] (SysCluster.east -| IncRight.north) |- (SysLoad.east) node[above right] {$\vuLe$};
	\draw[->] (SysLoad.west) node[above left] {$\vyLe$} -| (Sum.north) node [above left] {$-$};
	
	\node[{fill=black,circle,minimum size=2.5pt,inner sep = 0}] at (SysCluster.east -| IncRight.north) () {};

\end{tikzpicture}}
	\caption{Interconnection of the clusters with the external lines and the monotone load functions.}
	\label{fig:Clustering:interconnection_diagram}
\end{figure}
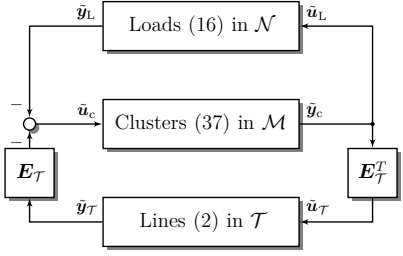
\begin{theorem} \label{thm:Clustering:Microgrid_stability}
	Consider a DC microgrid with $M$ clusters described by the dynamics \eqref{eq:Clustering:cluster_dynamics} along with the monotone load functions \eqref{eq:Problem:shifted_passive_load} at each bus $\setN$ which are interconnected by the lines \eqref{eq:Problem:line_dynamics} in $\setT$. Let \autoref{ass:Problem:Cluster_equilibrium} and the conditions of \autoref{thm:Clustering:Cluster_passivity} hold for each cluster $m \in \setM$. Then, the equilibrium \eqref{eq:Clustering:Microgrid_Equilibrium} of the DC microgrid is asymptotically stable.
\end{theorem}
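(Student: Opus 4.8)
The plan is to build a composite Lyapunov function from the cluster storage functions together with a standard RL-line storage function, exploit the power-preserving interconnection depicted in \autoref{fig:Clustering:interconnection_diagram}, and close the argument with LaSalle's invariance principle. Collecting the microgrid state as the tuple $\bigl((\vxe[m])_{m\in\setM},\, \viSetTe\bigr)$, where $\vxe[m]$ is the cluster state from \eqref{eq:Clustering:cluster_dynamics} and $\viSetTe$ stacks the currents of the lines in $\setT$, I would take
\[
	V = \sum_{m\in\setM} \sH[m](\vxe[m]) + \tfrac{1}{2}\,\viSetTeT \mLSetT \viSetTe ,
\]
with $\sH[m]$ the positive definite storage function from \eqref{eq:Clustering:Cluster_Storage} (the static loads \eqref{eq:Problem:shifted_passive_load} contribute no storage, cf.\ \autoref{rem:Problem:External_Load_EIP}). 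Since $\mLSetT \posDef 0$ and $\sH[m] \posDef 0$ for each $m$, $V$ is positive definite about the equilibrium \eqref{eq:Clustering:Microgrid_Equilibrium}.

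Next I would differentiate $V$ along trajectories. By \autoref{thm:Clustering:Cluster_passivity}, $\sHdot[m] \le \vyceT[m]\vuce[m] - \siOut\,\vyceT[m]\vyce[m]$ for every cluster, while the line dynamics \eqref{eq:Problem:line_dynamics} give $\tfrac{\mathrm d}{\mathrm dt}\bigl(\tfrac12\viSetTeT\mLSetT\viSetTe\bigr) = -\viSetTeT\mRSetT\viSetTe + \viSetTeT\vuSetTe$, where $\vuSetTe = \mESetTT\vyce$ is the stack of voltage differences across the lines in $\setT$ and $\vySetTe = \viSetTe$. The central algebraic step is the power balance at the summing junction: writing the cluster input as $\vuce = -\mESetT\viSetTe - \vyLe$, with $\vyLe = \siLPeve(\cdot)$ the stacked monotone load currents and $\vuLe = \vyce$ the bus-voltage errors, one verifies $\vyceT\vuce + \vySetTeT\vuSetTe + \vyLeT\vuLe = 0$ because the $\mESetT$ cross-terms cancel, hence $\vyceT\vuce + \vySetTeT\vuSetTe = -\vyLeT\vuLe \le 0$, the inequality following from monotonicity of $\siLPeve$ with $\siLPeve(\vec 0) = \vec 0$ (\autoref{rem:Problem:External_Load_EIP}). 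Summing the dissipation inequalities then yields
\[
	\dot V \le -\siOut\,\vyceT\vyce - \viSetTeT\mRSetT\viSetTe - \vyLeT\vuLe \le 0 ,
\]
so $V$ is a Lyapunov function and the equilibrium \eqref{eq:Clustering:Microgrid_Equilibrium} is stable.

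For asymptotic stability I would invoke LaSalle. On the set $\dot V = 0$ one has $\vyce \equiv \vec 0$ (as $\siOut > 0$) and $\viSetTe \equiv \vec 0$ (as $\mRSetT \posDef 0$); consequently $\vyLe = \siLPeve(\vec 0) = \vec 0$ and $\vuce = -\mESetT\viSetTe - \vyLe \equiv \vec 0$, so on the largest invariant subset both the cluster inputs and outputs vanish. There the cluster equations collapse to the autonomous zero-state dynamics \eqref{eq:Clustering:zero_state_dynamics} of \autoref{prop:Clustering:Cluster_Zero_State}: the internal line currents $\viTxe[m]$ and the filter states $\vIntFe[m], \viFae[m]$ decay to $\vec 0$ (using $\mtau[m],\mbetaMod[m],\mRTx[m] \posDef 0$ and boundedness on the invariant set), and the $\setS$-block output equation forces $\viSae[m] \equiv \vec 0$, whence $\vIntSedot[m] = \vec 0$ pins $\vIntSe[m] \equiv \vec 0$. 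Thus the only invariant trajectory contained in $\dot V = 0$ is the equilibrium \eqref{eq:Clustering:Microgrid_Equilibrium}, which is therefore asymptotically stable.

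The step I expect to be most delicate is twofold. First, fixing the interconnection sign conventions (the orientation of $\mESetT$ relative to the line direction assigned in \eqref{eq:Problem:line_dynamics}) so that the cross-terms genuinely cancel in the power balance rather than merely "almost" cancel. Second, making the LaSalle step airtight: the cluster \ac{ZSD} property of \autoref{prop:Clustering:Cluster_Zero_State} only yields convergence, not immediate vanishing, so on the invariant set one must first argue that the constant integrator states are pinned to their equilibrium values before all remaining states can be concluded zero. Finally, since the bus voltages are physically constrained to $\RealsPosDef$, the conclusion is local asymptotic stability, consistent with the statement of the theorem.
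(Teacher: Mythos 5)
Your proposal is correct and follows essentially the same route as the paper: the same decomposition into \ac{OSEIP} clusters, static monotone loads, and the \ac{OSEIP} lines in $\setT$, coupled through the skew-symmetric incidence-based interconnection of \autoref{fig:Clustering:interconnection_diagram}. The only difference is that the paper closes by citing a standard result on skew-symmetric interconnections of passive systems (van der Schaft, Prop.~4.4.15), whereas you inline its proof via the composite storage function and LaSalle---including the invariance bookkeeping (e.g.\ establishing $\viTxe[m]\equiv\vec{0}$ on the largest invariant set rather than merely decaying, so that the output constraint pins $\viSae[m]$ and then $\vIntSe[m]$) which you correctly flag as the delicate step and which that citation handles.
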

\begin{proof}
	Consider the interconnection of the clusters, monotone load functions and lines as depicted in \autoref{fig:Clustering:interconnection_diagram}, where 
	\begin{equation}
		\mESetT \in \lbrace-1,0,1\rbrace^{\left|\bigcup_{m \in \setM}\setN[m]\right|\times|\setT|} 	
	\end{equation}
	is the incidence matrix describing the interconnection of the buses of the clusters in $\setM$ and the lines in $\setT$.
	Let $\vuceT = [\vuceT[1],\dots,\vuceT[M]]$ be the stacked input current errors, and $\vyceT = [\vyceT[1],\dots,\vyceT[M]]$ the stacked bus voltage errors for the clusters in $\setM$, with $\vuceT[m]$ and $\vyceT[m]$ as in \eqref{eq:Clustering:cluster_dynamics}. From \eqref{eq:Problem:bus_S_external_current} and \eqref{eq:Problem:bus_F_external_current}, the stacked input currents errors $\vuce$ is calculated from the currents of the lines in $\setT$ along with the loads at each bus, i.e.,
	%
	\begin{equation} \label{eq:Clustering:External_cluster_input}
		\vuce = -\mESetT \, \viSetTe - \viLPe(\vyce),
	\end{equation}
	where $\viLPe$ is obtained by stacking the load currents in \eqref{eq:Problem:shifted_passive_load}. We can then construct the static load system by defining
	\begin{equation} \label{eq:Clustering:Stacked_loads}
		\vuLe \coloneqq \vyce, \qquad \vyLe \coloneqq \viLPe(\vuLe).
	\end{equation}
	The stacked, linear dynamics of the lines in $\setT$
	\begin{equation} \label{eq:Clustering:cluter_connecting_lines}
		\mLSetT\, \viSetTedot = -\mRSetT\, \viSetTe + \vuSetTe, \quad \vuSetTe = \mESetTT\vyce, \quad \vySetTe = \viSetTe ,
	\end{equation}
	are \ac{OSEIP} w.r.t.\ the input-output port $(\vuSetTe, \vySetTe)$ \cite[Proposition~12]{Malan2023}.
	The closed-loop microgrid thus consists of the clusters as per \eqref{eq:Clustering:cluster_dynamics} with the exogenous current input in \eqref{eq:Clustering:External_cluster_input}, the monotone parts of the loads in \eqref{eq:Clustering:Stacked_loads}, and the lines in \eqref{eq:Clustering:cluter_connecting_lines}. Combining \eqref{eq:Clustering:External_cluster_input}--\eqref{eq:Clustering:cluter_connecting_lines} yields the interconnection matrix (see \autoref{fig:Clustering:interconnection_diagram})
	\begin{equation} \label{eq:Clustering:microgrid_interconnection}
		\begin{bmatrix}
			\vuce \\ \vuLe \\ \vuSetTe
		\end{bmatrix} = 
		\begin{bmatrix}
			\matrix{0} & -\Ident[|\setN|] & -\mESetT \\
			\Ident[|\setN|] & \matrix{0} & \matrix{0} \\
			\mESetTT & \matrix{0} & \matrix{0}
		\end{bmatrix}
		\begin{bmatrix}
			\vyce \\ \vyLe \\ \vySetTe
		\end{bmatrix} ,
	\end{equation}
	Asymptotic stability then follows since the clusters \eqref{eq:Clustering:cluster_dynamics} are \ac{OSEIP} and \ac{ZSD}, the monotone loads \eqref{eq:Clustering:Stacked_loads} are static and \ac{EIP} (see \autoref{rem:Problem:External_Load_EIP}), the lines \eqref{eq:Clustering:cluter_connecting_lines} are \ac{OSEIP} and \ac{ZSO}, and the interconnection matrix in \eqref{eq:Clustering:microgrid_interconnection} is skew symmetric\footnote{A matrix $\mA$ is skew symmetric if $\mA + \mAT = \matrix{0}$.} (see \cite[Proposition~4.4.15]{vdSchaft2017}).
\end{proof}
The results in \autoref{thm:Clustering:Microgrid_stability} thus verify the asymptotic stability of the DC microgrid containing non-monotone loads at arbitrary locations. Furthermore, we highlight that \autoref{thm:Clustering:Microgrid_stability} follows automatically if all clusters are \ac{OSEIP} (e.g.\ through \autoref{thm:Clustering:Cluster_passivity}).
\begin{remark} \label{rem:Clustering:Passive_cluster_framework}
	The use of passivity theory in \autoref{thm:Clustering:Microgrid_stability} allows for the direct inclusion of other \ac{OSEIP} buses (see \cite{Strehle2020dc,Nahata2020}) or buses with strictly monotone loads. In the context of \autoref{thm:Clustering:Microgrid_stability}, these buses may simply be regarded as clusters containing a single bus.
\end{remark}
\begin{remark} \label{rem:Clustering:Only_voltage_setting_buses}
	The \ac{OSEIP} of a cluster containing only a single controlled voltage setting bus, i.e., $|\setS[m]| = 1$, $|\setF[m]| = 0$, $|\setE[m]| = 0$, can be verified using \autoref{thm:Control:V_setting_passive} as opposed to the \ac{LMI} in \autoref{thm:Clustering:Cluster_passivity}. Correspondingly, the asymptotic stability of a DC microgrid comprising only \ac{OSEIP} controlled voltage setting buses, i.e., $|\setF[m]| = 0$, $\forall m \in \setM$, follows without requiring an \ac{LMI} to be solved.
\end{remark}

	\section{Robust Cluster Passivity} \label{sec:Analysis}
As demonstrated in the previous section, the microgrid stability hinges on verifying the \ac{LMI} in \autoref{thm:Clustering:Cluster_passivity} for each cluster. However, while the \ac{OSEIP} result in \autoref{thm:Clustering:Cluster_passivity} is robust against load changes (see \autoref{rem:Clustering:Robust_passivity}), it is not immediately robust against line parameters or cluster topologies changes, i.e., the \ac{LMI} must be re-evaluated when such changes occur. 

To relax this requirement and to achieve \autoref{obj:Problem:Cluster_Robustness}, we exploit the time-scales difference between the transmission line dynamics, which are characterised by $\sLTx[m,l]$, and the bus dynamics, which are characterised by $\sL[m,n]$ and $\sCeq[m, n]$. Thus, in \autoref{sec:Analysis:Singular_Perturb_Clusters}, we analyse a cluster using singular perturbation theory and show how the graph Laplacian matrix appears in the slow bus dynamics.
Thereafter, in \autoref{sec:Analysis:Laplacian}, we use properties of the Laplacian to make the passivity verification robust against various parameter and topology changes.
Note that in this section, for the sake of simplicity, we omit the cluster index $m$ where clear from context.

\subsection{Singularly Perturbed Clusters} \label{sec:Analysis:Singular_Perturb_Clusters}

As established in \cite{Calcev1999}, the passivity of a linear singular perturbed system is equivalent to the passivity of its slow and fast subsystems. The verification of \autoref{thm:Clustering:Cluster_passivity} can thus be simplified if the following assumption holds.
\begin{assumption} \label{ass:Analysis:Static_lines}
	The timescales of the closed-loop bus dynamics, \eqref{eq:Problem:bus_S_dynamics_passive_load} with \eqref{eq:Control:Voltage_setting} and \eqref{eq:Problem:bus_F_dynamics_passive_load} with \eqref{eq:Control:Voltage_following}, and the line dynamics \eqref{eq:Problem:line_dynamics} differ sufficiently to permit an analysis via singular perturbation theory.\footnote{As per \cite{Calcev1999}, we assume a suitable $0 < \epsilon \le \epsilon^*$ exists, i.e.\ that $\sLTx[m,l] \ll \sL[m,n], \sCeq[m, n]$.}
\end{assumption}
\begin{remark} \label{rem:Analysis:Static_lines_lit}
	\autoref{ass:Analysis:Static_lines} is similar to assuming static transmission lines (see \cite{Sadabadi2018,Tucci2018stab}) which is also based on a singular perturbation theory analysis as in \cite{Tucci2016}.
\end{remark}
With \autoref{ass:Analysis:Static_lines}, the controlled cluster dynamics in \eqref{eq:Clustering:cluster_dynamics} can be split into the fast system comprising the stacked transmission lines $l \in \setE$ from \eqref{eq:Problem:line_dynamics},
\begin{equation} \label{eq:Analysis:cluster_fast_dynamics}
	\mLTx \viTxedot = - \mRTx \viTxe - \mEST \vvSe - \mEFT \vvFe ,
\end{equation}
along with the slow system which is obtained by setting $\viTxedot = 0$ in \eqref{eq:Clustering:cluster_dynamics}, i.e.,
\begin{equation} \label{eq:Analysis:cluster_dynamics_reduced}
	\scalebox{0.82}{$
		\begin{aligned}
			\setUnderBraceMatrix{\mQr\vxredot}{\begin{bmatrix} 
				\mLS \vIntSedot\\ \mLS \viSaedot \\ \mCS \vvSedot \\ \vIntFedot \\ \mLF \viFaedot \\ \mCF \vvFedot
			\end{bmatrix}} {=}& \setUnderBraceMatrix{\mAr}{\begin{bmatrix}
				\matrix{0} & \matrix{0} & -\momegaVMod & \matrix{0} & \matrix{0} & \matrix{0} \\
				-\Ident[S] & -\mpsiIMod & -\mpsiVMod & \matrix{0} & \matrix{0} & \matrix{0} \\
				\matrix{0} & \Ident[S] & \mYS {-} \mdeltaS {-} \mLap[11] & \matrix{0} & \matrix{0} & {-} \mLap[12] \\
				\matrix{0} & \matrix{0} & \matrix{0} & -\mtau & \matrix{0} & \mtau\malpha \\
				\matrix{0} & \matrix{0} & \matrix{0} & \Ident[F] & -\mbetaMod & -\malphaMod \\
				\matrix{0} & \matrix{0} & {-} \mLapT[12] & \matrix{0} & \Ident[F] & \mYF {-} \mdeltaF {-} \mLap[22]\\
			\end{bmatrix}} \! \setUnderBraceMatrix{\vxre}{\begin{bmatrix}
				\vIntSe \\ \viSae \\ \vvSe \\ \vIntFe \\ \viFae \\ \vvFe
			\end{bmatrix}} \\&{+} \mBr \vuce , \\[5pt]
			\vyce \;{=}& \, \setUnderBraceMatrix{\mBrT}{\begin{bmatrix}
				\vec{0} & \vec{0} & \Ident[S] & \vec{0} & \vec{0} & \vec{0} \\
				\vec{0} & \vec{0} & \vec{0} & \vec{0} & \vec{0} &  \Ident[F]
			\end{bmatrix}} \, \vxre ,
		\end{aligned}$}
\end{equation}
where the graph Laplacian
\begin{equation} \label{eq:Analysis:Laplacian}
	\begin{bmatrix}
		\mLap[11] & \mLap[12] \\
		\mLapT[12] & \mLap[22]
	\end{bmatrix} \coloneqq \mLap = \mE \mRTxinv \mE^\Transpose, \qquad \mLap = \mLapT ,
\end{equation}
arises from the incidence matrix $\mE$ in \eqref{eq:Problem:interconnection} weighted by the line conductances $\mRTxinv$.
Note that the input $\vuce$ and output $\vyce$ are unchanged in the slow system \eqref{eq:Analysis:cluster_dynamics_reduced} and do not appear in the fast system \eqref{eq:Analysis:cluster_fast_dynamics}. Notice furthermore from \eqref{eq:Analysis:cluster_dynamics_reduced}, that the Laplacian can be separated from the rest of the slow dynamics $\mAb$ according to
\begin{equation} \label{eq:Analysis:Split_dynamics}
	\mAr = \mAb - \mBr \mLap \mBrT .
\end{equation}
The passivity of a cluster can now be analysed using the fast \eqref{eq:Analysis:cluster_fast_dynamics} and slow \eqref{eq:Analysis:cluster_dynamics_reduced} dynamics of the singular perturbed system.
\begin{proposition} \label{prop:Analysis:Reduced_cluster_passivity}
	A cluster \eqref{eq:Clustering:cluster_dynamics} for which \autoref{ass:Analysis:Static_lines} holds is \ac{OSEIP} w.r.t.\ the input-output pair $(\vuce, \vyce)$ and the storage function $\sHr(\vxre) = \vxreT \mQr \mPr \mQr \vxre$ if there exists a matrix $\mPr = \mPrT \posDef 0$ and a $\siOut > 0$ such that 
	\begin{equation} \label{eq:Analysis:Cluster_reduced_passive_cond}
		\mQr \mPr \mAr + \mArT \mPr \mQr + \siOut\mBr\mBrT \negSemiDef 0, \quad \mQr \mPr \mBr = \mBr .
	\end{equation}
\end{proposition}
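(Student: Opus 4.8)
The plan is to invoke the singular-perturbation passivity result of \cite{Calcev1999} quoted in the text preceding the proposition: for a linear system in standard singularly perturbed form, \ac{OSEIP} w.r.t.\ a given port follows from the \ac{OSEIP} of the reduced (slow) subsystem together with the passivity of the boundary-layer (fast) subsystem, for $\epsilon$ small enough. By \autoref{ass:Analysis:Static_lines} the controlled cluster \eqref{eq:Clustering:cluster_dynamics} is precisely of this form, with fast dynamics \eqref{eq:Analysis:cluster_fast_dynamics} and slow dynamics \eqref{eq:Analysis:cluster_dynamics_reduced}. Hence it suffices to establish two facts: (i) the fast subsystem \eqref{eq:Analysis:cluster_fast_dynamics} is passive (in fact \ac{OSEIP}); and (ii) the slow subsystem \eqref{eq:Analysis:cluster_dynamics_reduced} is \ac{OSEIP} w.r.t.\ $(\vuce,\vyce)$ under condition \eqref{eq:Analysis:Cluster_reduced_passive_cond}.

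For (i), I would first note that $\mLTx^{-1}(-\mRTx)$ is Hurwitz and $\mRTx$ is invertible, so the cluster is genuinely in standard form and the quasi-steady-state equation $\viTxedot = \vec{0}$ has the unique solution $\viTxe = -\mRTxinv(\mEST\vvSe + \mEFT\vvFe)$, which, substituted into \eqref{eq:Clustering:cluster_dynamics}, yields exactly \eqref{eq:Analysis:cluster_dynamics_reduced} with the Laplacian \eqref{eq:Analysis:Laplacian}; the split \eqref{eq:Analysis:Split_dynamics} merely records that $\mAr$ is well defined and affine in $\mLap$. The fast subsystem \eqref{eq:Analysis:cluster_fast_dynamics} coincides with the stacked line dynamics \eqref{eq:Problem:line_dynamics} on $\setE$, which are \ac{OSEIP} w.r.t.\ their natural current--voltage port by \cite[Proposition~12]{Malan2023} (as already used in \autoref{thm:Clustering:Microgrid_stability}); a one-line alternative is the storage function $\tfrac12\viTxeT\mLTx\viTxe$ together with $\mRTx \posDef 0$.

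For (ii), I would mirror the proof of \autoref{thm:Clustering:Cluster_passivity} at the reduced level. Take the storage function $\sHr(\vxre) = \vxreT\mQr\mPr\mQr\vxre$ from the statement (up to the harmless factor used in \autoref{thm:Clustering:Cluster_passivity}), which is positive definite since $\mQr \posDef 0$ and $\mPr = \mPrT \posDef 0$; differentiate along \eqref{eq:Analysis:cluster_dynamics_reduced} and use the equality constraint $\mQr\mPr\mBr = \mBr$ to rewrite the input cross-term as $\vyceT\vuce$, exactly as in \eqref{eq:Clustering:Cluster_passivity_ineq}. The matrix inequality in \eqref{eq:Analysis:Cluster_reduced_passive_cond} then gives $\sHrdot \le \vyceT\vuce - \siOut\vyceT\vyce$, i.e.\ the slow subsystem is \ac{OSEIP} w.r.t.\ $(\vuce,\vyce)$. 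Since both $\vuce$ and $\vyce$ enter \eqref{eq:Analysis:cluster_dynamics_reduced} only through $\mBr$ and have no component on $\viTxe$, the port appearing in the decomposition is the same $(\vuce,\vyce)$ as for the full cluster, so \cite{Calcev1999} applies directly and yields the claim.

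The main obstacle I anticipate is not a single calculation but the bookkeeping needed to certify that \eqref{eq:Clustering:cluster_dynamics} satisfies the hypotheses of \cite{Calcev1999} --- the standard-form and Hurwitz-fast-matrix conditions above, and, more delicately, that the \emph{output-strictness} is inherited by the composite system: one must check that the $-\siOut$ term survives the reduction, which it does precisely because $\vyce$ is a purely slow quantity, so the strictness originates entirely in the slow block rather than in the (merely passive) fast block. Everything else is a routine copy of the arguments already used for \autoref{thm:Clustering:Cluster_passivity} and \autoref{thm:Clustering:Microgrid_stability}.
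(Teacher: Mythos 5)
Your proposal is correct and follows essentially the same route as the paper: verify passivity of the fast line subsystem via the storage function $\tfrac12\viTxeT\mLTx\viTxe$ with $\mRTx \posDef 0$, establish \ac{OSEIP} of the slow subsystem \eqref{eq:Analysis:cluster_dynamics_reduced} by repeating the storage-function/LMI argument of \autoref{thm:Clustering:Cluster_passivity} with $\sHr$ and condition \eqref{eq:Analysis:Cluster_reduced_passive_cond}, and conclude via the equivalence result of \cite[Theorem~1]{Calcev1999}. Your added bookkeeping (standard form, Hurwitz fast matrix, and the observation that the port $(\vuce,\vyce)$ is purely slow, which is how the paper also disposes of the strictness term in the fast block) is consistent with, and slightly more explicit than, the paper's own proof.
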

\begin{proof}
	Consider first the linear fast dynamics \eqref{eq:Analysis:cluster_fast_dynamics}, where $\vvSe$ and $\vvFe$ are constant inputs. Because \eqref{eq:Analysis:cluster_fast_dynamics} is shifted to the equilibrium $(\vvSeq, \vvFeq, \viTxeq)$, it follows for constant input voltages that the error variables $\vvSe = \vec{0} = \vvFe$ and thus that $\vyce = \vec{0}$ in the fast system. For the fast dynamics with a storage function $\sHTx(\viTxe) = \viTxeT \mLTx \viTxe / 2$, and since $\mRTx \posDef 0$, it follows that
	\begin{equation}
		\sHTxdot(\viTxe) = - \viTxeT \mRTx \viTxe < 0 = \vyceT \vuce - \siOut\vyceT \vyce . 
	\end{equation}
	Thus, the fast system is \ac{OSEIP} w.r.t.\ $(\vuce, \vyce)$.
	
	Now consider the linear slow dynamics \eqref{eq:Analysis:cluster_dynamics_reduced} with the storage function $\sHr$, where the \ac{OSEIP} requirement $\cramped{\sHrdot \le \vyceT \vuce - \siOut\vyceT \vyce}$ reduces to \eqref{eq:Analysis:Cluster_reduced_passive_cond} in the same manner as in the proof of \autoref{thm:Clustering:Cluster_passivity}. Since the passivity of the singularly perturbed cluster is equivalent to the passivity of the slow and fast subsystems \cite[Theorem~1]{Calcev1999}, \eqref{eq:Analysis:Cluster_reduced_passive_cond} ensures the \ac{OSEIP} of the full cluster.
\end{proof}
\begin{remark} \label{rem:Analysis:reduced_LMI_size}
	The use of singular perturbation theory here allows the order of the \ac{LMI} verifying the cluster passivity to be reduced from $3|\setN|+|\setE|$ in \eqref{eq:Clustering:Cluster_passive_cond} to $3|\setN|$ in \eqref{eq:Analysis:Cluster_reduced_passive_cond}.
\end{remark}
\subsection{Passivity Preserving Topology Changes} \label{sec:Analysis:Laplacian}
The appearance of the Laplacian in the reduced cluster dynamics \eqref{eq:Analysis:cluster_dynamics_reduced} opens up the possibility of comparing different cluster configurations without needing to check \eqref{eq:Analysis:Cluster_reduced_passive_cond} for all of them. This can also be used, e.g., to determine if a cluster which undergoes a topology change retains its \ac{OSEIP} property.
\begin{proposition} \label{prop:Analysis:Cluster_comparison}
	Consider a cluster $m_1$ that satisfies \autoref{ass:Analysis:Static_lines} and the conditions of \autoref{prop:Analysis:Reduced_cluster_passivity}. Consider also a cluster $m_2$ with the same buses as cluster $m_1$, but with different lines. Let \autoref{ass:Analysis:Static_lines} also hold for the cluster $m_2$. Then, the cluster $m_2$ inherits the verified \ac{OSEIP} property of $m_1$ if
	\begin{equation} \label{eq:Analysis:Lap_comparison}
		\mLap[m_1] - \mLap[m_2] \negSemiDef 0 ,
	\end{equation}
	where $\mLap[m_1]$ and $\mLap[m_2]$ denote the Laplacian matrices of the respective clusters.
\end{proposition}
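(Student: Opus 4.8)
The plan is to certify the \ac{OSEIP} of cluster $m_2$ by reusing the passivity certificate of cluster $m_1$ \emph{verbatim}. Since clusters $m_1$ and $m_2$ consist of the same buses and controllers and differ only in their lines, the matrices $\mQr$ and $\mBr$, as well as the bus-only part $\mAb$ in the decomposition \eqref{eq:Analysis:Split_dynamics}, are identical for the two clusters; only the graph Laplacian in \eqref{eq:Analysis:Laplacian} changes. Let $\mPr = \mPrT \posDef 0$ and $\siOut > 0$ be the data satisfying \eqref{eq:Analysis:Cluster_reduced_passive_cond} for cluster $m_1$, which exist by hypothesis via \autoref{prop:Analysis:Reduced_cluster_passivity}. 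I would show that this \emph{same} pair $(\mPr,\siOut)$ satisfies \eqref{eq:Analysis:Cluster_reduced_passive_cond} for cluster $m_2$; since \autoref{ass:Analysis:Static_lines} also holds for $m_2$ and $\mRTx[m_2] \posDef 0$ (so the fast subsystem remains \ac{OSEIP}), \autoref{prop:Analysis:Reduced_cluster_passivity} then delivers the \ac{OSEIP} of $m_2$ with storage function $\sHr(\vxre) = \vxreT \mQr \mPr \mQr \vxre$.

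First, the equality constraint $\mQr \mPr \mBr = \mBr$ involves neither $\mAr$ nor the Laplacian and therefore carries over to $m_2$ unchanged. For the matrix inequality I would apply \eqref{eq:Analysis:Split_dynamics} to both clusters to get
\begin{equation*}
	\mAr[m_2] = \mAb - \mBr\mLap[m_2]\mBrT = \mAr[m_1] + \mBr\bigl(\mLap[m_1] - \mLap[m_2]\bigr)\mBrT ,
\end{equation*}
substitute this into the left-hand side of the first relation of \eqref{eq:Analysis:Cluster_reduced_passive_cond} written for $m_2$, and expand. Using the equality constraint in the transposed form $\mBrT\mPr\mQr = \mBrT$ (valid since $\mPr$, $\mQr$ are symmetric) together with $\mLap[m_1] = \mLapT[m_1]$ and $\mLap[m_2] = \mLapT[m_2]$, the two cross terms each collapse to $\mBr(\mLap[m_1] - \mLap[m_2])\mBrT$, so that
\begin{multline*}
	\mQr\mPr\mAr[m_2] + \mArT[m_2]\mPr\mQr + \siOut\mBr\mBrT \\
	= \underbrace{\mQr\mPr\mAr[m_1] + \mArT[m_1]\mPr\mQr + \siOut\mBr\mBrT}_{\negSemiDef 0} + 2\,\mBr\bigl(\mLap[m_1] - \mLap[m_2]\bigr)\mBrT .
\end{multline*}
The underbraced term is $\negSemiDef 0$ because $(\mPr,\siOut)$ certifies \eqref{eq:Analysis:Cluster_reduced_passive_cond} for $m_1$; by the congruence rule $X \negSemiDef 0 \Rightarrow \mBr X\mBrT \negSemiDef 0$, hypothesis \eqref{eq:Analysis:Lap_comparison} makes the remaining term $\negSemiDef 0$ as well. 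Hence the whole expression is $\negSemiDef 0$, which completes the verification of \eqref{eq:Analysis:Cluster_reduced_passive_cond} for $m_2$.

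The only step requiring a little care is the simplification of the cross terms: one must invoke $\mQr\mPr\mBr = \mBr$ precisely to strip off the factor $\mQr\mPr$ multiplying $\mBr$, so that the Laplacian difference reappears congruence-transformed by $\mBr$ and $\mBrT$ rather than ``dressed'' with $\mPr\mQr$ — only in that form does the order relation \eqref{eq:Analysis:Lap_comparison} transfer to the \ac{LMI} residual. I do not expect a genuine obstacle here; the remainder is routine bookkeeping, and the argument in fact shows the stronger fact that the entire feasible set of certificates for $m_1$ remains feasible for $m_2$.
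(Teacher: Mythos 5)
Your proposal is correct and follows essentially the same route as the paper: both use the decomposition \eqref{eq:Analysis:Split_dynamics} together with the equality constraint $\mQr\mPr\mBr = \mBr$ (and symmetry of $\mLap$) to show that the certificate $(\mPr,\siOut)$ of cluster $m_1$ remains feasible in \eqref{eq:Analysis:Cluster_reduced_passive_cond} for cluster $m_2$ under \eqref{eq:Analysis:Lap_comparison}. Your forward substitution $\mAr[m_2] = \mAr[m_1] + \mBr(\mLap[m_1]-\mLap[m_2])\mBrT$ is just a cleaner presentation of the paper's comparison of the two reduced LMIs.
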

\begin{proof}
	Replace $\mAr$ in the \ac{LMI} \eqref{eq:Analysis:Cluster_reduced_passive_cond} with \eqref{eq:Analysis:Split_dynamics} to obtain
	\begin{multline} \label{eq:Analysis:Split_LMI}
		\mQr \mPr \mAb + \mAbT \mPr \mQr - \mQr \mPr \mBr \mLap \mBrT - \\ \mBr \mLapT \mBrT \mPr \mQr + \siOut\mBr\mBrT \negSemiDef 0 .
	\end{multline}
	The equality constraint in \eqref{eq:Analysis:Cluster_reduced_passive_cond} further reduces \eqref{eq:Analysis:Split_LMI} to 
	\begin{equation} \label{eq:Analysis:Reduced_split_LMI}
		\begin{aligned}
			\mQr \mPr \mAb + \mAbT \mPr \mQr - \mBr \mLap \mBrT - \mBr \mLapT \mBrT &\negSemiDef - \siOut\mBr\mBrT 
			\\ \mQr \mPr \mAb + \mAbT \mPr \mQr - 2\mBr \mLap \mBrT &\negSemiDef - \siOut\mBr\mBrT  ,
		\end{aligned}
	\end{equation}
	since $\mLap$ and thus $\mBr \mLap \mBrT$ are symmetric.
	Moreover, by assumption, the clusters $m_1$ and $m_2$ have the same buses, i.e., $\mAb[m_1] = \mAb[m_2]$, and where \eqref{eq:Analysis:Reduced_split_LMI} is verified for cluster $m_1$.
	Then, by comparing \eqref{eq:Analysis:Reduced_split_LMI} for the two clusters and setting
	\begin{multline} \label{eq:Analysis:Comparison_LMI}
			\mQr \mPr \mAb + \mAbT \mPr \mQr - 2\mBr \mLap[m_2] \mBrT \negSemiDef 
			\\ \mQr \mPr \mAb + \mAbT \mPr \mQr - 2\mBr \mLap[m_1] \mBrT \negSemiDef - \siOut\mBr\mBrT ,
	\end{multline}
	we can ensure that the cluster $m_2$ is \ac{OSEIP} through the \ac{OSEIP} of cluster $m_1$. The inequality \eqref{eq:Analysis:Comparison_LMI} in turn implies that $\mBr \mLap[m_1] \mBrT \negSemiDef \mBr \mLap[m_2] \mBrT$, from which \eqref{eq:Analysis:Lap_comparison} arises.
\end{proof}
\autoref{prop:Analysis:Cluster_comparison} thus allows the \ac{OSEIP} of a cluster with modified lines to be validated by computing the eigenvalues of an $|\setN|\times|\setN|$ matrix \eqref{eq:Analysis:Lap_comparison} instead of solving an \ac{LMI} of order $3|\setN|+|\setE|$. 
The following two corollaries show that verifying the \ac{OSEIP} of a modified cluster becomes even simpler if the values of the line resistances become smaller or if additional lines are added to the cluster.
\begin{corollary} \label{prop:Analysis:Modified_line_resistances}
	Consider a cluster $m_1$ for which the conditions of \autoref{prop:Analysis:Reduced_cluster_passivity} hold. Let a cluster $m_2$ have the same buses and lines as cluster $m_1$, but with different line resistances. Let \autoref{ass:Analysis:Static_lines} hold for both clusters. Then, the cluster $m_2$ inherits the verified \ac{OSEIP} properties of $m_1$ if
	\begin{equation} \label{eq:Analysis:Line_resistance_comparison}
		0 < \sRTx[m_2,l] \le \sRTx[m_1,l], \quad \forall l \in \setE[m_1] .
	\end{equation}
\end{corollary}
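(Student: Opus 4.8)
The plan is to reduce Corollary~\ref{prop:Analysis:Modified_line_resistances} to \autoref{prop:Analysis:Cluster_comparison} by showing that condition \eqref{eq:Analysis:Line_resistance_comparison} implies the Laplacian ordering \eqref{eq:Analysis:Lap_comparison}. Since clusters $m_1$ and $m_2$ share the same buses and the same lines (hence the same incidence matrix $\mE$), the two Laplacians differ only through the diagonal conductance weighting: $\mLap[m_1] = \mE \mRTxinv[m_1] \mET$ and $\mLap[m_2] = \mE \mRTxinv[m_2] \mET$, where $\mRTx[m_1] = \Diag[(\sRTx[m_1,l])]$ and $\mRTx[m_2] = \Diag[(\sRTx[m_2,l])]$ are positive diagonal matrices. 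Therefore
\begin{equation} \label{eq:Analysis:Cor_Lap_difference}
	\mLap[m_1] - \mLap[m_2] = \mE \left( \mRTxinv[m_1] - \mRTxinv[m_2] \right) \mET .
\end{equation}

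The key step is to verify that $\mRTxinv[m_1] - \mRTxinv[m_2] \negSemiDef 0$. Both matrices are diagonal, so this holds if and only if each diagonal entry satisfies $1/\sRTx[m_1,l] - 1/\sRTx[m_2,l] \le 0$, i.e.\ $\sRTx[m_2,l] \le \sRTx[m_1,l]$; together with $\sRTx[m_2,l] > 0$ this is exactly \eqref{eq:Analysis:Line_resistance_comparison} (both resistances positive ensures the reciprocals are well-defined and the inequality does not flip). Hence $\mRTxinv[m_1] - \mRTxinv[m_2]$ is a negative semidefinite diagonal matrix, and congruence by $\mE$ on the right-hand side of \eqref{eq:Analysis:Cor_Lap_difference} preserves negative semidefiniteness: for any vector $\vec{z}$ of appropriate dimension, $\vec{z}^\Transpose \mE (\mRTxinv[m_1] - \mRTxinv[m_2]) \mET \vec{z} = \vec{w}^\Transpose (\mRTxinv[m_1] - \mRTxinv[m_2]) \vec{w} \le 0$ with $\vec{w} \coloneqq \mET \vec{z}$. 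This yields $\mLap[m_1] - \mLap[m_2] \negSemiDef 0$, which is precisely the hypothesis \eqref{eq:Analysis:Lap_comparison} of \autoref{prop:Analysis:Cluster_comparison}.

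Finally, I would invoke \autoref{prop:Analysis:Cluster_comparison} directly: since \autoref{ass:Analysis:Static_lines} is assumed for both clusters, cluster $m_1$ satisfies the conditions of \autoref{prop:Analysis:Reduced_cluster_passivity}, the buses of $m_1$ and $m_2$ coincide, and \eqref{eq:Analysis:Lap_comparison} has just been established, all hypotheses of \autoref{prop:Analysis:Cluster_comparison} are met. Therefore cluster $m_2$ inherits the verified \ac{OSEIP} property of cluster $m_1$, completing the proof.

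There is essentially no serious obstacle here; the result is a short corollary. The only point requiring a little care is the congruence argument --- making explicit that pre- and post-multiplying a negative semidefinite matrix by $\mE$ and $\mET$ (a non-square matrix and its transpose) preserves negative semidefiniteness, rather than merely claiming it --- and noting that the strict positivity of both sets of line resistances is what legitimises passing between the resistance inequality and the conductance (inverse) inequality.
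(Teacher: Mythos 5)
Your proposal is correct and follows essentially the same route as the paper: both reduce the corollary to \autoref{prop:Analysis:Cluster_comparison} by noting the shared incidence matrix and comparing the diagonal conductance matrices. If anything, your argument is the cleaner one, since you run the implication in the logically correct direction (resistance inequality $\Rightarrow$ negative semidefinite conductance difference $\Rightarrow$, by congruence with $\mE$, the Laplacian ordering \eqref{eq:Analysis:Lap_comparison}), whereas the paper's displayed implication is written in the reverse, looser direction.
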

\begin{proof}
	Rewrite condition \eqref{eq:Analysis:Lap_comparison} using \eqref{eq:Analysis:Laplacian} along with the fact that the two clusters have the same incidence matrices $\mE[m_1] = \mE[m_2]$. This yields
	\begin{align} 
		\mE[m_1]\mRTxinv[m_1]\mET[m_1] - \mE[m_1]\mRTxinv[m_2]\mET[m_1] &\negSemiDef 0 \notag \\
		\label{eq:Analysis:Lap_comparison_same_topology}
		\implies \mRTxinv[m_1] - \mRTxinv[m_2] &\negSemiDef 0.
	\end{align}
	$\mRTx[m_1]$ and $\mRTx[m_2]$ being positive diagonal matrices, \eqref{eq:Analysis:Lap_comparison_same_topology} directly leads to \eqref{eq:Analysis:Line_resistance_comparison}.
\end{proof}
\begin{corollary} \label{prop:Analysis:Added_lines}
	Consider a cluster $m_1$ for which \autoref{ass:Analysis:Static_lines} and the conditions of \autoref{prop:Analysis:Reduced_cluster_passivity} hold. Then, a cluster $m_2$ obtained by duplicating $m_1$ and then adding additional lines between buses in $\setN[m_2]$ remains \ac{OSEIP}.
\end{corollary}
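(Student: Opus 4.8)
The plan is to obtain this as an immediate specialisation of \autoref{prop:Analysis:Cluster_comparison}: adding transmission lines can only \emph{increase} the cluster Laplacian in the positive semidefinite (Löwner) order, so condition \eqref{eq:Analysis:Lap_comparison} holds for free. First I would recall from \eqref{eq:Analysis:Laplacian} that the reduced cluster dynamics see the lines only through $\mLap = \mE \mRTxinv \mET$, and from \eqref{eq:Problem:interconnection} that each line contributes exactly one column to the incidence matrix $\mE$. Appending a new line $l$ between two buses $n_1, n_2 \in \setN[m_2]$ (its arbitrary positive direction fixing the signs) therefore appends to $\mE$ a column $\vec{a}_l \in \lbrace -1,0,1 \rbrace^{|\setN|}$ with a single $+1$ and a single $-1$, so $\mLap$ gains the rank-one term $\tfrac{1}{\sRTx[l]}\vec{a}_l\vec{a}_l^\Transpose$, which is positive semidefinite because $\sRTx[l] > 0$.

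Summing over all newly added lines and using that the buses, the original lines, and all original resistances are shared by $m_1$ and $m_2$, I would write
\begin{equation*}
	\mLap[m_2] = \mLap[m_1] + \Delta, \qquad \Delta \coloneqq \sum_{l \text{ added}} \tfrac{1}{\sRTx[l]}\, \vec{a}_l\vec{a}_l^\Transpose \posSemiDef 0 .
\end{equation*}
Hence $\mLap[m_1] - \mLap[m_2] = -\Delta \negSemiDef 0$, which is precisely the hypothesis \eqref{eq:Analysis:Lap_comparison} of \autoref{prop:Analysis:Cluster_comparison}. Since the added lines are ordinary transmission lines of the form \eqref{eq:Problem:line_dynamics} with small inductances relative to the bus storage elements, \autoref{ass:Analysis:Static_lines} continues to hold for $m_2$, so \autoref{prop:Analysis:Reduced_cluster_passivity} applies to $m_2$ with the same reduced state, the same $\mQr$, and the same $\mBr$ as $m_1$. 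Invoking \autoref{prop:Analysis:Cluster_comparison} with the pair $(m_1, m_2)$ then gives that $m_2$ inherits the verified \ac{OSEIP} property of $m_1$ — concretely, with the storage function built from the same $\mPr$ and $\siOut$, as in the proof of \autoref{prop:Analysis:Cluster_comparison}.

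The only step needing a little care — and the closest this proof comes to an obstacle — is confirming that the remaining block $\mAb$ of the reduced dynamics in \eqref{eq:Analysis:Split_dynamics} is genuinely identical for $m_1$ and $m_2$, so that \emph{only} the Laplacian term $\mBr\mLap\mBrT$ changes between the two clusters. This holds because $\mAb$ collects exactly the bus dynamics and bus controller gains, which are copied verbatim under the duplication, while the new lines enter the reduced model solely via $\mLap$. Everything else is just the monotonicity of the map $X \mapsto X + \text{(PSD)}$, so no new estimate is required.
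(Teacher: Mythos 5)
Your proposal is correct and follows essentially the same route as the paper: both verify the hypothesis \eqref{eq:Analysis:Lap_comparison} of \autoref{prop:Analysis:Cluster_comparison} by writing $\mLap[m_2] = \mLap[m_1] + \mEn\mRTxninv\mEnT$ (your sum of rank-one terms $\tfrac{1}{\sRTx[l]}\vec{a}_l\vec{a}_l^\Transpose$ is exactly this block form) and noting the added term is positive semidefinite for positive line resistances. Your extra check that $\mAb$, $\mQr$, and $\mBr$ are unchanged is implicit in the paper's argument but harmless to state.
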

\begin{proof}
	For the modified cluster $m_2$, let $\mE[m_2] = [\mE[m_1], \mEn]$ and $\mRTx[m_2] = \Diag[\mRTx[m_1],\mRTxn]$ denote the incidence and resistance matrices, where $\mEn$ and $\mRTxn$ correspond to the added lines in $\setE[m_2] \setminus \setE[m_1]$. Then, the condition \eqref{eq:Analysis:Lap_comparison} in \autoref{prop:Analysis:Cluster_comparison} becomes 
	\begin{equation} \label{eq:Analysis:Lap_comparison_added_lines}
		\begin{aligned}
			\mE[m_1]\mRTxinv[m_1]\mET[m_1] - \begin{bmatrix}
				\mE[m_1] & \mEn
			\end{bmatrix} \!\! \begin{bmatrix}
				\mRTxinv[m_1] & \matrix{0} \\
				\matrix{0} & \mRTxninv
			\end{bmatrix} \!\! \begin{bmatrix}
				\mET[m_1] \\ \mEnT
			\end{bmatrix} &\negSemiDef 0 \\
			\mE[m_1]\mRTxinv[m_1]\mET[m_1] - \mE[m_1]\mRTxinv[m_1]\mET[m_1] - \mEn\mRTxninv\mEnT &\negSemiDef 0 \\
			-\mEn\mRTxninv\mEnT &\negSemiDef 0 ,
		\end{aligned}
	\end{equation}
	which always holds for positive line resistances.
\end{proof}
Through application of \autoref{prop:Analysis:Modified_line_resistances} and \autoref{prop:Analysis:Added_lines}, the cluster \ac{OSEIP} is retained when the values of the line resistances decrease and when adding new lines. The \ac{OSEIP} of a cluster can thus be verified robustly by considering the cluster topology with the highest line resistance values and the fewest possible lines. Thus, \autoref{prop:Analysis:Modified_line_resistances} also provides robustness against effects such as line loading, line ageing, line wear, and the ambient temperature which influence the line resistance. Note that these two corollaries can be combined in an arbitrary order.
\begin{remark} \label{rem:Analysis:Braess_Paradox}
	The conservation of \ac{OSEIP} in \autoref{prop:Analysis:Modified_line_resistances} and \autoref{prop:Analysis:Added_lines} when lowering line resistances or when adding lines does not contradict Braess' paradox (see e.g.\ \cite{Schaefer2022}). Whereas Braess' paradox pertains to the transmission limits of lines possibly being exceeded when new connections are introduced, we only consider the cluster \ac{OSEIP} and microgrid stability properties in this work.
\end{remark}
In the next proposition, we consider the special case where an \ac{OSEIP} cluster with a complete graph topology is used as a starting point. In this case, the comparison with clusters having different sets of edges reduces to comparing their algebraic connectivities $\seigLapTwo[m]$, i.e., the second smallest eigenvalues of the Laplacians.
\begin{proposition} \label{prop:Analysis:Alg_Conn_comparison}
	Consider two clusters $m_1$ and $m_2$ with the same buses and for which \autoref{ass:Analysis:Static_lines} hold. Let the topology of $m_1$ be a complete graph such that the conditions of \autoref{prop:Analysis:Reduced_cluster_passivity} hold. Then, the cluster $m_2$ with an arbitrary connected topology is \ac{OSEIP} if
	\begin{equation} \label{eq:Analysis:Laplacian_alg_conn_compare}
		\seigLapTwo[m_1] \le \seigLapTwo[m_2] 
	\end{equation}
	where $\seigLapTwo[m]$ is the algebraic connectivity of a cluster $m$.
\end{proposition}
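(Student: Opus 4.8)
The plan is to reduce the claim to the Laplacian comparison \eqref{eq:Analysis:Lap_comparison} of \autoref{prop:Analysis:Cluster_comparison} and then to exploit the very special spectral structure of a complete-graph Laplacian. Since $m_1$ and $m_2$ have the same buses, both satisfy \autoref{ass:Analysis:Static_lines}, and $m_1$ satisfies the hypotheses of \autoref{prop:Analysis:Reduced_cluster_passivity}, \autoref{prop:Analysis:Cluster_comparison} tells me it suffices to prove $\mLap[m_1] - \mLap[m_2] \negSemiDef 0$, i.e.\ $\mLap[m_2] \posSemiDef \mLap[m_1]$.

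First I would record the spectral structure of $\mLap[m_1]$. As $m_1$ is a complete graph (with uniform line conductance $g>0$), its Laplacian is $\mLap[m_1] = g\bigl(|\setN|\,\Ident[|\setN|] - \vOneCol[|\setN|]\vOneCol[|\setN|]^\Transpose\bigr)$, which has the eigenvalue $0$ with eigenvector $\vOneCol[|\setN|]$ and the eigenvalue $g|\setN| = \seigLapTwo[m_1]$ of multiplicity $|\setN|-1$ on $\vOneCol[|\setN|]^\perp$. Hence $\mLap[m_1] = \seigLapTwo[m_1]\,\matrix{\Pi}$, where $\matrix{\Pi} \coloneqq \Ident[|\setN|] - \tfrac{1}{|\setN|}\vOneCol[|\setN|]\vOneCol[|\setN|]^\Transpose$ is the orthogonal projector onto $\vOneCol[|\setN|]^\perp$.

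Next I would check the semidefinite inequality on the two complementary subspaces. Pick an arbitrary $\vec{x} \in \Reals^{|\setN|}$ and split it as $\vec{x} = c\vOneCol[|\setN|] + \vec{y}$ with $\vec{y}^\Transpose\vOneCol[|\setN|]=0$. Because $\mLap[m_1]\vOneCol[|\setN|] = \mLap[m_2]\vOneCol[|\setN|] = \vec{0}$ and $\matrix{\Pi}\vOneCol[|\setN|]=\vec{0}$, I obtain $\vec{x}^\Transpose\mLap[m_1]\vec{x} = \seigLapTwo[m_1]\,\vec{y}^\Transpose\vec{y}$ and $\vec{x}^\Transpose\mLap[m_2]\vec{x} = \vec{y}^\Transpose\mLap[m_2]\vec{y}$. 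Since $m_2$ is connected, $\mLap[m_2]$ restricted to $\vOneCol[|\setN|]^\perp$ is positive definite with smallest eigenvalue $\seigLapTwo[m_2]$, so $\vec{y}^\Transpose\mLap[m_2]\vec{y} \ge \seigLapTwo[m_2]\,\vec{y}^\Transpose\vec{y} \ge \seigLapTwo[m_1]\,\vec{y}^\Transpose\vec{y}$ by the hypothesis \eqref{eq:Analysis:Laplacian_alg_conn_compare}. Therefore $\vec{x}^\Transpose(\mLap[m_2]-\mLap[m_1])\vec{x}\ge 0$ for every $\vec{x}$, i.e.\ $\mLap[m_1]-\mLap[m_2]\negSemiDef 0$, and \autoref{prop:Analysis:Cluster_comparison} then yields the \ac{OSEIP} of $m_2$.

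The main obstacle — essentially the only nontrivial point — is handling the common kernel direction correctly: both Laplacians annihilate $\vOneCol[|\setN|]$, so $\mLap[m_2]\posSemiDef\mLap[m_1]$ can only be (and need only be) checked on $\vOneCol[|\setN|]^\perp$, where the crude bound ``largest eigenvalue of $\mLap[m_1]$ on $\vOneCol[|\setN|]^\perp$ $\le$ smallest eigenvalue of $\mLap[m_2]$ on $\vOneCol[|\setN|]^\perp$'' is enough. The complete-graph assumption on $m_1$ is exactly what makes that bound sharp, since it forces all nonzero eigenvalues of $\mLap[m_1]$ to equal $\seigLapTwo[m_1]$ (equivalently, $\mLap[m_1]$ is a scalar multiple of $\matrix{\Pi}$); for a non-complete reference the step fails, as $\mLap[m_1]$ and $\mLap[m_2]$ need not be simultaneously diagonalizable.
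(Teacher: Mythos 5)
Your proof is correct, and it follows the same overall strategy as the paper: reduce the claim to the Laplacian ordering $\mLap[m_1] - \mLap[m_2] \negSemiDef 0$ of \autoref{prop:Analysis:Cluster_comparison} and then exploit that a (uniformly weighted) complete graph has its nonzero eigenvalue with multiplicity $|\setN|-1$, so that $\mLap[m_1]$ is $\seigLapTwo[m_1]$ times the projector onto $\vOneCol[|\setN|]^{\perp}$. Where you differ is in how that ordering is verified: the paper writes both Laplacians in eigendecomposed form, uses the degenerate spectrum of the complete graph to choose a common eigenvector matrix $\mVI[m_1]\coloneqq\mVI[m_2]$, and thereby reduces the matrix inequality to a termwise eigenvalue comparison that collapses to \eqref{eq:Analysis:Laplacian_alg_conn_compare}; you instead test the difference against an arbitrary vector split as $c\vOneCol[|\setN|]+\vec{y}$ with $\vec{y}\perp\vOneCol[|\setN|]$ and invoke the Rayleigh-quotient bound $\vec{y}^{\Transpose}\mLap[m_2]\vec{y}\ge\seigLapTwo[m_2]\vec{y}^{\Transpose}\vec{y}$ (the Courant--Fischer characterisation the paper only cites in a later remark). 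Your quadratic-form route is somewhat more self-contained and avoids the bookkeeping of the simultaneous-diagonalisation step (symmetry/orthogonality of the shared basis), while the paper's version makes the pairwise eigenvalue comparison \eqref{eq:Analysis:Laplacian_eigenvalue_compare} explicit. One further point in your favour: you state explicitly that the reference complete graph carries uniform line conductances, which is exactly what the multiplicity-$(|\setN|-1)$ argument (and hence the whole proposition) requires; the paper leaves this implicit via its citation for unweighted complete graphs and the accompanying remark on choosing the fictitious line resistances.
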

\begin{proof}
	Consider the eigendecomposition of the $|\setN|\times|\setN|$ Laplacian of a weakly connected graph
	\begin{subequations} \label{eq:Analysis:Laplacian_eigendecomp}
		\begin{align}
			\mLap &= \mV \mEig \mVinv, &\quad \mV &= [\vOneCol[|\setN|], \mVI],\\
			\mEig &= \Diag[\seig[n]], &\quad 0 &= \seig[1] < \seig[2] \le \dots \le \seig[|\setN|] ,
		\end{align}
	\end{subequations}
	with $\mLap$ having a kernel consisting of the vector of ones $\vOneCol[|\setN|]$ and an image comprising $\mVI$.
	Consider now the requirement \eqref{eq:Analysis:Lap_comparison} in \autoref{prop:Analysis:Cluster_comparison} where the Laplacians are decomposed according to \eqref{eq:Analysis:Laplacian_eigendecomp}
	\begin{equation} \label{eq:Analysis:Decomposed_Laplacian_compare}
		\mV[m_1] \mEig[m_1] \mVinv[m_1] - \mV[m_2] \mEig[m_2] \mVinv[m_2] \negSemiDef 0 .
	\end{equation}
	Since a complete graph has an eigenvalue with multiplicity $|\setN| - 1$ \cite[p.~8]{Brouwer2012}, $\mVI[m_1]$ may comprise any arbitrary set of $\cramped{|\setN| - 1}$ linearly independent vectors with $\mVI[m_1] \perp \vOneCol[|\setN|]$.
	We therefore choose $\mVI[m_1] \coloneqq \mVI[m_2]$ to simplify \eqref{eq:Analysis:Decomposed_Laplacian_compare} to
	\begin{equation} \label{eq:Analysis:Laplacian_eigenvalue_compare}
		\seig[n](\mLap[m_1]) \le \seig[n](\mLap[m_2]), \quad n = 1, \dots, |\setN|.
	\end{equation}
	Furthermore, since cluster $m_1$ has $\seigLapTwo[m_1] = \seig[i](\mLap[m_1])$ for $i = 3,\dots,|\setN|$ and since $\seig[1] = 0$, \eqref{eq:Analysis:Laplacian_eigenvalue_compare} reduces to comparing the algebraic connectivity of the two clusters \eqref{eq:Analysis:Laplacian_alg_conn_compare}.
\end{proof}
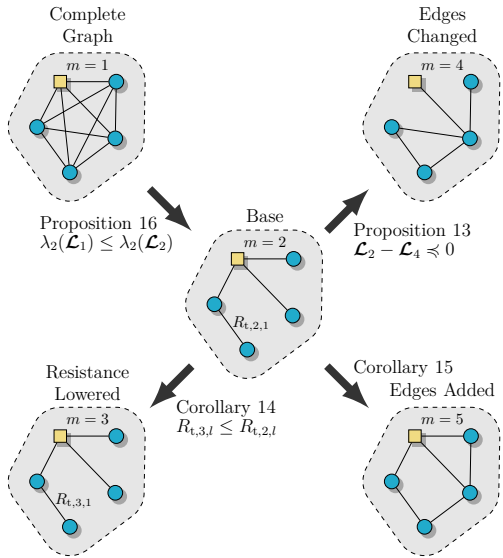
\begin{figure}[!t]
	\centering
	\resizebox{0.75\columnwidth}{!}{\begin{tikzpicture}
	\def\nodeXdist{1.5cm}
	\def\nodeYdist{1.15cm}
	\def\nodeXshift{\nodeXdist*0.75}
	\def\boundR{0.65cm}
	\def\clusterXdist{8.0cm}
	\def\clusterYdist{8.0cm}
	
	\def\Nodes{1,2,3,4,5}
	\def\NodesVSet{3/left}
	\def\NodesVFollow{1/below,2/left,4/above,5/right}
	
	\def\InternalEdgesComplete{1/2, 1/3, 1/4, 1/5, 2/3, 2/4, 2/5, 3/4, 3/5, 4/5}
	\def\InternalEdgesBase{1/2, 2/3, 3/4, 3/5}
	\def\InternalEdgesModified{1/2, 1/5, 2/5, 3/5, 4/5}
	\def\InternalEdgesResistance{1/2, 2/3, 3/4, 3/5}
	\def\InternalEdgesLineAdded{1/2, 2/3, 3/4, 3/5, 1/5, 4/5}
	
	\tikzstyle{nodeVSet}     		= [draw, rectangle, minimum width = 8pt, minimum height = 8pt, drop shadow={shadow scale=1.1,fill=black!35}, fill = buff]
	\tikzstyle{nodeVsetSmall}		= [draw, rectangle, minimum width = 4pt, minimum height = 4pt, drop shadow={shadow scale=1.1,fill=black!35}, fill = buff]
	\tikzstyle{nodeVFollow}  		= [draw, circle, drop shadow={shadow scale=1.1,fill=black!35}, radius=0.2pt, fill = ballblue]
	\tikzstyle{nodeVFollowSmall}	= [draw, circle, drop shadow={shadow scale=1.1,fill=black!35}, radius=0.1pt, fill = ballblue]
	
	\tikzstyle{lineInternal}		= [-]
	\tikzstyle{lineExternal}		= [-, looseness=0]
	\tikzstyle{backgroundCluster}	= [fill=black!10, draw, dashed]

	
	\coordinate(C1Mid);
	\path (C1Mid) ++(\clusterXdist, 0) coordinate (C4Mid) ++(0, -\clusterYdist)  coordinate (C5Mid) ++(-\clusterXdist, 0) coordinate (C3Mid) ++(\clusterXdist*0.5, \clusterYdist*0.5) coordinate (C2Mid);
	
	
	\path (C1Mid) ++(\nodeXshift*0.25, -\nodeYdist) coordinate (C1cNode1) ++(-\nodeXshift,\nodeYdist*0.5) coordinate (C1cNode2) ++(0,\nodeYdist) coordinate (C1cNode3) ++(\nodeXshift,\nodeYdist*0.5) coordinate (C1cNode4) ++(\nodeXshift*0.5,-\nodeYdist) coordinate (C1cNode5);
	\path (C2Mid) ++(\nodeXshift*0.25, -\nodeYdist) coordinate (C2cNode1) ++(-\nodeXshift,\nodeYdist*0.5) coordinate (C2cNode2) ++(0,\nodeYdist) coordinate (C2cNode3) ++(\nodeXshift,\nodeYdist*0.5) coordinate (C2cNode4) ++(\nodeXshift*0.5,-\nodeYdist) coordinate (C2cNode5);
	\path (C3Mid) ++(\nodeXshift*0.25, -\nodeYdist) coordinate (C3cNode1) ++(-\nodeXshift,\nodeYdist*0.5) coordinate (C3cNode2) ++(0,\nodeYdist) coordinate (C3cNode3) ++(\nodeXshift,\nodeYdist*0.5) coordinate (C3cNode4) ++(\nodeXshift*0.5,-\nodeYdist) coordinate (C3cNode5);
	\path (C4Mid) ++(\nodeXshift*0.25, -\nodeYdist) coordinate (C4cNode1) ++(-\nodeXshift,\nodeYdist*0.5) coordinate (C4cNode2) ++(0,\nodeYdist) coordinate (C4cNode3) ++(\nodeXshift,\nodeYdist*0.5) coordinate (C4cNode4) ++(\nodeXshift*0.5,-\nodeYdist) coordinate (C4cNode5);
	\path (C5Mid) ++(\nodeXshift*0.25, -\nodeYdist) coordinate (C5cNode1) ++(-\nodeXshift,\nodeYdist*0.5) coordinate (C5cNode2) ++(0,\nodeYdist) coordinate (C5cNode3) ++(\nodeXshift,\nodeYdist*0.5) coordinate (C5cNode4) ++(\nodeXshift*0.5,-\nodeYdist) coordinate (C5cNode5);

	\foreach \a/\b in \NodesVSet {
		\coordinate[rotate around={{-90+atan2(\nodeXshift,\nodeYdist*0.5)}:(C1Mid)}] (C1cNodeRot\a) at (C1cNode\a);
		\coordinate[rotate around={{-90+atan2(\nodeXshift,\nodeYdist*0.5)}:(C2Mid)}] (C2cNodeRot\a) at (C2cNode\a);
		\coordinate[rotate around={{-90+atan2(\nodeXshift,\nodeYdist*0.5)}:(C3Mid)}] (C3cNodeRot\a) at (C3cNode\a);
		\coordinate[rotate around={{-90+atan2(\nodeXshift,\nodeYdist*0.5)}:(C4Mid)}] (C4cNodeRot\a) at (C4cNode\a);
		\coordinate[rotate around={{-90+atan2(\nodeXshift,\nodeYdist*0.5)}:(C5Mid)}] (C5cNodeRot\a) at (C5cNode\a);
		\node[nodeVSet](C1node\a) at(C1cNodeRot\a) {};
		\node[nodeVSet](C2node\a) at(C2cNodeRot\a) {};
		\node[nodeVSet](C3node\a) at(C3cNodeRot\a) {};
		\node[nodeVSet](C4node\a) at(C4cNodeRot\a) {};
		\node[nodeVSet](C5node\a) at(C5cNodeRot\a) {};
	}
	\foreach \a/\b in \NodesVFollow {
		\coordinate[rotate around={{-90+atan2(\nodeXshift,\nodeYdist*0.5)}:(C1Mid)}] (C1cNodeRot\a) at (C1cNode\a);
		\coordinate[rotate around={{-90+atan2(\nodeXshift,\nodeYdist*0.5)}:(C2Mid)}] (C2cNodeRot\a) at (C2cNode\a);
		\coordinate[rotate around={{-90+atan2(\nodeXshift,\nodeYdist*0.5)}:(C3Mid)}] (C3cNodeRot\a) at (C3cNode\a);
		\coordinate[rotate around={{-90+atan2(\nodeXshift,\nodeYdist*0.5)}:(C4Mid)}] (C4cNodeRot\a) at (C4cNode\a);
		\coordinate[rotate around={{-90+atan2(\nodeXshift,\nodeYdist*0.5)}:(C5Mid)}] (C5cNodeRot\a) at (C5cNode\a);
		\node[nodeVFollow](C1node\a) at(C1cNodeRot\a) {};
		\node[nodeVFollow](C2node\a) at(C2cNodeRot\a) {};
		\node[nodeVFollow](C3node\a) at(C3cNodeRot\a) {};
		\node[nodeVFollow](C4node\a) at(C4cNodeRot\a) {};
		\node[nodeVFollow](C5node\a) at(C5cNodeRot\a) {};
	}
		
	\foreach \a/\b in \InternalEdgesComplete {
		\draw[lineInternal](C1node\a) to (C1node\b);	}
	\foreach \a/\b in \InternalEdgesBase {
		\draw[lineInternal](C2node\a) to (C2node\b);	}
	\foreach \a/\b in \InternalEdgesResistance {
		\draw[lineInternal](C3node\a) to (C3node\b);	}
	\foreach \a/\b in \InternalEdgesModified {
		\draw[lineInternal](C4node\a) to (C4node\b);	}
	\foreach \a/\b in \InternalEdgesLineAdded {
		\draw[lineInternal](C5node\a) to (C5node\b);	}
	
	\path (C1Mid) -- (C2Mid) node(cArrowStart1)[pos=0.35]{} node(cArrowEnd1)[pos=0.65]{};
	\path (C2Mid) -- (C3Mid) node(cArrowStart2)[pos=0.35]{} node(cArrowEnd2)[pos=0.65]{};
	\path (C2Mid) -- (C4Mid) node(cArrowStart3)[pos=0.35]{} node(cArrowEnd3)[pos=0.65]{};
	\path (C2Mid) -- (C5Mid) node(cArrowStart4)[pos=0.35]{} node(cArrowEnd4)[pos=0.65]{};
	
	\foreach \a in {1,2,3,4} {
		\draw[-{latex[round]}, black!80, fill=black!80, line width = 5](cArrowStart\a) -- (cArrowEnd\a);
	}


	\path (cArrowStart1) -- (cArrowEnd1) node[pos=0.7, above=5pt, align=left, anchor=north east]{\large \autoref{prop:Analysis:Alg_Conn_comparison}\\ \large $\seigLapTwo[1] \le \seigLapTwo[2]$};
	\path (cArrowStart2) -- (cArrowEnd2) node[pos=0.5, below=5pt, align=left, anchor=north west]{\large \autoref{prop:Analysis:Modified_line_resistances}\\ \large $\sRTx[3,l] \le \sRTx[2,l]$};
	\path (cArrowStart3) -- (cArrowEnd3) node[pos=0.5, below=5pt, align=left, anchor=north west]{\large \autoref{prop:Analysis:Cluster_comparison}\\ \large $\mLap[2] - \mLap[4] \negSemiDef 0$};
	\path (cArrowStart4) -- (cArrowEnd4) node[pos=0.5, above=12pt, align=left, anchor=west]{\large \autoref{prop:Analysis:Added_lines}};

	
	\path (C2node1) -- (C2node2) node[left=1pt,pos=0.6, align=left, anchor=west] {$\sRTx[2,1]$};
	\path (C3node1) -- (C3node2) node[left=1pt,pos=0.6, align=left, anchor=west] {$\sRTx[3,1]$};

	\begin{scope}[on background layer]
		
		\path[backgroundCluster] \convexpath{C1node1,C1node2,C1node3,C1node4,C1node5}{\boundR};
		\path[backgroundCluster] \convexpath{C2node1,C2node2,C2node3,C2node4,C2node5}{\boundR};
		\path[backgroundCluster] \convexpath{C3node1,C3node2,C3node3,C3node4,C3node5}{\boundR};
		\path[backgroundCluster] \convexpath{C4node1,C4node2,C4node3,C4node4,C4node5}{\boundR};
		\path[backgroundCluster] \convexpath{C5node1,C5node2,C5node3,C5node4,C5node5}{\boundR};
		
		\path (C1node3) -- (C1node4) node[above=0.7cm,midway,sloped,font=\large, align=center] {Complete\\ Graph};
		\path (C2node3) -- (C2node4) node[above=0.7cm,midway,sloped,font=\large] {Base};
		\path (C3node3) -- (C3node4) node[above=0.7cm,midway,sloped,font=\large, align=center] {Resistance\\ Lowered};
		\path (C4node3) -- (C4node4) node[above=0.7cm,midway,sloped,font=\large, align=center] {Edges\\ Changed};
		\path (C5node3) -- (C5node4) node[above=0.7cm,midway,sloped,font=\large, align=center] {Edges Added};
		
		\path (C1node3) -- (C1node4) node[above=0.15cm,midway,sloped] {$m = 1$};
		\path (C2node3) -- (C2node4) node[above=0.15cm,midway,sloped] {$m = 2$};
		\path (C3node3) -- (C3node4) node[above=0.15cm,midway,sloped] {$m = 3$};
		\path (C4node3) -- (C4node4) node[above=0.15cm,midway,sloped] {$m = 4$};
		\path (C5node3) -- (C5node4) node[above=0.15cm,midway,sloped] {$m = 5$};

	\end{scope}
\end{tikzpicture}}
	\caption{Overview and examples of the passivity preserving propositions and corollaries when changing the edges of a cluster.}
	\label{fig:Analysis:Cluster_comparisons}
\end{figure}
Through \autoref{prop:Analysis:Alg_Conn_comparison}, a cluster with a fictitious complete graph topology may be used to verify the \ac{OSEIP} of several clusters with the same number of buses but with different edges. Furthermore, the \ac{OSEIP} verification requires only a single eigenvalue of the Laplacians to be compared. The above propositions and corollaries dealing with the preservation of cluster \ac{OSEIP} when the graph edges are changed are summarised visually in \autoref{fig:Analysis:Cluster_comparisons}. 
\begin{remark} \label{rem:Analysis:Algebraic_Connectivity}
	The algebraic connectivity $\seigLapTwo$ may be computed using the Courant-Fischer theorem \cite{Fiedler1973, deAbreu2007}
	\begin{equation} \label{eq:Analysis:Comp_Algebraic_connectivity}
		\seigLapTwo = \min_{\vw \ne \vec{0}, \vw \perp \vOneCol} \frac{\vwT \mLap \vw}{\vwT \vw}.
	\end{equation}
	Additionally, $\seigLapTwo$ is known for certain graph types, like star, cycle, or path graphs \cite[p.~8]{Brouwer2012}\cite{Fiedler1973}, and bounds for $\seigLapTwo$ can be computed for more general graphs \cite[p.~52]{Brouwer2012}\cite{deAbreu2007}.
\end{remark}
\begin{remark} \label{rem:Analysis:Complete_graph_connectivity}
	When verifying \autoref{prop:Analysis:Reduced_cluster_passivity} for a complete graph, it is desirable to set the line resistances $\mRTx$ in \eqref{eq:Analysis:Laplacian} as large as possible. This leads to a smaller algebraic connectivity $\seig[2]$. The comparison \eqref{eq:Analysis:Laplacian_alg_conn_compare} in \autoref{prop:Analysis:Alg_Conn_comparison} will thus typically compare the connectivity of a complete graph with higher line resistances to a cluster containing fewer lines with lower resistances.
\end{remark}
\begin{remark} \label{rem:Analysis:Vertex_change_robustness}
	Through \autoref{thm:Clustering:Microgrid_stability}, the microgrid stability is maintained when \ac{OSEIP} clusters are added, removed or exchanged. Robustness against buses in a cluster connecting or disconnecting can thus be ensured by verifying the \ac{OSEIP} of the cluster before and after the bus (dis)connecting. Such robustness can be ensured practically by verifying \autoref{prop:Analysis:Reduced_cluster_passivity} for clusters with the differing numbers of buses and with connection topologies described by complete buses. Robustness for the actual cluster can then be verified by finding $\seigLapTwo$ and applying \autoref{prop:Analysis:Alg_Conn_comparison} to the state before and after the bus (dis)connecting.
\end{remark}

	\section{Simulation} \label{sec:Simulation}
%
%
%
In this section, we demonstrate the robust asymptotic stability of a DC microgrid with non-monotone loads. The simulation is conducted in MATLAB/Simulink using the Simscape electrical toolbox.
The code used to generate the simulations in this paper is available at \url{https://gitlab.kit.edu/albertus.malan/dc-clusters-sim}.
The simulation setup is described in \autoref{sec:Simulation:Setup}. Thereafter, we verify the \ac{EIP} of the clusters and the microgrid stability in \autoref{sec:Simulation:Clusters} and present the results in \autoref{sec:Simulation:Results}.
\subsection{Simulation Setup} \label{sec:Simulation:Setup}
%
We consider the network of 21 buses in \autoref{fig:Problem:Example_network} of which four buses are equipped with voltage setting controllers and the rest with voltage following controllers. The microgrid parameters are given in \autoref{tab:Simulation:Sim_params}, where uniform \ac{DGU} parameters are used for simplicity. All lines have the same per-length parameters and their lengths in \autoref{tab:App_Sim_Data:Line_Params} are randomly generated. Furthermore, the loads parameters are randomly generated according to the limits in \autoref{tab:Simulation:Sim_params}.\footnote{As an example, a $\SI{212.2}{\kilo\watt}$ P load with $\svCrit = \SI{266}{\volt}$ has $\sY[n] = 3$ (see \cite[Remark~9]{Malan2023}).} The exact load values used can be found in the simulation files under \url{https://gitlab.kit.edu/albertus.malan/dc-clusters-sim}.

The control parameters \autoref{tab:Simulation:Control_params} are chosen as follows. The control parameters $\spsiV$, $\spsiI$, and $\somegaV$ for the voltage setting control \eqref{eq:Control:Voltage_setting} are designed based on the guidelines in \cite{Ferguson2023}. Note that all setpoints $\svSP[j]$ are set to $\SI{380}{\volt}$.
The parameters for the voltage following control \eqref{eq:Control:Voltage_following} are tuned for sufficient damping of the voltage and current states via $\salpha$ and $\sbeta$, respectively. Furthermore, $\stau$ is chosen such that the voltage following controllers have a settling time of around $\SI{5}{\milli\second}$.
For both voltage controllers, $\sdelta$ is chosen to be at least an order of magnitude greater than $\sY[n]$. This ensures that the effects of non-monotone loads are sufficiently dominated by the controlled buses.

The test sequence for the microgrid is as follows.
%
\begin{itemize}
	\item $t=\SI{0}{\second}$: All buses start with $\sv[n] = \SI{0}{\volt}$ and are connected as in \autoref{fig:Problem:Example_network}. Random load parameters are chosen for each node.
	\item $t=\SI{0.2}{\second}$: Bus~14 depletes its local power supply and thus switches to a voltage following controller \eqref{eq:Control:Voltage_following}. Load parameters of $\SI{50}{\percent}$ of the buses selected at random are assigned new random values.
	\item $t=\SI{0.4}{\second}$: Power is again available at Bus~14, and switches back to the voltage setting controller \eqref{eq:Control:Voltage_setting}. All lines interconnecting the clusters are disconnected. Load parameters of $\SI{50}{\percent}$ of the buses selected at random are assigned new random values.
	\item $t=\SI{0.6}{\second}$: The clusters are reconnected as in \autoref{fig:Problem:Example_network}. All load parameters are set to $\sZinv[n] = \SI{-1}{\siemens}$, $\sI[n] = \SI{-10}{\ampere}$, $\sP[n] = \SI{-10}{\kilo\watt}$ $(\implies \sY[n] = 1.18)$, thus injecting significant power.
	\item $t=\SI{0.8}{\second}$: All load parameters are set to $\sZinv[n] = \SI{0}{\siemens}$, $\sI[n] = \SI{0}{\ampere}$, $\sP[n] = \SI{100}{\kilo\watt}$ $(\implies \sY[n] = 1.42)$, thus consuming significant power.
\end{itemize}
\begin{table}[!t]
	\scriptsize
	\centering
	\renewcommand{\arraystretch}{1.15}
	\caption{Simulation Parameters}
	\label{tab:Simulation:Sim_params}
	\begin{tabular}{@{\;}lr@{\;}lr@{\;}l@{\;}}
		\noalign{\hrule height 1.0pt}
		Voltages & $\svRef =$ & $\SI{380}{\volt}$ & $\svCrit =$ & $\SI{266}{\volt}$  \\
		\hline
		Elec. Lines \eqref{eq:Problem:line_dynamics} & $\sRTx[l] =$ & $\SI{0.01}{\ohm\per\kilo\meter}$ & $\sLTx[l] =$ & $\SI{2}{\micro\henry\per\kilo\meter}$ \\
		& $\sCTx[l] =$ & $\SI{22}{\nano\farad\per\kilo\meter}$ & length $\in$ & $[0.2; 5]\,\si{\kilo\meter}$ \\
		\hline
		ZIP Loads \eqref{eq:Problem:ZIP_model} & $|\sZinv| \le$ & $\SI{0.208}{\siemens}$ & $|\sI\:\!| \le$ & $\SI{79}{\ampere}$ \\
		& $|\sP\,| \le$ & $\SI{30}{\kilo\watt}$ & $\sY[n] \le$ & $3$ \\
		\hline
		DGU Filters & $\sR[n] =$ & $\SI{0.1}{\ohm}$ & $\sL[n] =$ & $\SI{1.8}{\milli\henry}$\\
		\eqref{eq:Problem:bus_S_dynamics_passive_load}, \eqref{eq:Problem:bus_F_dynamics_passive_load} & $\sC[n] =$ & $\SI{2.2}{\milli\farad}$ && \\
		\noalign{\hrule height 1.0pt}
	\end{tabular}
\end{table}
\begin{table}[!t]
	\scriptsize
	\centering
	\renewcommand{\arraystretch}{1.15}
	\caption{Rounded Line Lengths}
	\label{tab:App_Sim_Data:Line_Params}
	\begin{tabular}{c@{\quad}c@{\qquad}c@{\quad}c@{\qquad}c@{\quad}c@{\qquad}c@{\quad}c}
		\noalign{\hrule height 1.0pt}
		Line & $\si{\kilo\meter}$ & Line & $\si{\kilo\meter}$ & Line & $\si{\kilo\meter}$ & Line & $\si{\kilo\meter}$ \\
		\hline
		1--2 & $1.92$ & 1--7 & $2.49$ & 2--3 & $1.52$ & 3--4 & $1.76$ \\
		3--5 & $1.67$ & 4--15 & $1.57$ & 6--7 & $2.13$ & 7--8 & $2.69$ \\
		8--9 & $3.75$ & 8--11 & $4.44$ & 8--13 & $2.06$ & 9--10 & $2.62$ \\
		10--12 & $1.06$ & 11--12 & $1.65$ & 12--20 & $1.48$ & 13--14 & $3.35$ \\
		14--15 & $3.83$ & 14--16 & $1.20$ & 15--16 & $2.53$ & 16--18 & $0.99$ \\
		17--18 & $1.44$ & 17--20 & $3.31$ & 18--19 & $4.08$ & 18--21 & $1.17$ \\
		19--20 & $1.93$ & 20--21 & $1.31$  && && \\
		\noalign{\hrule height 1.0pt}
	\end{tabular}
\end{table}
\begin{table}[!t]
	\scriptsize
	\centering
	\renewcommand{\arraystretch}{1.15}
	\caption{Controller Parameters}
	\label{tab:Simulation:Control_params}
	\begin{tabular}{lr@{\;}l@{\quad}r@{\;}l@{\quad}r@{\;}l}
		\noalign{\hrule height 1.0pt}
		Voltage Setting \eqref{eq:Control:Voltage_setting} & $\somegaV =$ & $\SI{e6}{}$ & $\spsiV =$ & $25000$ & $\spsiI =$ & $250$ \\
		\hline
		Voltage Following \eqref{eq:Control:Voltage_following} & $\salpha =$ & $1$ & $\sbeta =$ & $20$ & $\tau =$ & $100$\\
		\hline
		Common Parameter & $\sdelta =$ & $100$ & & & & \\
		\noalign{\hrule height 1.0pt}
	\end{tabular}
\end{table}
\subsection{Cluster OS-EIP Verification} \label{sec:Simulation:Clusters}
To verify the \ac{OSEIP}\footnote{In each case, $\siOut = 0.001$ is selected for the cluster \ac{OSEIP}.} of the various clusters, we start by considering clusters with one voltage setting bus and $|\setN[m]|-1$ voltage following buses interconnected by a complete graph topology. The required algebraic connectivity $\seigLapTwo[m]$ that ensures \ac{OSEIP} of these fictitious clusters can then be found using \autoref{prop:Analysis:Reduced_cluster_passivity} and minimizing for $\seigLapTwo[m]$. The results for various cluster sizes are shown in \autoref{tab:Simulation:Cluster_complete}.
\begin{table}[!t]
	\scriptsize
	\centering
	\renewcommand{\arraystretch}{1.15}
	\caption{Minimum $\seig[2]$ for Complete Topology Clusters}
	\label{tab:Simulation:Cluster_complete}
	\begin{tabular}{lccccccc}
		\noalign{\hrule height 1.0pt}
		Buses & 3 & 4 & 5 & 6 & 7 & 8 & 9 \\
		\hline
		$\seigLapTwo$ & $9.3$ & $12.8$ & $16.6$ & $20.6$ & $24.9$ & $29.6$ & $34.6$ \\
		\noalign{\hrule height 1.0pt}
	\end{tabular}
\end{table}

The algebraic connectivities of the clusters in \autoref{fig:Problem:Example_network} are then computed using \eqref{eq:Analysis:Comp_Algebraic_connectivity}. To ensure for an additional robustness, the line resistances for the clusters are increased by $\SI{10}{\percent}$ for the \ac{OSEIP} verification steps. This yields $\seigLapTwo[3] = 30.5$ for the four bus cluster, $\seigLapTwo[1] = 28.4$ and $\seigLapTwo[4] = 67.1$ for the five bus clusters and $\seigLapTwo[2] = 12.0$ for the seven bus cluster. Comparing these values to \autoref{tab:Simulation:Cluster_complete} shows that Clusters~1, 3, and 4 are \ac{OSEIP} through \autoref{prop:Analysis:Alg_Conn_comparison}. However, since Cluster~2 does not meet condition \eqref{eq:Analysis:Laplacian_alg_conn_compare}, we instead verify its \ac{OSEIP} using \autoref{prop:Analysis:Reduced_cluster_passivity}. Thus, since all four clusters are \ac{OSEIP}, asymptotic stability follows from \autoref{thm:Clustering:Microgrid_stability}.

In the period $\SI{0.2}{\second} < t \le \SI{0.4}{\second}$, where Bus~14 can no longer supply power in steady state, the cluster compositions are modified as follows: Bus~15 is assigned to Cluster~1, Buses~13 and 14 to Cluster~2 and Bus~16 to Cluster~4. This yields the new connectivities $\seigLapTwo[1] = 20.8$ for Cluster~1 with six buses, $\seigLapTwo[2] = 11.9$ for Cluster~2 with nine buses, and $\seigLapTwo[4] = 32.5$ for Cluster~4 with six buses, and where Cluster~3 is dissolved entirely. Again, Clusters~1 and 4 are \ac{OSEIP} due to \autoref{prop:Analysis:Alg_Conn_comparison} (cf.\ \autoref{tab:Simulation:Cluster_complete}), whereas the \ac{OSEIP} of Cluster~2 is verified using \autoref{prop:Analysis:Reduced_cluster_passivity}. Asymptotic stability again follows from \autoref{thm:Clustering:Microgrid_stability}.

Finally, for $\SI{0.4}{\second} < t \le \SI{0.6}{\second}$, the clusters are disconnected to form four separated microgrids. Asymptotic stability of the four microgrids then follow directly from the \ac{OSEIP} and the \ac{ZSD} properties of the respective clusters.

Note that the \ac{OSEIP} and stability verifications are once-off and are performed offline.
\subsection{Results} \label{sec:Simulation:Results}
%
%
\begin{figure}[!t]
	\centering
	\resizebox{0.7\columnwidth}{!}{\includegraphics[scale=1]{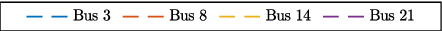}}\\[5pt]
	\resizebox{0.85\columnwidth}{!}{\includegraphics[scale=1]{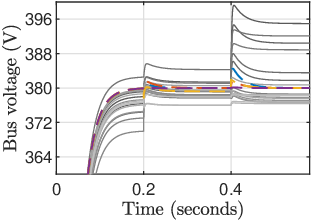} \hspace*{-5pt} \includegraphics[scale=1]{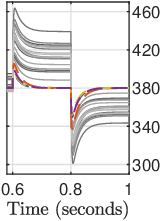}}
	\caption{Bus voltages of the DC microgrid. Voltage following buses are shown in grey.}
	\label{fig:Simulation:bus_voltage}
\end{figure}
\begin{figure}[!t]
	\centering
	\resizebox{0.85\columnwidth}{!}{\includegraphics[scale=1]{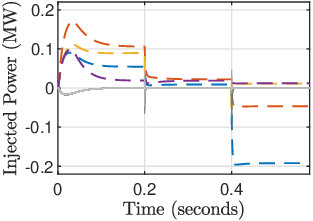} \hspace*{-5pt} \includegraphics[scale=1]{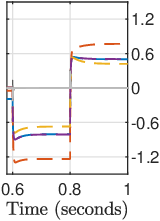}}
	\caption{Injected power at the buses. Voltage following buses are shown in grey.}
	\label{fig:Simulation:power_inject}
\end{figure}
\begin{figure}[!t]
	\centering
	\resizebox{0.85\columnwidth}{!}{\includegraphics[scale=1]{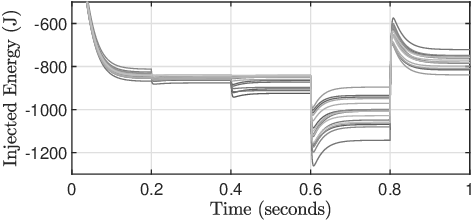}}
	\caption{Energy injected at the voltage following buses in set $\setF$.}
	\label{fig:Simulation:energy_inject}
\end{figure}
As predicted by the analysis, asymptotic stability is achieved for the entire duration of the simulation, as shown by the bus voltages in \autoref{fig:Simulation:bus_voltage}. For the realistic load values before $t = \SI{0.6}{\second}$, all bus voltages stay within $\SI{5}{\percent}$ of $\svRef$. Furthermore, the asymptotic stability is maintained for the periods with large negative loads, $\SI{0.6}{\second} < t \le \SI{0.8}{\second}$, and large positive loads, $\SI{0.8}{\second} < t \le \SI{1}{\second}$. Note that in these cases, the large differences between the bus voltages and $\svRef$ is a consequence of the network limitations. Nevertheless, the voltages at the voltage setting buses are successfully regulated to the setpoint $\svSP = \SI{380}{\volt}$. The power injected at the buses, shown in \autoref{fig:Simulation:power_inject}, similarly demonstrate that the voltage following buses inject zero power in steady state. Finally, although the peak power needed by the voltage following controllers are not insignificant, reaching a maximum of $\SI{0.3}{\mega\watt}$ at $t = \SI{0.8}{\second}$, the required energy, shown in \autoref{fig:Simulation:energy_inject}, remains small. Apart from the energy required for the start from $\SI{0}{\volt}$, the absolute energy deltas in the transient regions never exceed $\SI{570}{\joule}$ or $\SI{0.158}{\watt\hour}$.

	\section{Conclusion}\label{sec:Conclusion}
\acresetall

In this paper, we presented novel controllers for buses with and without steady-state power and which have uncertain loads exhibiting non-monotone incremental impedances. We proposed an \ac{LMI} to verify the \ac{OSEIP} of a cluster comprising both types of buses and showed that a microgrid comprising such clusters are asymptotically stable. Furthermore, by employing singular perturbation theory, we derived a reduced order \ac{LMI} for the cluster \ac{OSEIP} and showed that the \ac{OSEIP} is robust against certain parameter and topology changes.
In future work, we plan to integrate these results into networks comprising multiple energy domains.

	\appendices
	
	\section{Proof of \autoref{thm:Control:V_setting_passive}} \label{sec:App:Proof_V_setting_passive}
\begin{proof}
	The \ac{OSEIP} of \eqref{eq:Control:Shifted_voltage_setting_system} is verified if $\cramped{\sHsdot(\vxse) \le \syse\suse - \siOut \syse^2}$ and if $\sHs(\vxse) > 0$ for $\vxse \ne \vec{0}$ and $\sHs(\vec{0}) = 0$. Starting with the former, we calculate the time derivative $\sHsdot$ while noting that $\mQs\mPs  = \mPs\mQs$:
	\begin{align}
		\sHsdot(\vxse) &= \nicefrac{1}{2}(\vxsedot^\Transpose \mQs \mPs \vxse + \vxseT \mPs \mQs \vxsedot) \notag\\
		&= \nicefrac{1}{2}(\vxseT \mAsT \mPs \vxse + \vxseT \mPs \mAs \vxse) + \vxseT \vbs \sus .
	\end{align}
	Thus, $\sHsdot(\vxs) \le \syse\suse - \siOut \syse^2 = \vxseT \vbs \sus - \siOut \vxseT \vbs \vbsT \vxse$ if
	\begin{equation} \label{eq:Control:V_Setting_LMI}
		\setUnderBraceMatrix{-\mAsT\mPs - \mPs \mAs}{\begin{bmatrix}
				2 \spsiInt \sps[2] & \spsiInt\sps[3] + \spsiIMod\sps[2] & \spsiVMod \sps[2] - \somegaVMod\sps[1] \\
				\spsiInt\sps[3] + \spsiIMod\sps[2] & 2\spsiIMod\sps[3] & \spsiVMod\sps[3] - \somegaVMod\sps[2] - 1 \\
				\spsiVMod \sps[2] - \somegaVMod\sps[1] & \spsiVMod\sps[3] - \somegaVMod\sps[2] - 1 & 2(\sdelta - \sY - \siOut)
		\end{bmatrix}} \!\posSemiDef\! 0
	\end{equation}
	We verify the \ac{LMI} in \eqref{eq:Control:V_Setting_LMI} using Sylvester's criterion \cite[Theorem~7.2.5]{Horn2012}, by which $\det(\mAsT\mPs + \mPs \mAs) = 0$ and the last $N-1$ trailing principal minors must be positive. The determinant of \eqref{eq:Control:V_Setting_LMI} is zero if the three minors obtained by removing the last row are zero. Starting with the first minor,
	\begin{subequations}
		\begin{align}
			\det\begin{pmatrix}
				2\spsiInt\sps[2] & \spsiInt\sps[3] + \spsiIMod\sps[2] \\
				\spsiInt\sps[3] + \spsiIMod\sps[2] & 2\spsiIMod\sps[3] \\
			\end{pmatrix} &= 0 \\
			4\spsiInt \spsiIMod\sps[2]\sps[3] - (\spsiInt\sps[3] + \spsiIMod\sps[2])^2 &= 0 \\
			-(\spsiInt\sps[3] - \spsiIMod\sps[2])^2 &= 0,
		\end{align}
	\end{subequations}
	which holds if $\sps[3]$ is chosen as in \eqref{eq:Control:V_set_p_matrix}. For the second minor of the three minors, by substituting in $\sps[3]$ in \eqref{eq:Control:V_set_p_matrix}, we find
	\begin{subequations}
		\begin{align}
			\det\begin{pmatrix}
				2 \sps[2] & \spsiVMod \sps[2] - \somegaVMod\sps[1] \\
				\spsiInt\sps[3] {+} \spsiIMod\sps[2] & \spsiVMod\sps[3] {-} \somegaVMod\sps[2] {-} 1 \\
			\end{pmatrix} &= 0 \\
			\det\begin{pmatrix}
				2 \sps[2] & \spsiVMod \sps[2] - \somegaVMod\sps[1] \\
				2 \spsiIMod \sps[2] & \spsiIMod\spsiVMod\sps[2] - \somegaVMod\sps[2] - 1 \\
			\end{pmatrix} &= 0 \\
			2 \sps[2] \left(\spsiIMod\spsiVMod\sps[2] \,{-}\, \somegaVMod\sps[2] \,{-}\, 1\right) &= 2\spsiIMod\sps[2] \!\left(\spsiVMod \sps[2] \,{-}\, \somegaVMod\sps[1]\right) \\
			- \somegaVMod \sps[2] - 1 &= - \spsiIMod\somegaVMod\sps[1],
		\end{align}
	\end{subequations}
	where $\sps[2] > 0$ and from which the choice for $\sps[1]$ in \eqref{eq:Control:V_set_p_matrix} stems. By substituting $\sps[1]$ and $\sps[3]$ into the last of the three minors and simplifying, it can be verified that
	\begin{equation}
		\det\begin{pmatrix}
			\spsiInt\sps[3] + \spsiIMod\sps[2] & \spsiVMod \sps[2] - \somegaVMod\sps[1] \\
			2\spsiIMod\sps[3] & \spsiVMod\sps[3] - \somegaVMod\sps[2] - 1
		\end{pmatrix} = 0.
	\end{equation}
	Thus, the choices of $\sps[1]$ and $\sps[3]$ with $\sps[2] > 0$ ensure that $\det(\mAsT\mPs + \mPs \mAs) = 0$. The semi-definiteness of the matrix in \eqref{eq:Control:V_Setting_LMI} is then given if the last $N-1$ trailing principal minors are positive, i.e.,
	\begin{align}
		\label{eq:Control:V_set_trailing_minor_1}
		2(\sdelta - \sY - \siOut) &> 0, \\
		\label{eq:Control:V_set_trailing_minor_2}
		\det\begin{pmatrix}
			2\spsiIMod\sps[3] & \spsiVMod\sps[3] - \somegaVMod\sps[2] - 1 \\
			\spsiVMod\sps[3] - \somegaVMod\sps[2] - 1 & 2(\sdelta - \sY - \siOut)
		\end{pmatrix} &> 0.
	\end{align}
	From \eqref{eq:Control:V_set_trailing_minor_1}, the condition in \eqref{eq:Control:V_set_Condition_1} is derived. Calculating the determinant in \eqref{eq:Control:V_set_trailing_minor_2} and substituting $\sps[3]$ in \eqref{eq:Control:V_set_p_matrix} yields
	\begin{align}
		\setUnderBrace{\skappa[1]}{4 \spsiIMod^2 (\sdelta - \sY - \siOut)} \sps[2] - \left[\setUnderBraceMatrix{\skappa[2]}{\left(\somegaVMod - \spsiIMod\spsiVMod\right)} \sps[2] + 1\right]^2 &> 0 \notag \\
		\label{eq:Control:V_set_trailing_m_2_ineq}
		\skappa[2]^2 \sps[2]^2 + (2\skappa[2]-\skappa[1]) \sps[2] + 1 &< 0 ,
	\end{align}
	which is a quadratic inequality in $\sps[2]$. Since $\sps[2] > 0$, there will exist a $\sps[2]$ satisfying the inequality \eqref{eq:Control:V_set_trailing_m_2_ineq} if it has at least one positive root. This can in turn be investigated by ensuring that the turning point and the discriminant of the quadratic inequality are positive:
	{\allowdisplaybreaks
	\begin{align}
		\label{eq:Control:V_set_kappa_ineq_1}
		-\frac{(2\skappa[2]-\skappa[1])}{2\skappa[2]^2} > 0 &\implies \skappa[1] > 2\skappa[2], \\
		(2\skappa[2]-\skappa[1])^2 - 4\skappa[2]^2 > 0& \notag\\
		\label{eq:Control:V_set_kappa_ineq_2}
		-4\skappa[2]\skappa[1] + \skappa[1]^2 > 0	&\implies \skappa[1] > 4\skappa[2] ,
	\end{align}}
	Note that $\skappa[1] > 0$ because of \eqref{eq:Control:V_set_trailing_minor_1} and since $\sps[2] > 0$. Thus, \eqref{eq:Control:V_set_kappa_ineq_1} and \eqref{eq:Control:V_set_kappa_ineq_2} hold if $\skappa[2] \le 0$. If $\skappa[2] > 0$, \eqref{eq:Control:V_set_kappa_ineq_1} holds automatically if \eqref{eq:Control:V_set_kappa_ineq_2} is satisfied. By substituting $\skappa[1]$ and $\skappa[2]$ into \eqref{eq:Control:V_set_kappa_ineq_2}, the condition in \eqref{eq:Control:V_set_Condition_2} is obtained. Therefore, if conditions \eqref{eq:Control:V_set_Condition_1} and \eqref{eq:Control:V_set_Condition_2} hold, there is guaranteed to be a $\sps[2] > 0$ which, along with $\sps[1]$ and $\sps[3]$ in \eqref{eq:Control:V_set_p_matrix}, ensures that $\cramped{\sHsdot(\vxs) \le \syse\suse - \siOut\syse^2}$.
	
	Finally, we verify that $\sHs(\vxse) > 0$ for $\vxse \ne \vec{0}$. Since $\sL>0$ and $\sCeq>0$ and thus $\mQs \posDef 0$, it is sufficient to ensure that $\mPs \posDef 0$. Again using Sylvester's criterion \cite[Theorem~7.2.5]{Horn2012}, we verify $\mPs \posDef 0$ by ensuring that the leading principal minors of $\mPs$ are positive, i.e.,
	\begin{align}
		\label{eq:Control:V_set_positive_P_1}
		\sps[1] &> 0, \\
		\label{eq:Control:V_set_positive_P_2}
		\det\begin{pmatrix}
			\sps[1] & \sps[2] \\
			\sps[2] & \sps[3]
		\end{pmatrix} = \sps[1] \sps[3] - \sps[2]^2 &> 0 .
	\end{align}
	Since $\spsiIMod > 0$ and $\somegaVMod > 0$ hold by definition \eqref{eq:Control:modified_gain_psi_omega}, we see from \eqref{eq:Control:V_set_p_matrix} that $\sps[1] > 0$ and $\sps[3] > 0$ if $\sps[2] > 0$. Thus \eqref{eq:Control:V_set_positive_P_1} is met.
	Substituting $\sps[1]$ and $\sps[3]$ from \eqref{eq:Control:V_set_p_matrix} into \eqref{eq:Control:V_set_positive_P_2}, we obtain
	\begin{equation}
		\begin{aligned}
			\sps[2]^2 + \frac{1}{\somegaVMod} \sps[2] - \sps[2]^2 > 0 \implies
			\frac{1}{\somegaVMod} \sps[2] > 0,
		\end{aligned}
	\end{equation}
	which automatically holds since $\somegaV > 0$ and $\cramped{\sL > 0}$.
\end{proof}

	
	

 
	\bibliographystyle{IEEEtran}
	
	\bibliography{ms}
	
	

	%
\end{document}